\pdfoutput=1
\documentclass[reqno,centertags,11pt,a4paper]{amsart}
\usepackage[utf8]{inputenc}
\usepackage[T1]{fontenc}
\usepackage[english]{babel}

\usepackage[svgnames]{xcolor}
\usepackage[colorlinks=true,linkcolor=Green,citecolor=Orange,urlcolor=Blue,breaklinks=true]{hyperref}

\usepackage{amsfonts,amsmath,amssymb,amsthm,mathtools}
\usepackage{mathtools}
\usepackage{enumerate,multirow}
\usepackage[normalem]{ulem}

\addtolength{\textwidth}{1in}
\addtolength{\oddsidemargin}{-0.5in}
\addtolength{\evensidemargin}{-0.5in}
\addtolength{\textheight}{0.8in}
\addtolength{\voffset}{-0.4in}



\newcommand{\N}{{\mathbb{N}}}

\newcommand{\R}{{\mathbb{R}}}

\newcommand{\C}{{\mathbb{C}}}


\newcommand{\ol}{\overline}

\newcommand{\wti}{\widetilde  }
\newcommand{\Oh}{O}

\newcommand{\loc}{\text{\rm{loc}}}

\newcommand{\beq}{\begin{equation}}
\newcommand{\eeq}{\end{equation}}
\newcommand{\bdm}{\begin{displaymath}}
\newcommand{\edm}{\end{displaymath}}
\newcommand{\ba}{\begin{align}}
\newcommand{\ea}{\end{align}}
\newcommand{\bpf}{\begin{proof}}
\newcommand{\epf}{\end{proof}}

\newcommand{\la}{\langle}
\newcommand{\ra}{\rangle}

\newcommand{\esssup}{\mathop{\mathrm{ess~sup}}}

\renewcommand{\d}{\mathrm{d}}
\newcommand{\veps}{\varepsilon}
\newcommand{\re}{\mathrm{Re}}

\newcommand{\tr}{\mathop{\mathrm{tr}} \nolimits}

\newcommand{\id}{\mathbf{1}}                %
\newcommand{\idG}{\mathbf{1}_{\calG}}       %

\allowdisplaybreaks

\newcommand{\calB}{\mathcal{B}}

\newcommand{\calF}{\mathcal{F}}
\newcommand{\calG}{\mathcal{G}}
\newcommand{\calH}{\mathcal{H}}

\newcommand{\calS}{{\mathcal{S}}}

\newtheorem{theorem}{Theorem}[section]
\newtheorem{proposition}[theorem]{Proposition}
\newtheorem{lemma}[theorem]{Lemma}
\newtheorem{corollary}[theorem]{Corollary}

\theoremstyle{definition}

\newtheorem{remark}[theorem]{Remark}
\newtheorem{remarks}[theorem]{Remarks}


\newcounter{theoremi}[theorem]

\newcommand{\itemthm}{\refstepcounter{theoremi} {\rm(\roman{theoremi})}{~}}

\numberwithin{theorem}{section}
\numberwithin{equation}{section}
\begin{document}
\title[Cwikel's bound reloaded]{Cwikel's bound reloaded}

\author[D.~Hundertmark]{Dirk Hundertmark}
\address{Department of Mathematics, Institute for Analysis, Karlsruhe Institute of Technology, 76128 Karlsruhe, Germany.}
\address{Department of Mathematics, Altgeld Hall, University of Illinois at Urbana-Champaign, 1409 W. Green Street, Urbana, IL 61801}
\email{dirk.hundertmark@kit.edu}

\author[P.~Kunstmann]{Peer Kunstmann}
\author[T.~Ried]{Tobias Ried}
\author[S.~Vugalter]{Semjon Vugalter}
\address{Department of Mathematics, Institute for Analysis, Karlsruhe Institute of Technology, 76128 Karlsruhe, Germany.}
\email{peer.kunstmann@kit.edu}
\email{tobias.ried@kit.edu}
\email{semjon.wugalter@kit.edu}

\thanks{\copyright 2018 by the authors. Faithful reproduction of this article, in its entirety, by any means is permitted for non-commercial purposes}
\keywords{}
\subjclass[2010]{Primary 35P15; Secondary 35J10, 81Q10}
\date{\today, version \jobname}


\begin{abstract} 

There are a couple of proofs by now for the famous Cwikel--Lieb--Rozenblum (CLR) bound, which is a semiclassical bound on the number of bound states for a Schr\"odinger operator, proven in the 1970s.
Of the rather distinct proofs by Cwikel, Lieb, and Rozenblum, the one by Lieb gives the best constant, the one by Rozenblum does not seem to yield any reasonable estimate for the constants, and Cwikel's proof is said to give a constant which is at least about 2 orders of magnitude off the truth. This situation did not change much during the last 40+ years. 

It turns out that this common belief, i.e, Cwikel's approach yields bad constants, is not set in stone: We give a drastic simplification of Cwikel's original approach which leads to an astonishingly good bound for the constant in the CLR inequality. Our proof is also quite flexible and leads to rather precise bounds for a large class of Schr\"odinger-type operators with generalized kinetic energies. Moreover, it  highlights a natural but overlooked connection of the CLR bound with bounds for maximal Fourier multipliers from harmonic analysis.
\end{abstract}

\maketitle
{\hypersetup{linkcolor=black}
\tableofcontents}

\section{Introduction}\label{introduction}
We want to find natural bounds, with the right semi-classical behavior, for the number of negative eigenvalues of Schr\"odinger  operators $P^2+V$ with $P=-i\nabla$, the momentum operator, or more general operators like the polyharmonic Schr\"odinger operators $|P|^{2\alpha}+V$, including the ultra-relativistic operator $|P|+V$. We will also consider operator-valued potentials $V$.  

For the one-particle Schr\"odinger operator $P^2+V$ with $P=-i\nabla$ the momentum operator and $V$ a real-valued potential, this type of bound goes back to Cwikel, Lieb, and Rozenblum \cite{Cwikel,Lieb,Ro1,Ro2}, with very different proofs. They prove 
\begin{align}\label{eq:original-CLR}
	N(P^2+V) \le L_{0,d}  \int_{\R^d} V_-(x)^{d/2}\, dx
\end{align}
for the number of negative eigenvalues of a Schr\"odinger operator.    
This bound is a semi-classical bound since a simple scaling argument shows that the classical phase-space volume of the region of negative energy is given by
\begin{align}
	N^\text{cl}(\eta^2+V)= \iint_{\eta^2+V(x)<0} 1\,\frac{d\eta\, dx}{(2\pi)^d} =  \frac{|B_1^d|}{(2\pi)^d}\int_{\R^d} V_-(x)^{d/2}\, dx \, .
\end{align} 
where $|B_1^d|$ is the volume of the unit ball in $\R^d$. 

The intuition is that the uncertainty principle forces a quantum particle to occupy roughly a classical phase-space 
volume $(2\pi)^d$. Thus  $N^\text{cl}(\eta^2+V)$, counting the 
volume where  the classical Hamiltonian energy 
$H(\eta,x) = \eta^2+V(x)$ is negative, should control $N(P^2+V)$.  
The CLR bound  \eqref{eq:original-CLR} shows that this is the case modulo the factor\footnote{We write $L_{0,d}$ etc, since 
there are a class of inequalities for the $\gamma^\text{th}$ moment 
of the negative eigenvalues with associated constants 
$L_{\gamma,d}$, see \cite{LT1,LT2} and the reviews \cite{LaWe-review,Hu-review}.} $C_{0,d}=L_{0,d}\frac{(2\pi)^d}{|B_1^d|}$ . 

The original bounds on $C_{0,d}$ in \cite{Cwikel} and \cite{Lieb} 
were explicitly dimension dependent with a considerable growth in the dimension $d$. The bound due to Lieb grows like $C_{0,d}= \sqrt{\pi d}(1+\Oh(d^{-1}))$. 
See \cite{Simon} or \cite[Chapter 3.4]{Roepstorff} for an excellent discussion of Lieb's method and Remark \ref{rem:explicit-numbers} below for some explicit numbers. 
However, it is expected that semi-classical arguments work better in high  dimensions. In particular, the constant $C_{0,d}$ should not grow in $d$. 
The first dimension independent bound  $C_{0,d}\le 81$ was derived  
by extending Cwikel's method to operator-valued potentials in 2002 in \cite{Hu}. This work extended an induction in the dimension argument by Laptev and Weidl\footnote{See also \cite{La} for some indication of the induction in dimension trick.} \cite{LaWe}, who were the first to derive Lieb--Thirring bounds with the sharp classical Lieb--Thirring constant in all dimensions in some cases. 
Although the upper bound from \cite{Hu} is dimension independent, it is certainly too large for small dimensions. 

For the last $40$-plus years it has been believed that any approach 
based on Cwikel's method cannot, in low dimensions, yield any bounds on $C_{0,d}$ which are comparable to the ones obtained by Lieb. 
This is wrong, as we will show by drastically simplifying and, at the same time, generalizing the important ideas of Cwikel. 
A typical result which can be easily achieved with our method is 
\begin{theorem}\label{thm:CLR}
	The number $N(P^2+V)$ of negative energy bound states of  $P^2+V$ obeys the semiclassical bound  
	\begin{align}
		N(P^2+V)\le C_{0,d} \frac{|B^d|}{(2\pi)^d} \int_{\R^d} V_-(x)^{d/2}\, \d x
	\end{align}
	for all $d\ge 3$, where $B^d$ is the unit ball in $\R^d$, $|B^d|$ its volume, and the constant $C_{0,d}$ given in Table \ref{tb:better} below. 
	
	Moreover, the same bounds with the same constants also hold in the operator-valued case, see Theorem \ref{thm:great-operator-valued}.
\end{theorem}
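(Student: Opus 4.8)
The plan is to reduce the eigenvalue count to a trace estimate via the Birman--Schwinger principle, and then to extract that trace estimate from a bound on a maximal Fourier multiplier. First I would recall that the number of negative eigenvalues of $P^2+V$ equals the number of eigenvalues of the Birman--Schwinger operator $K = |V_-|^{1/2}(P^2)^{-1}|V_-|^{1/2}$ that are $\ge 1$ (counting multiplicity), so that by Chebyshev-type reasoning it suffices to produce, for a suitable weak-type quantity, a bound of the form $N(P^2+V)\le \text{(const)}\,\|\,|V_-|^{1/2}(P^2+t)^{-1}|V_-|^{1/2}\,\|$ integrated against an appropriate measure in the auxiliary parameter $t$. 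The key structural observation — the one the abstract advertises as the overlooked connection with harmonic analysis — is that $K$ is, up to the multiplication operators $|V_-|^{1/2}$, a Fourier multiplier with symbol $|\eta|^{-2}$, and that splitting $|\eta|^{-2}$ dyadically (or more precisely writing it as a superposition $\int_0^\infty \ind_{\{|\eta|^2 < s\}}\, s^{-2}\,ds$ up to constants) turns $K$ into an average of the much simpler operators $|V_-|^{1/2}\,\ind_{\{|P|^2<s\}}\,|V_-|^{1/2}$, whose traces are computed exactly by the explicit kernel of the projection $\ind_{\{|P|^2<s\}}$.

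The heart of Cwikel's argument, in the simplified form I would use, is then a weak-type estimate: rather than estimating the Hilbert--Schmidt norm of $K$ (which would only give the $\gamma=0$ bound in too-high an $L^p$ scale and produce bad constants), one estimates the distribution function of the singular values of $K$ by interpolating between the trace-class bound for the truncated operators $|V_-|^{1/2}\,\ind_{\{|P|<r\}}\,|V_-|^{1/2}$ and the operator-norm bound for the complementary piece $|V_-|^{1/2}\,\ind_{\{|P|\ge r\}}(P^2)^{-1}\,|V_-|^{1/2}$. Optimizing the cutoff radius $r$ pointwise against the profile of the singular value being estimated is exactly where the improvement over the classical Cwikel constant comes from; concretely I would write $\mu_n(K)$'s contribution using the layer-cake representation and bound $\#\{n: \mu_n(K)\ge 1\}$ by carefully choosing, for the relevant scale, the split point so that the two contributions balance, then integrate in $x$ using that the kernel of $\ind_{\{|P|<r\}}$ at coincident points is $c_d\, r^d$. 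This yields $N(P^2+V)\le (\text{explicit }C_{0,d})\,|B^d|(2\pi)^{-d}\int V_-^{d/2}$, with $C_{0,d}$ obtained by evaluating the resulting elementary (one-dimensional) optimization; the numbers in Table~\ref{tb:better} are the outcome of that optimization carried out dimension by dimension.

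For the operator-valued statement one repeats the argument verbatim after replacing scalar absolute values by $|V_-(x)|^{1/2}$ acting on the auxiliary Hilbert space $\calG$ and replacing $V_-(x)^{d/2}$ by $\tr_{\calG} V_-(x)^{d/2}$: the Birman--Schwinger principle, the dyadic decomposition of $|\eta|^{-2}$, and the trace of the truncated projection all pass to the fibered setting without loss, since the multiplier is scalar and commutes with the $\calG$-action, so no new constant is incurred — this is precisely the content of the referenced Theorem~\ref{thm:great-operator-valued}. The main obstacle I anticipate is not any single estimate but the bookkeeping in the weak-type interpolation step: one must be scrupulous about which quasi-norm ($\Sigma_p$ / weak Schatten / trace ideal) the truncated pieces live in and about the precise constants in the corresponding triangle-type inequality, because the whole point is that sloppiness there is exactly what historically produced the "two orders of magnitude" loss. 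Getting the sharp form of that interpolation — essentially a clean weak-type $(1,1)$-to-$(\infty,\infty)$ bound for the family of truncations, with the best available constant — is the step that must be executed with care.
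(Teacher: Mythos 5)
Your proposal correctly identifies the overall scaffolding --- Birman--Schwinger, then splitting the Birman--Schwinger operator into a ``good'' piece (to be controlled in operator norm) and a ``bad'' piece (to be controlled by a trace-like quantity) --- and it correctly flags that the constant is all about avoiding wasteful steps. But the mechanism you describe is essentially Cwikel's original one: truncate in frequency at a radius $r$, estimate the truncated piece in trace norm, estimate the tail in operator norm, and run a weak-Schatten (quasi-norm) interpolation, optimizing the cutoff. The paper's whole point is to \emph{avoid} that route, because the quasi-triangle inequality in weak Schatten classes is precisely where the factor of $\sim 100$ historically went missing, and no amount of careful bookkeeping there recovers the constants in Table~\ref{tb:better}.

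The ideas you are missing are the following, and they are not refinements of your plan but a replacement of its middle step. First, the splitting is not a frequency cutoff: with $A_{f,g}=M_f\calF^{-1}M_g$, $f=U^{1/2}$, $g(\eta)=|\eta|^{-1}$, the paper sets $t=f(x)g(\eta)$ and splits the scalar $t$ as $m(t)+(t-m(t))$ for a single auxiliary function $m$, producing operators $B_{f,g,m}$ (kernel $\propto e^{ix\eta}m(f(x)g(\eta))$) and $H_{f,g,m}$ (kernel $\propto e^{ix\eta}(f(x)g(\eta)-m(f(x)g(\eta)))$). Second, the eigenvalue count is extracted \emph{without any weak-type machinery}: Chebyshev--Markov plus Ky Fan give directly
\begin{align*}
N(P^2-U) \le (\kappa-\mu)^{-2}\,\|H_{\kappa f,g,m}\|_{HS}^2
\end{align*}
whenever $\|B_{\kappa f,g,m}\|\le\mu<\kappa$; the Hilbert--Schmidt norm of $H$ is computed exactly by Fubini and scaling and already carries the factor $\int V_-^{d/2}$. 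Third --- and this is the genuinely new harmonic-analysis input --- to bound $\|B_{f,g,m}\|$ \emph{uniformly in $f$} one must control the maximal Fourier multiplier $\sup_{t>0}|\calF^{-1}[m(tg)\varphi]|$, and the paper does this with a one-line Cauchy--Schwarz argument after writing $m=m_1*m_2$ as a multiplicative convolution on $(\R_+,ds/s)$: scale invariance of $ds/s$ makes the resulting bound $\|m_1\|_{L^2(ds/s)}\|m_2\|_{L^2(ds/s)}$ independent of $t$, hence of $f$ and $g$. Your sketch never produces a bound on the bounded piece that is uniform in $V$, and without that uniformity the $\kappa$-optimization and hence the semiclassical scaling collapse. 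Finally, the numbers in Table~\ref{tb:better} come from optimizing the resulting variational problem $M_\gamma$ over $(m_1,m_2)$, together with the induction-in-dimension argument (Lemma~\ref{lem:submult}) in the operator-valued case --- not from a dimension-by-dimension optimization of a frequency cutoff as you suggest. So while your proposal is in the right genre, it reproduces the old argument rather than the new one and would not yield the claimed constants.
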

\begin{remarks}\label{rem:explicit-numbers}  
\itemthm Table \ref{tb:better} below compares the upper bounds on $C_{0,d}$,  obtained with our method, with the best known ones so far for scalar and operator-valued potentials,  
  \begin{table}[ht!]
  \begin{tabular}{| l | c | c | c|}
    \hline
    \multirow{2}{*}{$d$} & Our results  & \multicolumn{2}{c|}{Best known so far} \\
        & scalar and operator valued  & scalar & operator-valued         \\\hline
    3   & 7.55151 & 6.86924  &  \multirow{7}{*}{10.332}\\ 
    4   & 6.32791 & 6.03398  &  \\  
    5   & {5.95405} & 5.96677 &   \\   
    6   & {5.77058} & 6.07489 &  \\ 
    7   & {5.67647} & 6.24464 &  \\ 
    8   & {5.63198} & 6.43921 &  \\ 
    9   & {5.62080}  & 6.64378 &  \\ \hline
  \end{tabular}
  \vskip1ex
  \caption{}
  \label{tb:better}
  \end{table}
  All bounds on $C_{0,d}$ in the third column of the table were obtained already in the original work of Lieb more than 40 years ago\footnote{The numbers are taken from Roepstorff's book \cite[Table 3.1]{Roepstorff}}. 
 Our bounds on $C_{0,d}$ also hold in the operator-valued case, see Section \ref{sec:operator-valued} below. The last column is due to Frank, Lieb and 
 Seiringer \cite{FLS} and holds for all $d\ge 3$. 
 Our result also gives the bound $C_{0,d}\le 5.62080 $ for $d\ge 9$, see the discussion in Appendix \ref{sec:induction-in-dimension}.  For dimensions $d=3,\ldots,9$ our upper bounds are compared with the values of the lower bound \eqref{eq:CLR-lower-bound} achievable by our method in Table \ref{tb:lower-bound} below. \\[0.2em] 
\itemthm There have been several previous attempts to improve on Lieb's result, for example, due to Conlon \cite{Co}, Li and Yau \cite{LY}, Frank \cite{Fr}, and Weidl \cite{Weidl1,Weidl2}. All these very much different proofs shed a new light on the  Cwikel--Lieb--Rozenblum bound, but failed to give better bounds on the involved constants than already achieved by Lieb. \\[0.2em]
\itemthm From the point of view of physics, the other important case is $\alpha=1/2$, which corresponds to an ultra-relativistic Schr\"odinger operator $|P|+V$. In three dimensions we get the upper bound
  \begin{align}
  	N(|P|+V) \le 5.77058 \int_{\R^3} V_-(x)^{3}\, dx
  \end{align}
  which improves the result of Daubechies \cite{Daubechies}, who gets $N(|P|+V) \le 6.08 \int_{\R^3} V_-(x)^{3}\, dx$.
 \end{remarks}

For more general so-called polyharmonic Schr\"odinger-type operators our method yields the following for scalar potentials. 
A similar result, with the same constants, also holds for operator-valued potentials, see Theorem \ref{thm:poly-operator-valued}.  

\begin{theorem}\label{thm:poly}
	Let $P=-i\nabla$ be the momentum operator, $V=V_+-V_-$ be a real-valued potential with positive part $V_+\in L^1_{\loc}$ and negative part 
	$V_-\in L^{d/\alpha}(\R^d)$ with $0<\alpha<d/2$, and $P^{2\alpha}+V$ the Schr\"odinger--type operator defined via quadratic form methods on $L^2(\R^d)$. Then the number $N(P^{2\alpha}+V)$ of negative energy bound states of  $P^{2\alpha}+V$ is bounded by 
	\begin{align}\label{eq:poly}
		N(P^{2\alpha}+V)
			\le C_{d/\alpha}\, 
			\frac{|B_1^d|}{(2\pi)^d} \int_{\R^d} V_-(x)^{\frac{d}{2\alpha}}\, dx
	\end{align}
	with constant 
	\begin{align}\label{eq:C-gamma}
		C_{\gamma} = \frac{\gamma^{\gamma +1}}{4\left( \gamma-2 \right)^{\gamma-2}} 
						   M_\gamma \ ,
	\end{align}
  where 
  \begin{align}\label{eq:M-gamma}
  	M_\gamma= \inf\left\{  
  						\left( \|m_1\|_{L^2(\R_+,\frac{ds}{s})} 
  								\|m_1\|_{L^2(\R_+,\frac{ds}{s})} 
  						\right)^{\gamma-2} 
  							\int_0^\infty (1-t^{-1}m(t))^2 \, t^{1-\gamma}\, dt \, 
  				  \right\},
  \end{align}
  the infimum taken over all $m_1,m_2\in L^2(\R_+,\frac{ds}{s})$, and $m= m_1*m_2$ denotes the convolution of $m_1, m_2$ on $\R^+$ with measure $ds/s$. 
\end{theorem}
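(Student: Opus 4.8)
The plan is to run the Birman--Schwinger principle and then bound the resulting eigenvalue counting function by a streamlined version of Cwikel's splitting argument, in which the function $m=m_1*m_2$ plays the role of a smoothed replacement for the sharp Birman--Schwinger cutoff. First, since $V_+\ge 0$, comparison of quadratic forms gives $N(P^{2\alpha}+V)\le N(P^{2\alpha}-V_-)$, and the Birman--Schwinger principle identifies the latter with $\lim_{e\downarrow 0} n\big(1,\sqrt{V_-}\,(|P|^{2\alpha}+e)^{-1}\sqrt{V_-}\big)$, where $n(\lambda,\cdot)$ denotes the number of eigenvalues $\ge\lambda$. By monotonicity it suffices to bound $n(1,K)$ with $K=\sqrt{V_-}\,|P|^{-2\alpha}\sqrt{V_-}$ (a well-defined nonnegative form since $|\cdot|^{-2\alpha}\in L^1_{\loc}(\R^d)$ for $2\alpha<d$), and, writing $K=A^*A$ with $A=|P|^{-\alpha}\sqrt{V_-}$, this is the number of singular values of $A$ that are $\ge 1$.

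\emph{The Cwikel-type decomposition} is the heart. I would write $K$ (equivalently $A$) as a superposition over two families of scales, a momentum scale $\rho$ and an amplitude scale $\sigma$. The amplitude scale does double duty: it both fixes the momentum cutoff and localizes $\sqrt{V_-}$ to the region $\{V_-\approx\sigma\}$; this $x$-localization is what removes the need for an $L^\infty$ bound on $V_-$ and cures the divergence that a one-scale decomposition would produce. Using $m_1,m_2\in L^2(\R_+,\tfrac{ds}{s})$ to shape the two decompositions, and passing to the scale-invariant combination $t=\rho^{2\alpha}/\sigma$, the sum over pairs $(\rho,\sigma)$ with $t$ fixed produces precisely the multiplicative convolution $m=m_1*m_2$, so that $t^{-1}m(t)$ is the portion of the ideal cutoff $\ind_{\{|P|^{2\alpha}<V_-\}}$ (i.e.\ $\ind_{\{t<1\}}$) reproduced by the decomposition and $1-t^{-1}m(t)$ is the deficit. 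Splitting $K=K_{\mathrm{bulk}}+K_{\mathrm{tail}}$ accordingly and using the subadditivity $n(a+b,S+T)\le n(a,S)+n(b,T)$, summed over scales against a budget that sums to $1$ and is allocated by Cauchy--Schwarz, yields two contributions. From $K_{\mathrm{bulk}}$ one gets a Hilbert--Schmidt/trace bound: at amplitude $\sigma$ the relevant trace is $(2\pi)^{-d}\,\|\theta_\sigma\|_{L^1(\R^d)}\int_{\{V_-\approx\sigma\}}V_-\,dx$, and summing over $\sigma$ reconstructs $\int_{\R^d}V_-^{\gamma/2}$ (with $\gamma=d/\alpha$) while the momentum integrals and the substitution $t=\rho^{2\alpha}/\sigma$ assemble both the phase-space constant $\tfrac{|B_1^d|}{(2\pi)^d}$ and the factor $\int_0^\infty(1-t^{-1}m(t))^2\,t^{1-\gamma}\,dt$. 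From $K_{\mathrm{tail}}$ one gets a factor built from the scale-allocation budgets, which is exactly where $\|m_1\|_{L^2(ds/s)}$ and $\|m_2\|_{L^2(ds/s)}$ enter, raised to the power $\gamma-2$. A final elementary optimization over the one remaining free threshold produces the prefactor $\tfrac{\gamma^{\gamma+1}}{4(\gamma-2)^{\gamma-2}}$, and taking the infimum over admissible pairs $(m_1,m_2)$ gives $M_\gamma$, hence \eqref{eq:poly}.

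\emph{Main obstacle and loose ends.} The delicate point — the heart of Cwikel's method — is controlling the large, ``tail'' part of $K$ without an $L^\infty$ assumption on $V_-$; the streamlining is the realization that one should propagate the eigenvalue counting function (rather than a Schatten norm) through the decomposition, and that decomposing simultaneously in the momentum and the amplitude scales, bookkept by the multiplicative convolution $m_1*m_2$, makes every scale integral convergent. The routine remaining points are: justifying the quadratic-form construction of $P^{2\alpha}+V$ and the limit $e\downarrow 0$, checking the Fubini interchanges in the scale integrals, and verifying $M_\gamma<\infty$ for $\gamma>2$ by exhibiting an explicit admissible pair $(m_1,m_2)$ whose convolution satisfies $t^{-1}m(t)\to 1$ as $t\to 0$ (so that the weight $t^{1-\gamma}$, non-integrable at the origin for $\gamma>2$, is tamed) and $m(t)\to 0$ as $t\to\infty$.
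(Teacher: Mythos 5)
Your overall skeleton is right — Birman--Schwinger, the splitting of $A=\sqrt{V_-}\,\mathcal F^{-1}|\cdot|^{-\alpha}$ into a ``good'' part and a Hilbert--Schmidt remainder governed by a function $m=m_1\ast m_2$, Chebyshev on the singular values of the remainder, and a final optimization over a free scaling parameter. But the proposal never confronts the one step on which the whole constant hinges, and the mechanism you describe for the $\|m_1\|_{L^2(ds/s)}\|m_2\|_{L^2(ds/s)}$ factor would not actually produce it.

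The paper's splitting is $A_{f,g}=B_{f,g,m}+H_{f,g,m}$ with kernels $(2\pi)^{-d/2}e^{ix\eta}m(f(x)g(\eta))$ and $(2\pi)^{-d/2}e^{ix\eta}\big(f(x)g(\eta)-m(f(x)g(\eta))\big)$, where $f=\sqrt{V_-}$ and $g(\eta)=|\eta|^{-\alpha}$. After Ky Fan's inequality $s_j(A)\le\|B\|+s_j(H)$ and Chebyshev, the whole argument dies unless one has an upper bound $\mu\ge\|B_{\kappa f,g,m}\|$ that is \emph{independent} of $\kappa$ and of $f$, because one then replaces $f$ by $\kappa f$ and optimizes over $\kappa$. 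That independence is exactly equivalent to boundedness of the maximal Fourier multiplier $B^*_{g,m}\varphi(x)=\sup_{t>0}\big|\mathcal F^{-1}[m(tg)\varphi](x)\big|$ on $L^2$, and the paper's key observation is that when $m=m_1\ast m_2$ one gets, from \emph{one} Cauchy--Schwarz applied to the $ds/s$ integral in the convolution together with the scale invariance of $ds/s$, the clean bound $\|B^*_{g,m}\|\le\|m_1\|_{L^2(ds/s)}\|m_2\|_{L^2(ds/s)}$. This is Theorem 4.1 of the paper, proved in a few lines; no dyadic decomposition is needed.

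Your proposal replaces this by a two-parameter dyadic decomposition in momentum scale $\rho$ and amplitude scale $\sigma$, with the counting function pushed through ``a budget that sums to $1$ and is allocated by Cauchy--Schwarz.'' That is closer to Cwikel's original bookkeeping, and two things go wrong. First, the Cauchy--Schwarz is in the wrong place: it must be applied inside the convolution integral defining $B$, to a \emph{single} operator, in order to see the scale invariance and land on $\|m_1\|_{L^2}\|m_2\|_{L^2}$ as an honest operator-norm bound; spreading the threshold $1$ over dyadic blocks and optimizing the allocation does not reconstruct that bound — it reproduces Cwikel's original, and the paper explicitly notes that route loses roughly an order of magnitude in the constant. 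Second, the exponent $\gamma-2$ is misattributed: in the paper it does not come from the tail/scale estimate; it arises purely from optimizing $\kappa\mapsto\kappa^{\gamma}/(\kappa-\mu)^2$ at $\kappa=\gamma\mu/(\gamma-2)$, which simultaneously produces the prefactor $\gamma^{\gamma+1}/\big(4(\gamma-2)^{\gamma-2}\big)$ and the factor $\mu^{\gamma-2}$. So the claimed identity between your tail factor and the paper's $\mu^{\gamma-2}$ cannot emerge the way you describe. Finally, the $x$-localization to $\{V_-\approx\sigma\}$ is unnecessary baggage: the paper's Hilbert--Schmidt norm $\|H_{f,g,m}\|_{HS}^2$ is computed globally by Fubini and the homogeneity of $g(\eta)=|\eta|^{-\alpha}$, with no amplitude decomposition at all. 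To fix the proof, replace the multi-scale budget by the maximal-operator lemma and its elementary Cauchy--Schwarz proof; the rest then follows as you outline.
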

The minimization problem for $M_\gamma$ in \eqref{eq:M-gamma} seems to be new. As Theorem \ref{thm:poly}  shows, it has considerable implications for the spectral theory of Schr\"odinger operators. 
For the constant $M_\gamma$ above, we note the following estimate. \begin{proposition}\label{prop:simpel-bound-M-gamma}
	For all $\gamma>2$ 
	\begin{align}\label{eq:easy-estimates}
		\frac{2}{\gamma(\gamma-1)(\gamma-2)}
		\le M_\gamma 
		\le \frac{8 }{\gamma(\gamma-2)(\gamma+ 2) }
	\end{align}
\end{proposition}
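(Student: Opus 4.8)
The two inequalities in \eqref{eq:easy-estimates} are independent, and each reduces to an elementary one-dimensional integral once the right configuration has been identified. The plan is to prove the upper bound by exhibiting one explicit admissible pair, and the lower bound by a pointwise estimate on the convolution.

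\emph{Upper bound.} The functional inside the infimum in \eqref{eq:M-gamma} rewards having $t^{-1}m(t)$ close to $1$; in particular it heavily penalizes any deviation near $t=0$, since the weight $t^{1-\gamma}$ is non‑integrable there for $\gamma>2$. This suggests arranging $m(t)=t$ for small $t$, which leads me to take
\begin{align*}
  m_1(s) = \sqrt{2}\, s\, \mathbf{1}_{(0,1]}(s), \qquad m_2(s) = \sqrt{2}\, s^{-1}\, \mathbf{1}_{[1,\infty)}(s).
\end{align*}
A one-line computation gives $\norm{m_1}_{L^2(\R_+,\frac{ds}{s})} = \norm{m_2}_{L^2(\R_+,\frac{ds}{s})} = 1$. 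Evaluating the convolution over the range $s\ge\max(t,1)$ yields $m(t)=t\,(\max(t,1))^{-2}$, i.e.\ $m(t)=t$ on $(0,1]$ and $m(t)=t^{-1}$ on $[1,\infty)$, so that $1-t^{-1}m(t)$ vanishes on $(0,1]$ and equals $1-t^{-2}$ on $[1,\infty)$. Hence
\begin{align*}
  \int_0^\infty \bigl(1-t^{-1}m(t)\bigr)^2\, t^{1-\gamma}\, dt
  = \int_1^\infty \bigl(1-t^{-2}\bigr)^2 t^{1-\gamma}\, dt
  = \frac{1}{\gamma-2} - \frac{2}{\gamma} + \frac{1}{\gamma+2}
  = \frac{8}{\gamma(\gamma-2)(\gamma+2)},
\end{align*}
all three integrals converging because $\gamma>2$. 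Since the prefactor $\bigl(\norm{m_1}_{L^2(\R_+,\frac{ds}{s})}\norm{m_2}_{L^2(\R_+,\frac{ds}{s})}\bigr)^{\gamma-2}$ equals $1$, this is an upper bound for $M_\gamma$.

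\emph{Lower bound.} Fix any $m_1,m_2\in L^2(\R_+,\frac{ds}{s})$, set $a=\norm{m_1}_{L^2(\R_+,\frac{ds}{s})}\norm{m_2}_{L^2(\R_+,\frac{ds}{s})}$ (we may assume $0<a<\infty$) and $m=m_1*m_2$. Because the measure $\frac{ds}{s}$ is dilation invariant, Cauchy--Schwarz gives, for every $t>0$,
\begin{align*}
  |m(t)| = \left| \int_0^\infty m_1(t/s)\, m_2(s)\, \frac{ds}{s} \right|
  \le \norm{m_1(t/\dott)}_{L^2(\R_+,\frac{ds}{s})}\, \norm{m_2}_{L^2(\R_+,\frac{ds}{s})} = a,
\end{align*}
using $\int_0^\infty |m_1(t/s)|^2\frac{ds}{s} = \norm{m_1}_{L^2(\R_+,\frac{ds}{s})}^2$. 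Consequently, for $t\ge a$ we have $1-t^{-1}m(t)\ge 1-a/t\ge 0$, so that $(1-t^{-1}m(t))^2 \ge ((1-a t^{-1})_+)^2$ for all $t>0$, whence
\begin{align*}
  \int_0^\infty \bigl(1-t^{-1}m(t)\bigr)^2 t^{1-\gamma}\, dt
  \ge \int_a^\infty \bigl(1-a/t\bigr)^2 t^{1-\gamma}\, dt
  = a^{2-\gamma}\int_1^\infty \bigl(1-1/u\bigr)^2 u^{1-\gamma}\, du
  = a^{2-\gamma}\, \frac{2}{\gamma(\gamma-1)(\gamma-2)},
\end{align*}
by the substitution $t=au$ and the elementary identity $\frac{1}{\gamma-2}-\frac{2}{\gamma-1}+\frac{1}{\gamma} = \frac{2}{\gamma(\gamma-1)(\gamma-2)}$. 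Multiplying by $a^{\gamma-2}$ and taking the infimum over $(m_1,m_2)$ yields $M_\gamma \ge \frac{2}{\gamma(\gamma-1)(\gamma-2)}$.

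\emph{Expected main difficulty.} There is no serious obstacle here: the only genuinely creative step is guessing the pair $(m_1,m_2)$ above — equivalently, guessing that a convenient profile is $m(t)=\min(t,1/t)$, whose multiplicative-convolution factorization is transparent — and the only points that need a little care are the uniform bound $\norm{m}_\infty\le a$ and the convergence of all the elementary integrals, which holds precisely because $\gamma>2$. (One may note that the lower-bound computation is governed by the boundary profile $m(t)=\min(t,1)$, which is not itself of the form $m_1*m_2$ with finite norms, so the infimum in \eqref{eq:M-gamma} is presumably not attained; this is irrelevant for the proposition.)
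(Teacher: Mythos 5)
Your proof is correct and follows essentially the same route as the paper. For the upper bound you use (correctly normalized) power-type trial functions producing $m(t)=\min(t,t^{-1})$ with $\|m_1\|_{L^2(\R_+,\frac{ds}{s})}=\|m_2\|_{L^2(\R_+,\frac{ds}{s})}=1$, exactly the candidate of Remark~\ref{rem:simple-choice} — note in passing that your $\sqrt 2$ prefactors fix a small normalization slip in the paper's Remark~\ref{rem:simple-choice}, where $m_1(s)=s\id_{\{0<s\le1\}}$, $m_2(s)=s^{-1}\id_{\{s>1\}}$ has $\|m_1\|=\|m_2\|=1/\sqrt2$ (the choice $m_1(s)=2s\id_{\{s\le1\}}$, $m_2(s)=s^{-1}\id_{\{s\ge1\}}$ used in the proof of Theorem~\ref{thm:gen-kin-energy} is the intended one). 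For the lower bound, both you and the paper start from the Cauchy--Schwarz bound $\|m\|_\infty\le\|m_1\|\|m_2\|=:a$; the paper then relaxes the infimum to all $m$ with $\|m\|_\infty=l$ and observes that the pointwise-optimal profile is $\min(t,l)$, while you skip the explicit relaxation and prove the equivalent pointwise estimate $(1-t^{-1}m(t))^2\ge((1-a/t)_+)^2$ directly. These are the same argument dressed slightly differently, yielding the same $a$-independent value after the substitution $t=au$.
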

For the proof see Sections \ref{sec:lower-bound} and Remark \ref{rem:simple-choice} below.
 
\begin{remarks}
\itemthm
	Even this simple upper bound on $M_\gamma$ yields result better than available results so far in the literature: Using ideas from Rumin \cite{Ru1, Ru2}, Frank \cite{Fr} got the bound 
	\begin{align*}
		N(P^{2\alpha}+V)
			\le \left(
					\frac{d(d +2\alpha)}{(d-2\alpha)^2}
				\right)^{(d-2\alpha)/(2\alpha)} \frac{d}{d -2\alpha}
			\frac{|B_1^d|}{(2\pi)^d} \int_{\R^d} V_-(x)^{\frac{d}{2\alpha}}\, dx\, .
	\end{align*}
   Computing the ratio of the constants in Frank's bound and the one from \eqref{eq:poly}, using the upper bound in \eqref{eq:easy-estimates}, one sees that our bound from Theorem 
   \ref{thm:poly}  is better in the whole allowed range of $0<\alpha<d/2$. \\[0.2em]
\itemthm For the constant $C_\gamma$ in \eqref{eq:C-gamma},  the lower bound from \eqref{eq:easy-estimates} yields  
   \begin{align*}
 	  C_\gamma \ge \frac{\gamma^{\gamma}}{2(\gamma-1)\left( \gamma-2 \right)^{\gamma-1}} \eqqcolon C^\text{lower}_\gamma\, ,
   \end{align*}
  where $C^\text{lower}_\gamma$ is a, probably non-sharp, lower bound for the best possible constant achievable by our method\footnote{Which is not necessarily the best possible constant.}. 
   Thus the upper bound on $C_\gamma$ shows 
  \begin{align*}
  	\frac{C_\gamma}{C^\text{lower}_\gamma}\le  4\frac{\gamma-1}{\gamma+2}<4\, ,
  \end{align*}
  where $\gamma=d/\alpha> 2$. 
  So our upper bound is less than a factor of 4 off the lower bound.  \\
\itemthm The above lower bound also gives the lower bound 
\begin{align}\label{eq:CLR-lower-bound}
	C^\text{lower}_{0,d} = C^\text{lower}_d =  \frac{d^{d}}{2(d-1)\left( d-2 \right)^{d-1}}
\end{align}
for the constant in Theorem \ref{thm:CLR}, achievable by our method. For dimensions $3\le d \le 9$ this gives the lower bounds 
  \begin{table}[h!]
  \begin{tabular}{| l | c | c |}
    \hline $d$ & Our results  & lower bound    \\\hline
    3   & 7.55151 & 6.75000  \\ 
    4   & 6.32791 & 5.33333  \\  
    5   & {5.95405} & 4.82253  \\   
    6   & {5.77058} & 4.55625   \\ 
    7   & {5.67647} & 4.39229   \\ 
    8   & {5.63198} & 4.28088   \\ 
    9   & {5.62080}  & 4.20028   \\ \hline
  \end{tabular}
  \vskip1ex
  \caption{}
  \label{tb:lower-bound}
  \end{table}
  
 In addition, 
 \begin{align*}
 	C^\text{lower}_{0,d} = \frac{d^2}{2(d-1)(d-2)}\left( 1+\frac{2}{d-2} \right)^{d-2} \rightarrow \frac{e^2}{2}\ge 3.69452\, .
 \end{align*}
 This comparison shows that there is not too much room to improve 
 on the upper bounds we obtained, even if one finds the sharp value in the minimization problem for $M_\gamma$ in \eqref{eq:M-gamma}.

\itemthm It is known that if $\alpha\ge d/2$, the operator $P^{2\alpha}-U$ always has bound states for nontrivial $U\ge 0$, so a quantitative bound of the form $N(P^{2\alpha}-U)\lesssim \int_{\R^d}U(x)^{d/\alpha}$ cannot hold if $\alpha\ge d$. 
For $\alpha=1$ see \cite{Simon2} or \cite[Problem 2 in \S 45]{LL}. 
For more general cases, see \cite{NW96,LSW,Pankrashkin}, and \cite{HHRW} for a simple proof of how the existence/ non-existence of a CLR type bound for operators of the form $T(P)+V$ for a large class of functions $T:\R^d\to [0,\infty)$ is related to the behavior of the symbol $T$ close to its zero-set. 
\end{remarks}

As we mentioned before, our method can be generalized to  operator valued potentials, leading to the same results. 
To formulate this, we need some additional notation. 
An operator-valued potential $V$ is a map $V:\R^d\mapsto \calG$ with $V(x):\calG\to \calG$ a bounded self-adjoint operator on an auxiliary Hilbert space\footnote{In the following, all Hilbert-spaces are considered to be separable, unless said otherwise ;-).  Physically, this auxiliary Hilbert-space corresponds to other of degrees of freedom, for example spin.} $\calG$ for almost all $x\in\R^d$. We denote by $\calB(\calG)$ the bounded operators on $\calG$ and by $\calS_\alpha(\calG)$ the von Neumann--Schatten ideal of compact operators on $\calG$ with $\alpha$ summable singular values, see for example \cite{Simon-trace-ideals} for a background on  von Neumann--Schatten ideals.

\begin{theorem}[Operator-valued version of Theorem \ref{thm:poly}]\label{thm:poly-operator-valued}
 Let $\calG$ be a Hilbert space  and $V:\R^d\to \calB(\calG)$ an operator valued potential with positive part $V_+\in L^1_{\text{loc}}(\R^d,\calB(\calG))$ and negative part  $V_-\in L^{d/(2\alpha)}(\R^d, \calS_{d/(2\alpha)}(\calG))$. Then the number of negative energy bound states of $P^{2\alpha}\otimes\idG+V$ is bounded by 
 \begin{align}
 	N(P^{2\alpha}\otimes\idG+V)
			\le C_{d/\alpha}\, 
			\frac{|B_1^d|}{(2\pi)^d} \int_{\R^d} \tr_\calG[ V_-(x)^{\frac{d}{2\alpha}}]\, dx
 \end{align}
 with the same constant $C_\gamma$ as in Theorem \ref{thm:poly}. 
\end{theorem}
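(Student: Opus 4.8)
The idea is to carry out the proof of Theorem~\ref{thm:poly} with the auxiliary Hilbert space $\calG$ carried along, following the blueprint for operator-valued potentials from \cite{Hu}, and to observe that every inequality used in the scalar argument has an exact $\calS_p(\calG)$-analogue with the \emph{same} numerical constant. Concretely, one replaces $P^{2\alpha}$ on $L^2(\R^d)$ by $P^{2\alpha}\otimes\idG$ on $L^2(\R^d)\otimes\calG\cong L^2(\R^d,\calG)$, the scalar multiplication $V_-^{1/2}$ by multiplication with the $\calS_{d/\alpha}(\calG)$-valued function $x\mapsto V_-(x)^{1/2}$, and the integral $\int_{\R^d}V_-(x)^{d/(2\alpha)}\,dx$ by $\int_{\R^d}\tr_\calG[V_-(x)^{d/(2\alpha)}]\,dx$ throughout. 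First I would record the operator-valued Birman--Schwinger principle: since $V_+\in L^1_\loc(\R^d,\calB(\calG))$ only raises the form, $N(P^{2\alpha}\otimes\idG+V)\le N(P^{2\alpha}\otimes\idG-V_-)$, and because $0<\alpha<d/2$ the operator $(P^{2\alpha})^{-1}$ is well defined as a quadratic form; hence
\begin{align*}
	N(P^{2\alpha}\otimes\idG+V)\ \le\ \#\bigl\{\, n:\ \mu_n(K)\ge 1\,\bigr\},\qquad K= V_-^{1/2}\,(P^{2\alpha}\otimes\idG)^{-1}\,V_-^{1/2},
\end{align*}
where $K$ is a non-negative compact operator on $L^2(\R^d,\calG)$ under the stated hypothesis $V_-\in L^{d/(2\alpha)}(\R^d,\calS_{d/(2\alpha)}(\calG))$, and $\mu_n(K)$ denote its eigenvalues in decreasing order.

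The heart of Cwikel's simplified method is then unaffected by the presence of $\calG$. One rescales and decomposes the reciprocal symbol, writing $t\mapsto t^{-1}m(t)$ with $m=m_1*m_2$ a multiplicative convolution on $(\R_+,\tfrac{ds}{s})$, factorizes $K$ into an integral over scales $s$ of products of operators of the form $g(X)\,f_s(P)$ with $g=V_-^{1/2}$ and \emph{scalar} multipliers $f_s$, and counts the eigenvalues of $K$ above $1$. Here $P^{2\alpha}\otimes\idG$ acts in each fibre as the scalar symbol $|\xi|^{2\alpha}$, all the functions $m,m_1,m_2,f_s$ are scalar, and the optimisation on $\R_+$ that produces the prefactor $\tfrac{\gamma^{\gamma+1}}{4(\gamma-2)^{\gamma-2}}M_\gamma$ (the link to bounds for maximal multipliers on $\R_+$) is literally the scalar one. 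So this step contributes the same constant $C_\gamma$ from \eqref{eq:C-gamma}.

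The single estimate that has to be re-examined is the Kato--Seiler--Simon / Birman--Solomyak bound in operator-valued form: for $p\ge 2$ and scalar $f\in L^p(\R^d)$,
\begin{align*}
	\norm{g(X)f(P)}_{\calS_p(L^2(\R^d,\calG))}\ \le\ (2\pi)^{-d/p}\,\norm{f}_{L^p(\R^d)}\,\norm{g}_{L^p(\R^d,\calS_p(\calG))},
\end{align*}
together with whatever weak-$\calS_p$ variant the scalar counting argument uses. I would obtain the Hilbert--Schmidt case $p=2$ by direct computation: $g(X)f(P)$ has $\calB(\calG)$-valued kernel $k(x,y)=g(x)\,\check f(x-y)$, so that $\norm{g(X)f(P)}_{\calS_2}^2=\iint\tr_\calG\bigl[k(x,y)^*k(x,y)\bigr]\,dx\,dy=(2\pi)^{-d}\,\norm{g}_{L^2(\R^d,\calS_2(\calG))}^2\,\norm{f}_{L^2(\R^d)}^2$ by Plancherel; the general $p\ge 2$ case and its weak variant follow by exactly the same complex- (or real-) interpolation argument as in the scalar proof, since that argument does not see $\calG$. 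Feeding these bounds into the factorisation of $K$ reproduces, term by term, the estimates of Theorem~\ref{thm:poly} with $\int_{\R^d}V_-(x)^{d/(2\alpha)}\,dx$ replaced everywhere by $\int_{\R^d}\tr_\calG[V_-(x)^{d/(2\alpha)}]\,dx$.

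\textbf{Main obstacle.} The genuinely non-cosmetic points, rather than the constant, are the functional-analytic underpinnings: justifying the form-sum definition of $P^{2\alpha}\otimes\idG+V$ and the Birman--Schwinger principle when $\calG$ is infinite-dimensional and $V_+$ is only locally integrable; checking compactness and membership of $K$ in the relevant (weak) Schatten class; and verifying that Bochner integration of $\calS_p(\calG)$-valued symbols together with von Neumann--Schatten interpolation legitimises every manipulation used in the scalar proof. Once these are in place --- and they follow the well-trodden path of \cite{Hu} --- the optimisation over $m_1,m_2$ and the scaling parameter is identical to the scalar one, giving the asserted bound with the same constant $C_{d/\alpha}$.
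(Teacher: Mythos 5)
The broad strategy you outline -- Birman--Schwinger reduction, carry $\calG$ along, observe that the kinetic symbol is scalar so $g(\eta)=|\eta|^{-\alpha}$ is untouched, and replace $\int V_-^{d/(2\alpha)}$ by $\int\tr_\calG V_-^{d/(2\alpha)}$ -- matches the paper. But you then mischaracterize the technical core. The paper's proof of this theorem does \emph{not} go through Kato--Seiler--Simon, $\calS_p$-bounds for $p>2$, weak Schatten estimates, or interpolation; there is no such ingredient anywhere in the proof of Theorem~\ref{thm:poly} or \ref{thm:poly-operator-valued}. The scalar argument you are adapting uses exactly two estimates: the exact Hilbert--Schmidt norm of $\wti{H}_{f,g,m}$ (so only $p=2$, and you do reproduce this correctly via the kernel/Plancherel computation), and a bound on the \emph{operator norm} of $\wti{B}_{f,g,m}$, uniform in $f$. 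Eigenvalues above threshold are then counted by Ky Fan plus Chebyshev--Markov. No $\calS_p$ with $p>2$ ever appears; by inserting ``the general $p\ge 2$ case and its weak variant follow by ... interpolation ... as in the scalar proof'' you are recalling some other proof of Cwikel's theorem, not the one in this paper.

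The piece you do not address is precisely the one that requires new work here: the operator-valued maximal Fourier multiplier bound (Theorem~\ref{thm:great-operator-valued}). In the scalar case the maximal parameter is $t=f(x)$; in the operator-valued case $f(x)=\sqrt{V_-(x)}$ is an operator, so one cannot take $\sup_{t>0}$ over an operator-valued $t$. The paper handles this by swapping the roles of $f$ and $g$ in the definition of $\wti{B}_{f,g,m}$: it writes $\wti{B}_{f,g,m}\varphi(\eta)=\calF[m(t f)\varphi](\eta)\big|_{t=g(\eta)}$ so that the maximal parameter $t=g(\eta)$ is \emph{scalar}, while $m(tf(x))$ is defined via functional calculus. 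The Cauchy--Schwarz-over-$\tfrac{ds}{s}$ argument then goes through because $\int_0^\infty m_1(s f(x))^2\,\tfrac{ds}{s}=\|m_1\|_{L^2(\R_+,ds/s)}^2\,\id_{\{f(x)>0\}}$ by the spectral theorem. Saying that ``the optimisation on $\R_+$ ... is literally the scalar one'' and deferring the rest to \cite{Hu} elides exactly this step; note also that \cite{Hu} uses a genuinely different (and much lossier) dyadic Cwikel-type argument, so it does not supply the missing lemma either. Your proposal is not wrong in its conclusion, but as written it skips the one lemma that carries the weight, and replaces it with machinery the paper deliberately avoids.
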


For the physically most interesting case $\alpha=1$ this enables us to get  considerable  improvements on the constants in the Cwikel--Lieb--Rozenblum bound. 

\begin{theorem}[Operator-valued version of Theorem \ref{thm:CLR}]\label{thm:CLR-op-valued}
	Let $\calG$ be a Hilbert space  and $V:\R^d\to \calB(\calG)$ an operator valued potential with positive part $V_+\in L^1_{\text{loc}}(\R^d,\calB(\calG))$ and negative part  $V_-\in L^{d/2}(\R^d, \calS_{d/2}(\calG))$. Then the number of negative energy bound states of $P^{2}\otimes\idG+V$ is bounded by 
 \begin{align}
 	N(P^{2}\otimes\idG+V)
			\le C_{0,d}^{\mathrm{op}}\, 
			\frac{|B_1^d|}{(2\pi)^d} \int_{\R^d} \tr_\calG[ V_-(x)^{\frac{d}{2}}]\, dx
 \end{align}
 with 
 \begin{align}\label{eq:constants-op-valued}
 	C_{0,d}^{\mathrm{op}} = \min_{3\leq n \leq d} C_{0,n}^{\mathrm{op}} \leq \min_{3\leq n \leq d} C_n,
 \end{align} 
 where $C_n$ is given by \eqref{eq:C-gamma} for $\gamma=n$.
\end{theorem}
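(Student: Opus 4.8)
The plan is to deduce the theorem from Theorem~\ref{thm:poly-operator-valued} (specialised to $\alpha=1$) together with a dimensional lifting (``induction in the dimension'') argument of Laptev--Weidl type \cite{LaWe,Hu}, worked out in Appendix~\ref{sec:induction-in-dimension}. Setting $\alpha=1$ and replacing $d$ by an arbitrary integer $n\ge3$ in Theorem~\ref{thm:poly-operator-valued} already yields an operator-valued CLR inequality in dimension $n$ with constant $C_n$ (the constant $C_\gamma$ of \eqref{eq:C-gamma} at $\gamma=n$); the case $n=d$ is exactly the asserted bound with constant $C_d$. So it would suffice to prove that, for every $3\le n\le d$, the operator-valued bound with constant $C_n$ in dimension $n$ implies the operator-valued bound with the \emph{same} constant in dimension $d$; minimising over $n$ then gives \eqref{eq:constants-op-valued} (and the first equality there records that the lifting in fact transports the constant $C_{0,n}^{\mathrm{op}}$, not merely $C_n$).

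For the lifting step I would write $d=n+k$ and split $\R^d=\R^n\times\R^k$ with coordinates $x=(x',y)$, so that $P^2\otimes\idG=P_{\R^n}^2\otimes\id+P_{\R^k}^2\otimes\id$. Using the unitary identification $L^2(\R^d,\calG)\cong L^2(\R^n;\widetilde\calG)$ with the (separable) auxiliary Hilbert space $\widetilde\calG:=L^2(\R^k,\calG)$, the operator $P^2\otimes\idG+V$ becomes the one in dimension $n$ with operator-valued potential
\[
  \widetilde V(x'):=P_{\R^k}^2\otimes\id_\calG+V(x',\cdot),\qquad x'\in\R^n,
\]
a self-adjoint operator on $\widetilde\calG$ defined via quadratic forms. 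Applying the dimension-$n$ operator-valued CLR bound to $P_{\R^n}^2\otimes\id_{\widetilde\calG}+\widetilde V$ then gives
\[
  N(P^2\otimes\idG+V)\le C_n\,\frac{|B_1^n|}{(2\pi)^n}\int_{\R^n}\tr_{\widetilde\calG}\!\big[(\widetilde V(x'))_-^{\,n/2}\big]\,dx'.
\]

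Next, for a.e.\ fixed $x'$, $(\widetilde V(x'))_-=(P_{\R^k}^2\otimes\id_\calG+V(x',\cdot))_-$ is the negative part of a $k$-dimensional Schr\"odinger operator with operator-valued potential $V(x',\cdot)$, and $\tr_{\widetilde\calG}[(\widetilde V(x'))_-^{\,n/2}]$ is its Lieb--Thirring moment of order $\gamma=n/2$. Since $n\ge3$ one has $\gamma=n/2\ge\tfrac{3}{2}$, so the sharp Lieb--Thirring inequality with the classical constant $L^{\mathrm{cl}}_{\gamma,k}$ is available and, by Laptev--Weidl \cite{LaWe}, holds for operator-valued potentials in this range of $\gamma$. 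Because $n/2+k/2=d/2$, this bounds the fibre trace by $L^{\mathrm{cl}}_{n/2,k}\int_{\R^k}\tr_\calG[V_-(x',y)^{\,d/2}]\,dy$; inserting this and using Fubini, the prefactor becomes $C_n\,\frac{|B_1^n|}{(2\pi)^n}\,L^{\mathrm{cl}}_{n/2,k}$, and a short Gamma-function computation gives the identity $\frac{|B_1^n|}{(2\pi)^n}L^{\mathrm{cl}}_{n/2,k}=\frac{|B_1^d|}{(2\pi)^d}$ --- precisely the statement that composing the semiclassical CLR normalisation in dimension $n$ with the sharp semiclassical Lieb--Thirring bound in the remaining $k$ variables reproduces the semiclassical normalisation in dimension $d$. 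Hence $N(P^2\otimes\idG+V)\le C_n\,\frac{|B_1^d|}{(2\pi)^d}\int_{\R^d}\tr_\calG[V_-(x)^{d/2}]\,dx$ for every $3\le n\le d$, and taking the minimum proves the theorem.

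The main obstacle will be making the lifting step rigorous rather than finding it. The fibre ``potential'' $\widetilde V(x')$ has an unbounded positive part (it contains $P_{\R^k}^2\otimes\id_\calG$), so Theorem~\ref{thm:poly-operator-valued} as stated does not literally apply; one first has to extend it --- by form methods and a routine approximation, e.g.\ replacing $P_{\R^k}^2$ by the bounded operators $P_{\R^k}^2(1+\veps P_{\R^k}^2)^{-1}$ with $\veps\downarrow0$, using monotone convergence of the associated quadratic forms and lower semicontinuity of $N(\cdot)$ --- to operator-valued potentials whose positive part is only form-bounded (locally integrable). One also needs measurability of $x'\mapsto(\widetilde V(x'))_-^{\,n/2}\in\calS_1(\widetilde\calG)$ and its integrability over $\R^n$, which both follow once the right-hand side above is finite, i.e.\ from $V_-\in L^{d/2}(\R^d,\calS_{d/2}(\calG))$ via Fubini and the fibrewise Lieb--Thirring bound. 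The sharp operator-valued Lieb--Thirring inequality for $\gamma\ge1$ used here is exactly the Laptev--Weidl theorem, so no new work is needed for it, and the case $n=d$ (where $k=0$ and the Lieb--Thirring step is trivial) is just Theorem~\ref{thm:poly-operator-valued}, so the argument is not circular.
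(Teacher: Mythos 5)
Your proposal is correct and follows essentially the same route as the paper's Appendix A: split $\R^d=\R^n\times\R^{d-n}$, apply the dimension-$n$ operator-valued CLR bound from Theorem \ref{thm:poly-operator-valued} with the auxiliary Hilbert space $L^2(\R^{d-n},\calG)$, then control the fibre trace by the sharp operator-valued Lieb--Thirring inequality of Laptev--Weidl at moment $n/2\ge 3/2$, and finally check that $\tfrac{|B_1^n|}{(2\pi)^n}L^{\mathrm{cl}}_{n/2,d-n}=\tfrac{|B_1^d|}{(2\pi)^d}$. The paper packages the first two steps as a general sub-multiplicativity inequality $C^{\mathrm{op}}_{\theta,d}\le C^{\mathrm{op}}_{\theta,n}\,C^{\mathrm{op}}_{\theta+n/2,d-n}$ (Lemma \ref{lem:submult}), which it then evaluates at $\theta=0$; you go straight to the $\theta=0$ case, which is equivalent.

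One remark on the ``main obstacle'' you flag at the end: the unbounded positive part of $\widetilde V(x')$ is handled in the paper by a single quadratic-form inequality rather than the regularisation $P^2(1+\veps P^2)^{-1}$ you propose. One directly writes, as forms on $L^2(\R^d,\calG)$,
\begin{align*}
P^2\otimes\idG+V\;\ge\; P_<^2\otimes\id_{L^2(\R^{d-n},\calG)}-W(x_<),\qquad W(x_<):=\big(P_>^2\otimes\idG+V(x_<,\cdot)\big)_-,
\end{align*}
so that $N(P^2\otimes\idG+V)\le N\big(P_<^2\otimes\id-W\big)$, and now $-W$ is a genuine operator-valued potential with bounded, nonnegative $W(x_<)\in\calS_{n/2}(L^2(\R^{d-n},\calG))$ (this and the requisite measurability follow exactly from the fibre Lieb--Thirring bound and Fubini, as you note). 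Thus no extension of Theorem \ref{thm:poly-operator-valued} and no $\veps$-limit is needed; the theorem applies verbatim to $-W$.
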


\begin{remark}
	Table \ref{tb:better} lists	upper bounds on $C_{0,d}^{\mathrm{op}}$ for dimensions $d=3\ldots 9$, see also Appendix \ref{sec:numerics}. The constant for $d=9$ is also an upper bound on $C_{0,d}^{\mathrm{op}}$ in any dimension $d\geq 10$ by \eqref{eq:constants-op-valued}.
\end{remark}

The structure of the paper is as follows. In Section \ref{sec:CLR} we present the main ideas of our method in the case of a standard non-relativistic Schr\"odinger operator. The extension to more general kinetic energies is done in Section \ref{sec:gen-kin-energy}.

In Section \ref{sec:max-F-mult} we explain the surprising connection of semiclassical bounds and maximal Fourier 
multiplier estimates, which is probably the most important new part 
of our method.

Although we cannot explicitly find minimizers of the variational problem from Theorem \ref{thm:poly}, there is a natural lower bound, which is discussed in Section \ref{sec:lower-bound}. The numerical study to find reasonable  upper bounds for this variational problem is presented in Appendix \ref{sec:numerics}. 

The extension to the operator-valued setting is done in Sections \ref{sec:operator-valued} and \ref{sec:trace-ideals}. 
In particular, in Section \ref{sec:trace-ideals} we prove a 
fully operator-valued version of Cwikel's original weak trace ideal bound.

\section{The splitting trick}\label{sec:CLR}
The main idea in the proof of Theorem \ref{thm:CLR} is quite simple. 
Let $U\coloneqq V_-\ge 0$. As quadratic forms $P^2+V\ge P^2-U$.
This and the  Birman--Schwinger principle show that
\begin{align*}
	N(P^2+V)\le N(P^2-U) = n(U^{1/2}|P|^{-2}U^{1/2};1) 
\end{align*}
where $N(P^2+V)$ are the number of negative eigenvalues of $P^2-U$ and $n(A;\kappa)$ the number of singular values $(s_j(A))_{j\in\N }$ greater than $\kappa>0$ of a compact operator $A$.

We denote by $\calF$ the Fourier transform and by  $\calF^{-1}$ its inverse, by $M_h$ the operator of  multiplication with a function $h$,    
and  $A=A_{f,g} = M_f\calF^{-1}M_g$ for $f,g$ non-negative (measurable) functions on $\R^d$. When $f(x)=U(x)^{1/2}$ and $g(\eta)= |\eta|^{-1/2}$ then $A A^* = U^{1/2}|P|^{-2}U^{1/2}$, which has the same non-zero eigenvalues as $A^*A$. Thus 
\begin{align*}
	N(P^2-U) = n(A_{f,g};1) 
\end{align*}
In particular, the Chebyshev--Markov inequality gives 
\begin{align*}
	N(P^2-U)
	  & = n(A_{f,g};1) 
	  	\le \sum_{j} \frac{(s_j(A_{f,g})-\mu)_+^2}{(1-\mu)^2}
\end{align*}
for any $0<\mu<1$. 
Let's drop the dependence of $A$ on $f$ and $g$ for the moment.
We want to split $A= B+ H$, where $B$ is bounded,  and note that Ky Fan's inequality for the singular values  yields 
\begin{align*}
	s_j(A) &= s_j(B+ H) 
	  \le s_1(B) + s_j(H) \\
	&\le \|B\| + s_j(H)
\end{align*}
for all $j\in\N $. So if $\|B_{f,g}\|\le \mu<1$ we get  
\begin{align}\label{eq:first-upper-bound}
	N(P^2-U) \le (1-\mu)^{-2} \sum_{j\in\N} s_j(H_{f,g})^2
	  = (1-\mu)^{-2} \|H_{f,g}\|_{HS}^2,
\end{align}
where $\|H\|_{HS}$ denotes the Hilbert--Schmidt norm of an operator $H$. 

In order to make the above argument work, one has to be able to split $A_{f,g}= B_{f,g}+H_{f,g}$ in such a way that the Hilbert-Schmidt norm of $H_{f,g}$ is easy to calculate and one has a good bound on the operator norm of $B_{f,g}$. 
It will turn out, see the calculation below, that $\|H_{f,g}\|_{HS}^2= c \int_{R^d}f(x)^d\, dx$, so the right hand side of \eqref{eq:first-upper-bound} has exactly the right (semi-classical) scaling in $f$. 
But, in order to use this in \eqref{eq:first-upper-bound}, it also enforces that the upper bound $\mu$ on the operator norm of $B_{f,g}$ has to be \emph{independent} of $f$. Since for a given $\varphi\in L^2$ one can freely choose $f\ge 0$ as to make $|B_{f,g,m}\varphi|$ as big as possible, this leads naturally to the associated maximal operator $B^*_{g,m}\coloneqq\sup_{f\ge 0}|B_{f,g,m}\varphi|$.  
Although this is not explicitly written in the paper by Cwikel, getting a useful bound on such a type of maximal operator is exactly what he achieved in \cite{Cwikel}, using  a dyadic decomposition in the ranges of $f$ and $g$ and collecting suitable terms. 
We will do this in a much simpler and more effective way. This enables us to get a constant which is more than 10 times smaller than the original constant by Cwikel.

Writing out the Fourier transform, one sees that  $A_{f,g}$ has kernel 
\begin{align}
	A_{f,g}(x,\eta)= (2\pi)^{-d/2} e^{ix\cdot\eta} f(x)g(\eta),
\end{align} 
that is,
\begin{align}
	A_{f,g}\varphi(x)= f(x)\calF^{-1}(g\varphi)(x)
	= (2\pi)^{-d/2}  \int_{\R^d} e ^{ix\cdot\eta} f(x)g(\eta)\varphi(\eta)\, \d\eta ,
\end{align}
at least for nice enough $\varphi$. 

In order to write $A_{f,g}$ as a sum of a bounded and a Hilbert-Schmidt operator, set $t=f(x) g(\eta)$, split $t= m(t) +t-m(t)$ for some function $m:[0,\infty)\to \R $, and define $B=B_{f,g,m}$ and $H_{f,g,m}$ via their kernels
\begin{align}\label{eq:kernel-B}
	B_{f,g,m}(x,\eta) &= (2\pi)^{-d/2} e^{ix\cdot\eta}m(f(x)g(\eta)) \\
	H_{f,g,m}(x,\eta) &=  (2\pi)^{-d/2} e^{ix\cdot\eta} \left(f(x)g(\eta)-m(f(x)g(\eta))\right) \label{eq:kernel-H}
\end{align}
It is then clear that $A_{f,g}= B_{f,g,m}+H_{f,g,m}$. Our starting point is that the Hilbert--Schmidt norm of $H_{f,g,m}$ is easy to calculate and it is not too hard to get an explicit bound on the operator norm of $B_{f,g,m}$ on $L^2$ under a suitable assumption on $m$. 
\begin{theorem}\label{thm:great}  
	The Hilbert--Schmidt norm of $H_{f,g,m}$ is given by 
	\begin{align}
		\|H_{f,g,m}\|_{HS}^2= \int_{\R^d} G_{g,m}(f(x)) \, dx
	\end{align}
	where $G_{g,m}$ is given by 
	\begin{align}
		G_{g,m}(u)= \int_{\R^d} |ug(\eta)-m(ug(\eta))|^2 \frac{\d \eta}{(2\pi)^d} .
	\end{align}
    If, moreover, $m$ is given by a convolution,  that is,  
    \begin{align}
    	m(t) = m_1*m_2(t)=\int_0^\infty m_1(t/s)m_2(s)\frac{\d s}{s}
    \end{align}
    then for all measurable non-negative functions $f$ and $g$ the operator $B_{f,g,m}$ is bounded on $L^2(\R^d)$ with 
    \begin{align}\label{eq:B-bound}
    	\|B_{f,g,m}\varphi\|_{2} \le \left( \int_0^\infty |m_1(s)|^{2} \frac{d s}{s}\right)^{1/2}\left( \int_0^\infty |m_2(s)|^{2} \frac{d s}{s}\right)^{1/2}\|\varphi\|_2
    \end{align}
    for all $\varphi\in L^2(\R^d)$. 
\end{theorem}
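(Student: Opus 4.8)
The plan is to handle the two assertions separately: the Hilbert--Schmidt identity is an immediate computation, while the operator bound for $B_{f,g,m}$ is the real content and rests on exploiting the convolution structure of $m$ together with the dilation invariance of the measure $ds/s$ on $(\R_+,\times)$ and Plancherel's theorem.

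For the first part I would simply observe that $|H_{f,g,m}(x,\eta)|^{2}=(2\pi)^{-d}\,|f(x)g(\eta)-m(f(x)g(\eta))|^{2}$, since $e^{ix\cdot\eta}$ is unimodular. Integrating first in $\eta$ at fixed $x$ reproduces, by the very definition of $G_{g,m}$ (with $u=f(x)$), the quantity $G_{g,m}(f(x))$, and then integrating in $x$ gives $\|H_{f,g,m}\|_{HS}^{2}=\int_{\R^d}G_{g,m}(f(x))\,dx$; Tonelli applies with no extra hypotheses because the integrand is non-negative.

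For the operator bound, the key move is to separate the variables $x$ and $\eta$ that are entangled inside $m(f(x)g(\eta))$. Writing $m=m_1*m_2$ and, on the set $\{f(x)>0\}$, substituting $s=f(x)\sigma$ in $m(f(x)g(\eta))=\int_0^\infty m_1\!\big(\tfrac{f(x)g(\eta)}{s}\big)m_2(s)\,\tfrac{ds}{s}$ turns it into $\int_0^\infty m_1(g(\eta)/\sigma)\,m_2(f(x)\sigma)\,\tfrac{d\sigma}{\sigma}$, a genuine superposition in which the two factors depend on $\eta$ and on $x$ separately. Consequently
\[
  B_{f,g,m}\varphi(x)=\int_0^\infty m_2(f(x)\sigma)\,\calF^{-1}\!\big(m_1(g(\cdot)/\sigma)\,\varphi\big)(x)\,\tfrac{d\sigma}{\sigma}.
\]
Now I apply Cauchy--Schwarz in $\sigma$ with the $x$-dependent weight $|m_2(f(x)\sigma)|^{2}$ (restricting harmlessly to where it does not vanish). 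Because $\int_0^\infty|m_2(f(x)\sigma)|^{2}\tfrac{d\sigma}{\sigma}=\int_0^\infty|m_2(r)|^{2}\tfrac{dr}{r}=\|m_2\|_{L^2(\R_+,ds/s)}^{2}$ by dilation invariance, the weight cancels the $|m_2|^2$ inside the integral and leaves the pointwise estimate
\[
  |B_{f,g,m}\varphi(x)|^{2}\le\|m_2\|_{L^2(\R_+,ds/s)}^{2}\int_0^\infty\big|\calF^{-1}\!\big(m_1(g(\cdot)/\sigma)\,\varphi\big)(x)\big|^{2}\,\tfrac{d\sigma}{\sigma}.
\]
Integrating in $x$, swapping the $x$- and $\sigma$-integrals (Tonelli), using that $\calF^{-1}$ is unitary on $L^2(\R^d)$, then swapping once more and invoking $\int_0^\infty|m_1(g(\eta)/\sigma)|^{2}\tfrac{d\sigma}{\sigma}=\|m_1\|_{L^2(\R_+,ds/s)}^{2}$ (dilation invariance again, on $\{g(\eta)>0\}$) yields precisely $\|B_{f,g,m}\varphi\|_2^2\le\|m_1\|_{L^2(\R_+,ds/s)}^{2}\|m_2\|_{L^2(\R_+,ds/s)}^{2}\|\varphi\|_2^2$, i.e.\ \eqref{eq:B-bound}.

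The routine but necessary bookkeeping, which I expect to be the only real obstacle, is purely measure-theoretic: one should first run the argument (including the interchange needed to pull the $\sigma$-integral out of $B_{f,g,m}\varphi$) for a dense class of symbols, say $m_1,m_2$ bounded and supported in a compact subset of $(0,\infty)$, where all functions involved lie in $L^1\cap L^2$ and every interchange of integrals is justified by absolute convergence, and then pass to general $m_1,m_2\in L^2(\R_+,ds/s)$ by approximation. Here one also uses the elementary bound $|m_1*m_2(t)|\le\|m_1\|_{L^2(\R_+,ds/s)}\|m_2\|_{L^2(\R_+,ds/s)}$ for every $t>0$ (a one-line Cauchy--Schwarz), which already guarantees that $m$ is a well-defined bounded function so that $B_{f,g,m}$ makes sense. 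The null sets $\{f=0\}$ and $\{g=0\}$ are excluded from the dilation substitutions, which costs nothing: in the application $f=U^{1/2}$ and $g(\eta)=|\eta|^{-1/2}$ are positive almost everywhere, and in general one may assume $f,g>0$ a.e.\ without loss. The genuinely substantive point — that the phase $e^{ix\cdot\eta}$ enters only through Plancherel, while the nonlinearity of $m$ is tamed by writing it as a convolution and exploiting the invariance of $ds/s$ — is exactly the mechanism displayed above.
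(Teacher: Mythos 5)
Your proof is correct and follows essentially the same route as the paper: the Hilbert--Schmidt identity is the same direct computation, and the $L^2$ bound for $B_{f,g,m}$ rests, exactly as in the paper, on the convolution structure of $m$, Cauchy--Schwarz in the $ds/s$ variable, the dilation invariance of $ds/s$ to remove the $f(x)$-dependence (giving precisely the maximal-operator bound the paper formulates as Theorem~\ref{thm:max-op-bound}), and Plancherel plus a second dilation to remove the $g(\eta)$-dependence. The only cosmetic difference is that you substitute $s\mapsto f(x)\sigma$ before applying Cauchy--Schwarz, whereas the paper applies Cauchy--Schwarz first and then rescales inside the square root; the resulting pointwise estimate is identical.
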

\begin{remark}
	 We stress the fact, that the bound on the operator norm of $B_{f,g,m}$ is \emph{independent} of the choice of $f$, as it has to be, and also of $g$. This will turn out to be a natural consequence of 
	 the convolution structure of $m$, 
	 see Section \ref{sec:max-F-mult} below.
\end{remark}
\begin{proof}[Proof of the first half of Theorem \ref{thm:great}: Evaluating the Hilbert--Schmidt norm] 
 Since the operator $H_{f,g,m}$ has a kernel given by the right-hand side of \eqref{eq:kernel-H}, one computes its Hilbert-Schmidt norm as 
  \begin{align*}
  	\|H_{f,g,m}\|_{HS}^2 
  		&= \iint_{\R^d\times\R^d} |H_{f,g,m}(x,\eta)|^2  \d x d\eta 
  			= \iint_{\R^d\times\R^d} \left|f(x)g(\eta)-m(f(x)g(\eta))\right|^2  \, \frac{\d x d\eta}{(2\pi)^d} \\
  		&= \int_{\R^d} G_{g,m}(f(x))\, dx
  \end{align*}
  using the Fubini-Tonelli Theorem and the definition of $G_{g,m}$. 
  The proof of the second half of Theorem \ref{thm:great} will be given later in Section \ref{sec:max-F-mult} below. 
\end{proof}

In the rest of this section we will discuss how Theorem \ref{thm:great} and the bound \eqref{eq:first-upper-bound} easily lead to the Cwikel--Lieb--Rozenblum bound for a non-relativistic single-particle Schr\"odinger operator:  
In this case  
$g(\eta)=|\eta|^{-1}$. A simple scaling argument gives
\begin{align*}
	\|H_{f,g,m}\|_{HS}^2 
		&= \iint_{\R^\times\R^d} \left(\frac{f(x)}{|\eta|}- m\left(\frac{f(x)}{|\eta|}\right)\right)^2 \frac{dxd\eta}{(2\pi)^d} \\
		&= \int_{\R^d} f(x)^d\, dx \int_{\R^d} (|\eta|^{-1}-m(|\eta|^{-1}))^2\, \frac{d\eta}{(2\pi)^d} 
\end{align*}
Going to spherical coordinates shows 
\begin{align*}
	\int_{\R^d} (|\eta|^{-1}-m(|\eta|^{-1}))^2\, \frac{d\eta}{(2\pi)^d} 
		&= \frac{|S^{d-1}|}{(2\pi)^d} \int_0^\infty \left(r^{-1}-m(r^{-1})\right)^2 r^{d-1}\, dr \\
		&= \frac{ d|B_1^d|}{(2\pi)^d} \int_0^\infty (1-t^{-1}m(t))^2 t^{1-d}\, dt\, ,
\end{align*}
where $|S^{d-1}|$ is the surface area of the unit sphere in $\R^d$ and $|B_1^d|=|S^{d-1}|/d$ is the volume of the unit ball in $\R^d$. 

In particular, scaling $f$ by  $\kappa>0$, using 
$\kappa A_{f,g} =  A_{\kappa f,g}= B_{\kappa f,g,m}+ H_{\kappa f,g,m}$, the argument leading to \eqref{eq:first-upper-bound} now leads to
\begin{align}
	N(P^2-U)
	  & = n(A_{\kappa  f,g};\kappa)
	      \le (\kappa-\mu)^{-2}\sum_{j}\, \|H_{\kappa f,g,m}\|_{HS}^2 \label{eq:general-upper-bound-1} \\
	  &=  \frac{\kappa^d}{(\kappa-\mu)^{2}} \frac{ d|B_1^d|}{(2\pi)^d} \int_0^\infty (1-t^{-1}m(t))^2 t^{1-d}\, dt\, \int_{\R^d} U(x)^{d/2}\, dx \ ,  \label{eq:general-upper-bound-2}
\end{align}
as long as $\kappa> \mu\ge \|B_{\kappa f,g,m}\|$. 
It is important to note here that the last factor on the right hand side of the above bound has the correct dependence on the potential $U$. That is, the factor in front of it, which depends on the upper bound $\mu$ on the operator norm of $B_{f,g,m}$, \emph{has to be independent} of $f=\sqrt{U}$. 
Thanks to the second part of Theorem \ref{thm:great}, we can use 
$\mu = \|m_1(s)\|_{L^2(\R_+, \frac{ds}{s})} \|m_1(s)\|_{L^2(\R_+, \frac{ds}{s})}$ as an upper bound for $\|B_{f,g,m}\|$, which is independent of $f$ (and $g$), so the same bound holds for $\|B_{\kappa f,g,m}\|$ for any $\kappa>0$. Using this, we can now freely optimize in $\kappa>\mu$ in \eqref{eq:general-upper-bound-2}, to get 

\begin{align}\label{eq:super}
	N(P^2-U) \le C \frac{|B_1^d|}{(2\pi)^d} \int_{\R^d} U(x)^{d/2}\, dx
\end{align}
with the constant 
\begin{equation}\label{eq:super-constant}
  \begin{split}
	C&=C_{d,m}= \frac{d^{d+1}}{4(d-2)^{d-2}}\mu^{d-2} \int_0^\infty (1-t^{-1}m(t))^2 t^{1-d}\, dt  \\
		&= \frac{d^{d+1}}{4(d-2)^{d-2}}\left(\|m_1\|_{L^2(\R_+,\frac{ds}{s})}\|m_2\|_{L^2(\R_+,\frac{ds}{s})}\right)^{d-2}\int_0^\infty (1-t^{-1}m(t))^2 t^{1-d}\, dt
  \end{split}
\end{equation}
where $m(t)= m_1*m_2(t)=\int_0^\infty m_1(t/s)m_2(s)\, \frac{ds}{s}$.

This gives most of the main ideas of our proof of Theorem \ref{thm:CLR}. The last new idea, which is crucially important for the proof of \eqref{eq:B-bound}, is the connection between the bound on the norm of the operator $B_{f,g,m}$ and bounds for maximal Fourier multipliers on $L^2$. This is explained in Section \ref{sec:max-F-mult}.

\begin{remark}\label{rem:simple-choice}
 In order to get good bounds on $C$, we have to find good candidates for $m_1$ and $m_2$. A simple, but not optimal, choice is $m_1(s)= s\id_{\{0<s\le 1\}}$ and $m_2(s)= s^{-1}\id_{\{s>1\}}$, in which case  $\|m_1\|_{L^2(\R_+,\frac{ds}{s})}=\|m_2\|_{L^2(\R_+,\frac{ds}{s})}=1$ and  $m(t)= m_1*m_2(t)= \min(t,t^{-1})$, so 
 \begin{align*}
 	\int_0^\infty (1-t^{-1}m(t))^2 t^{1-d}\, dt
		= \int_1^\infty (1-t^{-2})^2 t^{1-d}\, dt
			= \frac{8}{(d-2)d(d+2)} \, .
 \end{align*}
 This gives 
 \begin{align*}
 	C_{0,d}= \frac{2\, d^d}{(d-2)^{d-1}(d+2)}
 \end{align*}
  as a possible constant in the CLR inequality and yields $C_{0,3}\le 10.8$, already an order of a magnitude smaller than Cwikel's bound. Moreover, combining this with `stripping-off-dimensions' ideas with the help of similar bounds for operator-valued potentials, one can get this bound also uniformly in  the dimension. 
To get the uniform bound claimed in Theorem \ref{thm:CLR} we have to choose better candidates for $m_1$ and $m_2$. We can achieve this in small dimensions and extend the bounds with the help of  bounds for operator-valued potentials, see Appendix \ref{sec:induction-in-dimension} and \ref{sec:numerics} and  Section \ref{sec:operator-valued}.
\end{remark}

Before we do this let us note a simple consequence of our approach for more general kinetic energies. 
 
\section{General kinetic energies}\label{sec:gen-kin-energy}
First we consider the case where  
$P^2$ is replaced by $P^{2\alpha}$ and give the 
\begin{proof}[Proof of Theorem \ref{thm:poly}]
  Replacing $g(\eta)=|\eta|^{-1}$ by $g(\eta)=|\eta|^{-\alpha}$ one simply reruns the argument from the previous section. Calculating, again by scaling, 
  \begin{align*}
	\|H_{f,g,m}\|_{HS}^2 
		&= \iint_{\R^\times\R^d} \left(\frac{f(x)}{|\eta|^\alpha}- m\left(\frac{f(x)}{|\eta|^\alpha}\right)\right)^2 \frac{dxd\eta}{(2\pi)^d} \\
		&= \int_{\R^d} f(x)^{d/\alpha}\, dx \int_{\R^d} (|\eta|^{-\alpha}-m(|\eta|^{-\alpha}))^2\, \frac{d\eta}{(2\pi)^d} 
\end{align*}
and 
  \begin{align*}
	\int_{\R^d} (|\eta|^{-\alpha}-m(|\eta|^{-\alpha}))^2\, \frac{d\eta}{(2\pi)^d} 
		&= \frac{|S^{d-1}|}{(2\pi)^d} \int_0^\infty \left(r^{-\alpha}-m(r^{-\alpha})\right)^2 r^{d-1}\, dr \\
		&= \frac{ d|B_1^d|}{\alpha (2\pi)^d} \int_0^\infty (1-t^{-1}m(t))^2 t^{1-\frac{d}{\alpha}}\, dt\, , 
  \end{align*}
  one sees that the argument leading to \eqref{eq:general-upper-bound-2} remains virtually unchanged, only  $d$ gets replaced by by $d/\alpha$. Thus  
  \begin{align*}
  	N(P^{2\alpha}+V)\le C \frac{ d|B_1^d|}{\alpha (2\pi)^d} \int_{\R^d} V_-(x)^{\frac{d}{2\alpha}}\, dx
  \end{align*}
  with constant 
  \begin{align*}
	C&=\frac{(\frac{d}{\alpha})^{\frac{d}{\alpha}+1}}{4(\frac{d}{\alpha}-2)^{\frac{d}{\alpha}-2}} 
		\left(\|m_1\|_{L^2(\R_+,\frac{ds}{s})}\|m_2\|_{L^2(\R_+,\frac{ds}{s})}\right)^{\frac{d}{\alpha}-2}\int_0^\infty (1-t^{-1}m(t))^2 t^{1-\frac{d}{\alpha}}\, dt
  \end{align*}
  For $m_1$ and $m_2$ we make the simple choice from Remark \ref{rem:simple-choice}. Then  $m(t)=m_1*m_2(t)= \min(t,t^{-1})$ and 
 $\mu=\|m_1\|_{L^2(\R_+,\frac{ds}{s})}\|m_2\|_{L^2(\R_+,\frac{ds}{s})}=1$. Hence,
 \begin{align*}
 	\int_0^\infty (1-t^{-1}m(t))^2 t^{1-\frac{d}{\alpha}}\, dt 
 		= \int_1^\infty (1-t^{-2})^2 t^{1-\frac{d}{\alpha}}\, dt 
 		= \frac{8}{(\frac{d}{\alpha}-2)\frac{d}{\alpha}(\frac{d}{\alpha}+2)}
 \end{align*}
 and collecting terms finishes the proof of Theorem \ref{thm:poly}. 
\end{proof}
\begin{remark}
	For the number of negative energy bound states of $P^{2\alpha}+U$ the so-far best bounds are due to Frank \cite{Fr,FrReview}. Using ideas from Rumin \cite{Ru1,Ru2}, he got the bound
	\begin{align*}
		N(P^{2\alpha}+V)
			\le \left(
					\frac{(\frac{d}{\alpha}(\frac{d}{\alpha+2})}{(\frac{d}{\alpha}-2)^2}
				\right)^{\frac{d}{2\alpha}-1} \frac{\frac{d}{\alpha}}{\frac{d}{\alpha}-2}
			\frac{|B_1^d|}{(2\pi)^d} \int_{\R^d} V_-(x)^{\frac{d}{2\alpha}}\, dx\, .
	\end{align*}
Even with the non-optimal choice of $m_1$ and $m_2$ above, a simple calculation shows that the bound from Theorem \ref{thm:poly}  is better as long as $2< \left( 1+2\alpha/d\right)^{d/(2\alpha)}$. Since $0<\delta\mapsto \left( 1+1/\delta\right)^{\delta} $ is strictly increasing, this is the case as soon as  $d>2\alpha$, that is, the whole range of allowed values of $\alpha$.
\end{remark}

For more general kinetic energies of the form $T(P)$ with $T$ a suitable non-negative measurable function obeying some mild growth condition at infinity, we have:

\begin{theorem}\label{thm:gen-kin-energy}
	The number of negative energy bound states of a Schr\"odinger--type operator $T(P)+V$, defined suitably with the help of quadratic form methods on $L^2$, obeys the  bound
	\begin{align}
		N(T(P)+V) \le \lambda^{-2} \int_{\R^d} G_{T}\big((\lambda+1)^2 V_-(x)\big)\, \d x
	\end{align}
	for any $\lambda>0$, with 
	\begin{align*}
		G_{T}(u) = \int_{\frac{u}{T}> 1} \left[\Big(\frac{u}{T(\eta)}\Big)^{1/2}- \Big(\frac{u}{T(\eta)}\Big)^{-1/2}\right]^2\, \frac{\d \eta}{(2\pi)^d}
	\end{align*}
	and $ V_-=\max(-V,0)$, the negative part of $V$. 
\end{theorem}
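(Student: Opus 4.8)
The plan is to rerun the split-and-scale argument of Section~\ref{sec:CLR} with the general symbol $T$ in the role of $|\eta|^{2}$. Put $U\coloneqq V_-\ge0$. The quadratic-form inequality $T(P)+V\ge T(P)-U$ and the Birman--Schwinger principle give
\[
	N(T(P)+V)\le N(T(P)-U)=n\big(U^{1/2}T(P)^{-1}U^{1/2};1\big).
\]
Setting $f(x)=U(x)^{1/2}$ and $g(\eta)=T(\eta)^{-1/2}$, the operator $A_{f,g}=M_f\calF^{-1}M_g$ has $A_{f,g}A_{f,g}^{*}=U^{1/2}T(P)^{-1}U^{1/2}$, so $N(T(P)-U)=n(A_{f,g};1)$. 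This is where the mild growth hypothesis on $T$ enters: it guarantees that $T(P)+V$ is well defined as a form sum, that $T(P)^{-1}$ makes sense, and that $U^{1/2}T(P)^{-1}U^{1/2}$ is compact with the stated counting-function identity --- all by the standard quadratic-form machinery, which I would only sketch.

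Next I would fix, once and for all, the concrete pair $m_1(s)=s\,\ind_{\{0<s\le1\}}$, $m_2(s)=s^{-1}\ind_{\{s>1\}}$ of Remark~\ref{rem:simple-choice}, so that $m=m_1*m_2$ satisfies $m(t)=\min(t,t^{-1})$ and, by the second half of Theorem~\ref{thm:great}, $\|B_{f,g,m}\|\le\|m_1\|_{L^2(\R_+,\frac{ds}{s})}\|m_2\|_{L^2(\R_+,\frac{ds}{s})}=1$ \emph{for every} measurable non-negative $f$ and $g$. Scaling $f$ by $\kappa\coloneqq\lambda+1>1$ and writing $\kappa A_{f,g}=A_{\kappa f,g}=B_{\kappa f,g,m}+H_{\kappa f,g,m}$, the Chebyshev--Markov inequality followed by Ky Fan's inequality --- exactly as in the derivation of \eqref{eq:first-upper-bound} and \eqref{eq:general-upper-bound-1} --- gives, since $\|B_{\kappa f,g,m}\|\le1<\kappa$,
\[
	N(T(P)-U)=n(A_{\kappa f,g};\kappa)\le(\kappa-1)^{-2}\,\|H_{\kappa f,g,m}\|_{HS}^{2}=\lambda^{-2}\,\|H_{\kappa f,g,m}\|_{HS}^{2}.
\]

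Finally I would evaluate the Hilbert--Schmidt norm. By the first half of Theorem~\ref{thm:great}, $\|H_{\kappa f,g,m}\|_{HS}^{2}=\int_{\R^d}G_{g,m}(\kappa f(x))\,\d x$ with $G_{g,m}(u)=\int_{\R^d}|ug(\eta)-m(ug(\eta))|^{2}\,\frac{\d\eta}{(2\pi)^d}$. For this $m$ one has $t-m(t)=(t-t^{-1})\,\ind_{\{t>1\}}$, and with $g(\eta)=T(\eta)^{-1/2}$, $u\ge0$, the substitution $ug(\eta)=(u^{2}/T(\eta))^{1/2}$ turns the condition $ug(\eta)>1$ into $u^{2}/T(\eta)>1$ and identifies $G_{g,m}(u)=G_{T}(u^{2})$. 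Hence $\|H_{\kappa f,g,m}\|_{HS}^{2}=\int_{\R^d}G_{T}(\kappa^{2}f(x)^{2})\,\d x=\int_{\R^d}G_{T}\big((\lambda+1)^{2}V_-(x)\big)\,\d x$, which combined with the previous display is precisely the claim.

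I expect the only real obstacle to be the first paragraph: spelling out, under whichever growth condition on $T$ is in force, that the form sum $T(P)+V$ and the Birman--Schwinger operator are legitimate and that $A_{f,g}=M_{U^{1/2}}\calF^{-1}M_{T^{-1/2}}$ is closable with the adjoint relation above; everything afterwards is the by-now routine split, scale, and change of variables, and reduces to the case $T(\eta)=|\eta|^{2\alpha}$ of Theorem~\ref{thm:poly} upon optimizing in $\lambda$.
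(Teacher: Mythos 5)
Your argument follows the paper's proof of this theorem step for step: split $A=B+H$ with $g=T^{-1/2}$ and $f=V_-^{1/2}$, use the $f$-- and $g$--uniform bound on $\|B_{f,g,m}\|$ from the second half of Theorem~\ref{thm:great} together with Chebyshev--Markov and Ky~Fan, and evaluate $\|H\|_{HS}^2$ via the first half of that theorem. Your identification $G_{g,m}(u)=G_T(u^2)$ for $m(t)=\min(t,t^{-1})$, and hence the choice $f=V_-^{1/2}$, is spelled out cleanly and correctly; the paper's displayed choice ``$f(x)=V_-(x)$'' is in fact a typo that your version repairs.

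There is one normalization slip, inherited from Remark~\ref{rem:simple-choice}: with $m_1(s)=s\,\ind_{\{0<s\le1\}}$ and $m_2(s)=s^{-1}\ind_{\{s>1\}}$ one actually has $\|m_1\|_{L^2(\R_+,\frac{ds}{s})}=\|m_2\|_{L^2(\R_+,\frac{ds}{s})}=1/\sqrt2$, so $\|m_1\|\,\|m_2\|=1/2$, and $m_1*m_2(t)=\tfrac12\min(t,t^{-1})$ rather than $\min(t,t^{-1})$. Taken literally this is fatal: $t-m_1*m_2(t)=t/2$ for $t\le1$, so $G_{g,m}(u)$ would pick up a term $\tfrac14\int_{\{ug\le1\}}(ug)^2\,\frac{\d\eta}{(2\pi)^d}=\tfrac{u^2}{4}\int_{\{T\ge u^2\}}T(\eta)^{-1}\,\frac{\d\eta}{(2\pi)^d}$, which is generally divergent (e.g.\ for $T(\eta)=|\eta|^{2\alpha}$ with $2\alpha<d$). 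The paper's proof of this theorem avoids this by taking $m_1(s)=2s\,\ind_{\{0<s\le1\}}$, which gives $\|m_1\|\,\|m_2\|=1$ and $m_1*m_2(t)=\min(t,t^{-1})$ on the nose; with that (or any rescaling achieving these two identities) your change of variables and the final bound go through verbatim.
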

\begin{proof} 
In this case we use $g(\eta)= T(\eta)^{-1/2}$, $f(x)= V_-(x)$, and make again the  the choice 
 $m_1(s)=2s\id_{\{0<s\le 1\}}$ and $m_2(s)= s^{-1}\id_{s\ge 1}$. So  
 $\mu=\|m_1\|_{L^2(\R_+,\frac{ds}{s})}\|m_2\|_{L^2(\R_+,\frac{ds}{s})}=1$ and with $\lambda=\kappa-\mu=\kappa-1$, the same argument leading to  \eqref{eq:general-upper-bound-1} now gives 
 \begin{align*}
 	N(T(P)+V)\le N(T(P)-V_-) \le \lambda^{-2} \ \|H_{(\lambda+1)f,g,m}\|_{HS}^2\, .
 \end{align*}
 for any $\lambda>0$. 
 Using Theorem \ref{thm:great} to calculate the Hilbert--Schmidt norm shows
 \begin{align*}
 	\|H_{(\lambda+1)f,g,m}\|_{HS}^2 = \int_{\R^d} G_{T}\big((\lambda+1)^2 V_-(x)\big)\, \d x \ ,
 \end{align*}
 since $m(t)=m_1*m_2(t)=\min(t,t^{-1})$. 
\end{proof}
\begin{remarks} 
\itemthm The bound given in Theorem \ref{thm:gen-kin-energy} improves the bound from \cite{HHRW}, which was based on Cwikel's original method. 
  The result proven in \cite{HHRW} shows that under some rather mild general conditions on the kinetic energy symbol $T$ the operator 
  $T(P)+V$ has weakly coupled bound states for any non-trivial potential $V$, no matter how small, if $\eta\mapsto \frac{1}{T(\eta)}$ is 
  not integrable over the set $\{T<u\}$ for all small $u>0$, which is 
  equivalent to $G_T(u)=\infty$ for all small $u>0$. This shows that 
  the bound given by Theorem \ref{thm:CLR} is quite natural.  \\[0.2em]
\itemthm In all cases where one can explicitly calculate or find explicit upper bounds for $G_T$, Theorem \ref{thm:gen-kin-energy} gives an upper bound of the form 
  \begin{align*}
  	N(T(P)+V)\lesssim N^\text{cl}(T +V)
  \end{align*}
  with $ N^\text{cl}(T+V) $ the classical phase-space volume of the set where $H(\eta,x)=T(\eta)+V(x)<0$. See \cite{Solomyak,BL,BLS} and the discussion in Section 6 of \cite{HHRW}, where it is also shown that logarithmic corrections to the classical phase-space guess appear in critical cases. 
  \end{remarks}
\section{The connection with maximal Fourier multipliers}
\label{sec:max-F-mult}
In this section we finish the proof of Theorem \ref{thm:great}. Recall that given functions $f,g:\R^d\to[0,\infty)$ and a bounded function $m:\R_+\to\R_+ $, the operator $B_{f,g,m}$ is given by 
\begin{align}
	B_{f,g,m}\varphi(x) = (2\pi)^{-d/2} \int_{\R^d} e^{ix\eta} m(f(x)g(\eta)) \varphi(\eta)\, d\eta \ ,
\end{align}
at least for nice enough $\varphi$, say from the set of Schwartz functions. We would like to conclude that $B_{f,g,m}$ is a bounded operator on $L^2(\R^d)$, which might suggest to look for results when a pseudo-differential operator with symbol $a(x,\eta)= m(f(x)g(\eta))$ is bounded. But such results need enough differentiability of the symbol $a$, which we do not have. More importantly, we need an estimate independent of $f$, which one cannot get without looking more closely into the structure of the problem.  
To see how the product structure $f(x) g(\eta)$ helps in the operator bound, we rewrite $B_{f,g,m}$ as    
\begin{equation}
  \begin{split}
	B_{f,g,m}\varphi(x) 
		&= (2\pi)^{-d/2} \int_{\R^d} e^{ix\eta} m(tg(\eta)) \varphi(\eta)\, d\eta \, \Big\vert_{t=f(x)} \\
		&= \calF^{-1}\left[ m(tg(\cdot))\varphi(\cdot) \right](x)\Big\vert_{t=f(x)}
  \end{split}
\end{equation}
This suggest to look at the Fourier multiplier $B_{t,g,m}$ defined by 
\begin{align}\label{eq:F-mult}
	B_{t,g,m}\varphi \coloneqq \calF^{-1}\left[ m(tg(\cdot))\varphi(\cdot) \right]
\end{align}
and the associated maximal operator\footnote{To be more careful, one should take the supremum over a dense subset of $\R_+$, to ensure measurablity, but for $\varphi$ in the Schwarz class this makes no difference} 
\begin{align}\label{eq:max-F-mult} 
	B^*_{g,m}\varphi(x)\coloneqq \sup_{t>0}|B_{t,g,m}\varphi (x)|\,.
\end{align}
It is clear that, as operators on $L^2$, one has $\|B_{f,g,m}\|\le \|B^*_{g,m}\|$ for the corresponding operator norms. 
On the other hand, choosing $f(x)$ in such a way as to make $|B_{f,g,m}\varphi(x)|$ arbitrarily close to $B^*_{g,m}\varphi(x)$, shows the `reverse bound' 
$\sup_{f\ge 0}\|B_{f,g,m}\varphi\|_2 \ge \|B^*_{g,m}\varphi\|_2$ for a given fixed Schwartz function $\varphi$.
 Thus $\sup_{f\ge 0}\|B_{f,g,m}\|= \|B^*_{g,m}\|$, so having a bound for $B_{f,g,m}$ which is \emph{uniform} in the choice of the function  $f$ is \emph{ equivalent} to having a bound for the maximal Fourier multiplier $B^*_{g,m}$.   
 
 This is our starting point for the proof of the second half of Theorem \ref{thm:great}. 
\begin{theorem}\label{thm:max-op-bound}
	Let $g$ be a measurable non-negative function on $\R^d$ and assume that $m:\R_+\to\R $ is given by a convolution, 
	\begin{align*}
		m(t)= m_1*m_2(t)= \int_0^\infty m_1(t/s)m_2(s)\frac{ds}{s}
	\end{align*}
	with $m_1,m_2\in L^2(\R_+,\frac{ds}{s})$. 
	Then the maximal Fourier multiplier $B^*_{g,m}$, defined in \eqref{eq:max-F-mult}, is bounded  on $L^2(\R^d)$ with bound 
	\begin{align*}
		\|B^*_{g,m}\| \le \|m_1\|_{L^2(\R_+,\frac{ds}{s})} \|m_2\|_{L^2(\R_+,\frac{ds}{s})}
	\end{align*}
	for its operator norm. 
\end{theorem}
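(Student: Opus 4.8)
The plan is to dominate the maximal Fourier multiplier $B^*_{g,m}$ pointwise by a \emph{dilation-invariant square function} built from $m_1$, whose $L^2$-norm is then computed \emph{exactly} by Plancherel together with two elementary substitutions. The convolution structure $m=m_1*m_2$ on the multiplicative group $(\R_+,\tfrac{\d s}{s})$ is precisely what makes the square function dilation invariant, so that passing to the supremum over $t$ costs nothing.

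\textbf{Step 1: unfold the convolution.} Fix a Schwartz function $\varphi$ and $t>0$. Inserting $m(tg(\eta))=\int_0^\infty m_1(tg(\eta)/s)\,m_2(s)\,\tfrac{\d s}{s}$ into the definition \eqref{eq:F-mult} and using Fubini (legitimate since $m=m_1*m_2$ is bounded by Cauchy--Schwarz and $\varphi$ is Schwartz) gives
\[
  B_{t,g,m}\varphi(x)=\int_0^\infty m_2(s)\,\big(B_{t/s,g,m_1}\varphi\big)(x)\,\frac{\d s}{s}\,.
\]
Applying Cauchy--Schwarz in $s$ with respect to $\tfrac{\d s}{s}$ yields the pointwise bound
\[
  |B_{t,g,m}\varphi(x)|^2\le \|m_2\|_{L^2(\R_+,\frac{\d s}{s})}^2\int_0^\infty |B_{t/s,g,m_1}\varphi(x)|^2\,\frac{\d s}{s}\,.
\]

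\textbf{Step 2: kill the supremum.} The integral on the right is independent of $t$: substituting $\tau=t/s$ and using the dilation invariance of $\tfrac{\d s}{s}$ turns it into $\int_0^\infty |B_{\tau,g,m_1}\varphi(x)|^2\,\tfrac{\d\tau}{\tau}$. Hence taking the supremum over $t>0$ is free, and
\[
  \big(B^*_{g,m}\varphi(x)\big)^2\le \|m_2\|_{L^2(\R_+,\frac{\d s}{s})}^2\int_0^\infty |B_{\tau,g,m_1}\varphi(x)|^2\,\frac{\d\tau}{\tau}\,.
\]
Integrating in $x$, interchanging the $x$- and $\tau$-integrals by Tonelli, and applying Plancherel to $B_{\tau,g,m_1}\varphi=\calF^{-1}[m_1(\tau g(\cdot))\varphi(\cdot)]$, I get $\|B_{\tau,g,m_1}\varphi\|_2^2=\int_{\R^d}|m_1(\tau g(\eta))|^2|\varphi(\eta)|^2\,\d\eta$. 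Interchanging once more and substituting $\sigma=\tau g(\eta)$ for a.e.\ fixed $\eta$ (here one uses $0<g(\eta)<\infty$, which holds in all our applications, where $g(\eta)=|\eta|^{-\alpha}$) evaluates the inner integral to $\|m_1\|_{L^2(\R_+,\frac{\d s}{s})}^2$, independently of $\eta$. This gives $\|B^*_{g,m}\varphi\|_2^2\le \|m_1\|_{L^2(\R_+,\frac{\d s}{s})}^2\|m_2\|_{L^2(\R_+,\frac{\d s}{s})}^2\|\varphi\|_2^2$ on the dense class of Schwartz functions, which is the assertion.

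\textbf{Main obstacle.} I do not expect a genuine difficulty: the whole content is the single observation that $B^*_{g,m}\varphi$ is controlled by the square function $\big(\int_0^\infty|B_{\tau,g,m_1}\varphi|^2\,\tfrac{\d\tau}{\tau}\big)^{1/2}$, whose $L^2$-norm is exact via Plancherel and two scalings. The only points needing a little care are the Fubini/Tonelli justifications (handled by boundedness of $m$ and restricting to Schwartz $\varphi$), the measurability of $x\mapsto B^*_{g,m}\varphi(x)$ (handled by taking the supremum over a countable dense set of $t$, as in the footnote to \eqref{eq:max-F-mult}), and the harmless standing assumption that $g$ is positive and finite almost everywhere.
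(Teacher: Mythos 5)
Your proof is correct and follows essentially the same route as the paper: unfold the convolution, Cauchy--Schwarz in the $\tfrac{\d s}{s}$ variable to dominate $B^*_{g,m}\varphi$ by a dilation-invariant square function in $m_1$, then Tonelli and Plancherel to evaluate its $L^2$-norm exactly. The only (cosmetic) difference is that you absorb $g(\eta)$ with a single substitution $\sigma=\tau g(\eta)$, whereas the paper scales out $g(\eta)$ and then uses inversion invariance of $\tfrac{\d s}{s}$; both handle the set where $g$ vanishes the same way, with an inequality rather than equality (the paper relegates this to a footnote, you assume $g>0$ a.e.).
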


\begin{remark}
	There are several different but related proofs of boundedness of maximal Fourier multipliers available in the literature, see, e.g., \cite{Carbery,DappaTrebels,Rubio}. These works concentrate on getting $L^p$ bounds and do not care much about the involved constants. For us the $L^2$ boundedness is important, with good bounds on the operator norm.  
\end{remark}

\begin{proof}
  The proof is easy and uses nothing more than the Cauchy-Schwarz inequality: since $m$ is given by a convolution type integral, we have 
  \begin{align*}
  	B_{t,g,m}\varphi(x) 
  		& =	\int_0^\infty \calF^{-1}\left[ m_1(tg/s)\varphi \right](x) \, m_2(s)\, \frac{ds}{s}.   
  \end{align*}
Interchanging the integrals is certainly fine for nice enough $\varphi$ from a dense subset of $L^2$, say the Schwartz functions. Applying the triangle and then the Cauchy-Schwarz inequality for the $ds/s$ integration yields  
 \begin{align}
   |B_{t,g,m}\varphi(x)| 
  		& \le 	\int_0^\infty \left|\calF^{-1}\left[ m_1(tg/s)\varphi \right](x)\right| \, |m_2(s)|\, \frac{ds}{s}\nonumber\\ 
  		&\le \left( \int_0^\infty \left|\calF^{-1}\left[ m_1(tg/s)\varphi \right](x)\right|^2 \frac{ds}{s} \right)^{1/2} 
  			\|m_2\|_{L^2(\R_+,\frac{ds}{s})}\label{eq:punch}
 \end{align}
 The punchline is that the measure $ds/s$ is invariant under scaling, so we can scale $s$ by a fixed factor $t$ to see that 
 \begin{align*}
 	\int_0^\infty \left|\calF^{-1}\left[ m_1(tg/s)\varphi \right](x)\right|^2 \frac{ds}{s} 
 	= \int_0^\infty \left|\calF^{-1}\left[ m_1(g/s)\varphi \right](x)\right|^2 \frac{ds}{s}\, , 
 \end{align*}
 that is, the right hand side of \eqref{eq:punch} is independent of $t>0$. 
 So 
 \begin{align*}
 	B^*_{g,m}\varphi(x) = \sup_{t>0}|B_{t,g,m}\varphi(x)|
 		\le \left( \int_0^\infty \left|\calF^{-1}\left[ m_1(g/s)\varphi \right](x)\right|^2 \frac{ds}{s} \right)^{1/2} 
  			\|m_2\|_{L^2(\R_+,\frac{ds}{s})}.
 \end{align*}
 In particular, 
 \begin{align*}
 	\|B^*_{g,m}\varphi\|_2^2 
 		&\le  \|m_2\|_{L^2(\R_+,\frac{ds}{s})}^2 
 				\int_{\R^d} \int_0^\infty \left|\calF^{-1}\left[ m_1(g/s)\varphi \right](x)\right|^2 \, \frac{ds}{s}\, dx\, . 
 \end{align*}
Using Fubini--Tonelli to interchange the integrals and  Plancherel's theorem for the $L^2$ norm of the Fourier transform, one sees that
\begin{align*}
	\int_{\R^d} \int_0^\infty & \left|\calF^{-1}\left[ m_1(g/s)\varphi \right](x)\right|^2 \, \frac{ds}{s}\, dx\, 
		= \int_0^\infty \int_{\R^d}  |m_1(g(\eta)/s)|^2|\varphi(\eta)|^2 \, d\eta\,  \frac{ds}{s}\, \\
	&=  \int_0^\infty \int_{\R^d}  |m_1(s^{-1})|^2|\varphi(\eta)|^2 \, d\eta\,  \frac{ds}{s}\,
		= \| m_1\|_{L^2(\R_+,\frac{ds}{s})}^2 
			\|\varphi(\eta)\|_2^2   
\end{align*}
where we also used the same scaling argument\footnote{Strictly speaking one gets an inequality if $g(\eta)=0$ since  $m_1(0)=0$.} as before to scale out $g(\eta)$ and the invariance of $ds/s$ under inversion $s\mapsto s^{-1}$. Thus 
\begin{align*}
	\|B^*_{g,m}\varphi\|_2\le \| m_1\|_{L^2(\R_+,\frac{ds}{s})} \| m_2\|_{L^2(\R_+,\frac{ds}{s})} \|\varphi\|_2
\end{align*}
and we are done. 
\end{proof}
The next result, which also proves the second half of Theorem \ref{thm:great}, is an easy consequence of Theorem \ref{thm:max-op-bound}.  
\begin{corollary}
	Let $f,g$ be measurable non-negative functions on $\R^d$ and assume that $m:\R_+\to\R $ is given by a  convolution,
	\begin{align*}
		m(t)= m_1*m_2(t)= \int_0^\infty m_1(t/s)m_2(s)\frac{ds}{s}
	\end{align*}
	with $m_1,m_2\in L^2(\R_+,\frac{ds}{s})$. 
	Then the operator $B_{f,g,m}$, defined by \eqref{eq:kernel-B}, i.e.,  given by the kernel 
	\begin{align*}
		B_{f,g,m}(x,\eta) &= (2\pi)^{-d/2} e^{ix\cdot\eta}m(f(x)g(\eta))
	\end{align*}
	is bounded  on $L^2(\R^d)$ with bound 
	\begin{align*}
		\sup_{g\ge0}\big\|\sup_{f\ge0}|B_{f,g,m}\varphi|\big\|_2 \le \|m_1\|_{L^2(\R_+,\frac{ds}{s})} \|m_2\|_{L^2(\R_+,\frac{ds}{s})}\|\varphi\|_2 \, .
	\end{align*} 
\end{corollary}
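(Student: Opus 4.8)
The plan is to deduce this directly from Theorem~\ref{thm:max-op-bound} via the pointwise identity relating $B_{f,g,m}$ to the family of Fourier multipliers $B_{t,g,m}$. Recall from the rewriting of $B_{f,g,m}$ given just before Theorem~\ref{thm:max-op-bound} that, for $\varphi$ in the Schwartz class and $f,g$ non-negative measurable,
\[
 B_{f,g,m}\varphi(x) = \calF^{-1}\!\left[ m(tg(\cdot))\varphi(\cdot)\right](x)\Big|_{t=f(x)} = B_{t,g,m}\varphi(x)\Big|_{t=f(x)}.
\]
Hence, pointwise in $x$, $|B_{f,g,m}\varphi(x)| = |B_{f(x),g,m}\varphi(x)| \le \sup_{t>0}|B_{t,g,m}\varphi(x)| = B^*_{g,m}\varphi(x)$ for every choice of $f\ge 0$.

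First I would observe that the supremum over $f\ge 0$ is, pointwise in $x$, exactly $B^*_{g,m}\varphi(x)$: the inequality ``$\le$'' is the display above, while for the reverse one notes that for any fixed $x_0$ and any $t_0>0$ one may pick $f\ge 0$ with $f(x_0)=t_0$ (a constant function does the job), so the pointwise least upper bound of the family $\{x\mapsto |B_{f,g,m}\varphi(x)|\}_{f\ge 0}$ equals $x\mapsto \sup_{t>0}|B_{t,g,m}\varphi(x)|=B^*_{g,m}\varphi(x)$. As in the footnote to \eqref{eq:max-F-mult}, one restricts $t$ (resp.\ $f$) to a countable dense set to keep everything measurable; for Schwartz $\varphi$ this changes nothing.

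Next I would invoke Theorem~\ref{thm:max-op-bound}: since $m=m_1*m_2$ with $m_1,m_2\in L^2(\R_+,\frac{ds}{s})$, the maximal Fourier multiplier satisfies $\|B^*_{g,m}\varphi\|_2\le \|m_1\|_{L^2(\R_+,\frac{ds}{s})}\|m_2\|_{L^2(\R_+,\frac{ds}{s})}\|\varphi\|_2$, and crucially this bound is independent of $g$. Combining with the previous step,
\[
 \Big\|\sup_{f\ge 0}|B_{f,g,m}\varphi|\Big\|_2 = \|B^*_{g,m}\varphi\|_2 \le \|m_1\|_{L^2(\R_+,\frac{ds}{s})}\|m_2\|_{L^2(\R_+,\frac{ds}{s})}\|\varphi\|_2 ,
\]
and since the right-hand side does not depend on $g$ I can take the supremum over $g\ge 0$ to get the claimed estimate. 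In particular $\|B_{f,g,m}\varphi\|_2\le \|B^*_{g,m}\varphi\|_2$ also shows that each individual $B_{f,g,m}$ is bounded on $L^2$ with the same bound, which is exactly the estimate \eqref{eq:B-bound} used in Theorem~\ref{thm:great}. Finally, a routine density argument extends the inequality from Schwartz $\varphi$ to all of $L^2(\R^d)$.

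There is no genuinely hard step here: the substance is entirely contained in Theorem~\ref{thm:max-op-bound}. The only points that need a little care are the identification of $\sup_{f\ge 0}|B_{f,g,m}\varphi|$ with $B^*_{g,m}\varphi$ — a definition- and measurability-chasing matter, dispatched by restricting to a countable dense parameter set — and the passage from Schwartz functions to general $L^2$ functions by density, using the sublinear pointwise bound already obtained in \eqref{eq:punch}.
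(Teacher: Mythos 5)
Your proof is correct and follows essentially the same route as the paper: pointwise bound $|B_{f,g,m}\varphi|\le B^*_{g,m}\varphi$, apply Theorem~\ref{thm:max-op-bound}, then take the supremum over $g$ since the bound is $g$-independent. The extra observation that the sup over $f$ is actually \emph{equal} to $B^*_{g,m}\varphi$ pointwise, and the explicit measurability and density remarks, are harmless embellishments the paper's two-sentence proof omits.
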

\begin{proof} By definition of the maximal Fourier multiplier we have  $|B_{f,g,m}\varphi(x)|\le B^*_{g,m}\varphi(x)$ and thus also 
$\sup_{f\ge 0}|B_{f,g,m}\varphi(x)|\le B^*_{g,m}\varphi(x)$ for almost every $x\in\R^d$. 

Since the $L^2$--bound from Theorem \ref{thm:max-op-bound} is independent of $g\ge 0$, we can also take the supremum in $g\ge 0$ , after taking the $L^2$--norm. 
\end{proof}

\section{A lower bound for the variational problem}
\label{sec:lower-bound}
Recall that the variational problem, which comes up in a natural way in our bound on the number of bound states is 
\begin{align}
	P_\gamma= \inf \left\{ (\|m_1\|_{L^2(\R_+,\frac{ds}{s})}\|m_2\|_{L^2(\R_+,\frac{ds}{s})})^{\gamma-2} \int_0^\infty (1-t^{-1} m_1*m_2(t))^2 t^{1-\gamma}\,dt \right\},
\end{align}
where the convolution $m_1*m_2$ is on $\R_+$ with its scaling invariant measure $\frac{ds}{s}$, and the infimum is taken over all functions $m_1,m_2:\R_+\to\R $ .

\begin{theorem}
We have the lower bound 
  \begin{align*}
  	P_\gamma\ge \frac{2}{(\gamma-2)(\gamma-1)\gamma}
  \end{align*}
  for all $\gamma>2$. 
\end{theorem}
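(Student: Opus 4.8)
The plan is to collapse the two-function variational problem to a one-variable pointwise estimate by exploiting that the weight $\big(\|m_1\|_{L^2(\R_+,\frac{ds}{s})}\|m_2\|_{L^2(\R_+,\frac{ds}{s})}\big)^{\gamma-2}$ controls $\|m_1*m_2\|_\infty$. Concretely, for $m=m_1*m_2$ and any $t>0$, writing $m(t)=\int_0^\infty m_1(t/s)m_2(s)\,\frac{ds}{s}$ and applying Cauchy--Schwarz in the $\frac{ds}{s}$ integration, the factor $\int_0^\infty|m_1(t/s)|^2\frac{ds}{s}$ equals $\|m_1\|_{L^2(\R_+,\frac{ds}{s})}^2$ by the scale invariance of $\frac{ds}{s}$ --- the same trick used in the proof of Theorem \ref{thm:max-op-bound}. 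Hence, with $\mu:=\|m_1\|_{L^2(\R_+,\frac{ds}{s})}\|m_2\|_{L^2(\R_+,\frac{ds}{s})}$, every admissible pair satisfies $|m(t)|\le\mu$ for all $t>0$ (and if $m_1$ or $m_2$ fails to lie in $L^2(\R_+,\frac{ds}{s})$ then $\mu=\infty$ and there is nothing to prove).

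Next I would estimate the integrand pointwise. For fixed $t>0$ the parabola $z\mapsto(1-z/t)^2$ has its vertex at $z=t>0$, so on $[-\mu,\mu]$ it is minimized at $z=t$ (value $0$) if $t\le\mu$, and at $z=\mu$ (value $(1-\mu/t)^2$) if $t>\mu$. Since $m(t)$ is real and lies in $[-\mu,\mu]$, this yields $(1-t^{-1}m(t))^2\ge\mathbf{1}_{\{t>\mu\}}(1-\mu/t)^2$ for every $t>0$, and so --- using only nonnegativity of the integrand ---
\[
\big(\|m_1\|_{L^2(\R_+,\frac{ds}{s})}\|m_2\|_{L^2(\R_+,\frac{ds}{s})}\big)^{\gamma-2}\int_0^\infty(1-t^{-1}m(t))^2\,t^{1-\gamma}\,dt\ \ge\ \mu^{\gamma-2}\int_\mu^\infty\Big(1-\frac{\mu}{t}\Big)^2 t^{1-\gamma}\,dt .
\]

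Finally, the substitution $t=\mu u$ turns $\int_\mu^\infty(1-\mu/t)^2 t^{1-\gamma}\,dt$ into $\mu^{2-\gamma}\int_1^\infty(1-u^{-1})^2u^{1-\gamma}\,du$, so all powers of $\mu$ cancel and, taking the infimum over admissible $(m_1,m_2)$,
\[
P_\gamma\ \ge\ \int_1^\infty(1-u^{-1})^2u^{1-\gamma}\,du=\frac{1}{\gamma-2}-\frac{2}{\gamma-1}+\frac{1}{\gamma}=\frac{2}{(\gamma-2)(\gamma-1)\gamma},
\]
where each of the three elementary integrals converges precisely because $\gamma>2$. The only non-routine ingredient is the $L^\infty$ bound $\|m_1*m_2\|_\infty\le\|m_1\|_{L^2(\R_+,\frac{ds}{s})}\|m_2\|_{L^2(\R_+,\frac{ds}{s})}$; once one sees it, the fact that it appears weighted by exactly the power $\gamma-2$ makes the scale parameter $\mu$ drop out, which is what forces the bound to have this precise shape and leaves only the bookkeeping of the last integral.
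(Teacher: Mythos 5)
Your proof is correct and takes essentially the same approach as the paper: both start from the Cauchy--Schwarz bound $\|m_1*m_2\|_\infty\le\|m_1\|_{L^2(\R_+,\frac{ds}{s})}\|m_2\|_{L^2(\R_+,\frac{ds}{s})}$ and then observe that the resulting one-function problem is minimized by $m(t)=\min(t,\mu)$, with the power $\gamma-2$ making the scale $\mu$ cancel. The only cosmetic difference is that you justify the choice of $m$ by an explicit pointwise inequality $(1-t^{-1}m(t))^2\ge\ind_{\{t>\mu\}}(1-\mu/t)^2$, whereas the paper simply asserts that $\min(t,l)$ is the optimizer and computes.
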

\begin{proof}
  The proof is straightforward, once one notices that, for example by Cauchy--Schwarz, one has $\|m\|_\infty \le \|m_1\|_{L^2(\R_+,\frac{ds}{s})}\|m_2\|_{L^2(\R_+,\frac{ds}{s})}$  for $m =m_1*m_2$. 
  Thus
  \begin{align*}
  	P_\gamma &\ge \inf_{m} \left\{ \|m\|_\infty^{\gamma-2} \int_0^\infty (1-t^{-1} m(t))^2 t^{1-\gamma}\,dt \right\} \\
  	&= \inf_{l>0}\inf_{\|m\|_\infty=l} \left\{ l^{\gamma-2} \int_0^\infty (1-t^{-1} m(t))^2 t^{1-\gamma}\,dt \right\} 
  \end{align*}
 To make $\int_0^\infty (1-t^{-1} m(t))^2 t^{1-\gamma}\,dt$ as small as possible under the constraint $\|m\|_\infty=l$, one simply chooses 
 $m(t)=\min(t,l)$. Then 
 \begin{align*}
 	\int_0^\infty (1-t^{-1} m(t))^2 t^{1-\gamma}\,dt
 	= \int_l^\infty (1-t^{-1}l)^2 t^{1-\gamma}\,dt 
 	= l^{2-\gamma} \frac{2}{(\gamma-2)(\gamma-1)\gamma}
 \end{align*}
 and one sees that
 \begin{align*}
 	\inf_{\|m\|_\infty=l} \left\{ l^{\gamma-2} \int_0^\infty (1-t^{-1} m(t))^2 t^{1-\gamma}\,dt \right\} 
 	= \frac{2}{(\gamma-2)(\gamma-1)\gamma}. 
 \end{align*}
\end{proof}

\section{Extension to operator--valued potentials}\label{sec:operator-valued}
In this section we extend our method to operator--valued potentials and give the proof of Theorem \ref{thm:poly-operator-valued}, i.e. we prove that the number of negative bound states of $P^{2\alpha}\otimes \idG + V$ is bounded by 
 \begin{align*}
 	N(P^{2\alpha}\otimes\idG+V)
			\le C_{d/\alpha}\, 
			\frac{|B_1^d|}{(2\pi)^d} \int_{\R^d} \tr_\calG[ V_-(x)^{\frac{d}{2\alpha}}]\, dx \ ,
 \end{align*}
 where $V:\R^d\to \calB(\calG)$ is an operator valued potential with positive part $V_+\in L^1_{\text{loc}}(\R^d,\calB(\calG))$ and negative part  $V_-\in L^{d/(2\alpha)}(\R^d, \calS_{d/(2\alpha)}(\calG))$.
 
 Let $U(x)= V(x)_-$, the negative part of $V(x)$ defined by spectral calculus. 
The Birman--Schwinger operator corresponding to $|P|^{2\alpha}\otimes\idG -U$ is given by 
\begin{align*}
	K= \sqrt{U} (|P|^{-2\alpha}\otimes\idG) \sqrt{U}
\end{align*}
and we again have
\begin{align*}
	N(|P|^{2\alpha}\otimes\idG +V) \le N(|P|^{2\alpha}\otimes\idG -U)
		= n(K; 1).
\end{align*}
Now we factor $K$ as $K= \wti{A}_{f,g}^* \wti{A}_{f,g}$ where $\wti{A}_{f,g}$ has kernel 
\begin{align*}
	\wti{A}_{f,g}\varphi (\eta) 
		= (2\pi)^{-d/2} \int_{\R^d} e^{-i\eta \cdot x} g(\eta) f(x)\varphi(x)\, dx \ ,
\end{align*}
$g(\eta)= |\eta|^{-\alpha}$ is real--valued (even positive), and $f(x)=\sqrt{U(x)}$ takes values in the self-adjoint positive operators on $\calG$. We split this as 
\begin{align*}
	\wti{A}_{f,g}= \wti{B}_{f,g,m} + \wti{H}_{f,g,m}
\end{align*}
with a function $m:[0,\infty)\to \R $, so that
\begin{align}\label{eq:kernel-B-operator-valued}
	\wti{B}_{f,g,m}\varphi(\eta) &= (2\pi)^{-d/2} \int_{\R^d} e^{-i\eta \cdot x} m(g(\eta) f(x))\varphi(x)\, dx 
		= \calF\left[ m(tf )\varphi \right](\eta)\Big|_{t=g(\eta)}
\end{align}
and 
\begin{align}\label{eq:kernel-H-operator-valued}
	\wti{H}_{f,g,m}\varphi(\eta) &= (2\pi)^{-d/2} \int_{\R^d} e^{-i\eta \cdot x} \left[g(\eta)f(x) -m(g(\eta) f(x))\right]\varphi(x)\, dx \ , 
\end{align}
where $\varphi$ is a function from a nice dense subset of $L^2(\R^d,\calG)$, so that the integrals converge and $m(t f(x))$ is an operator on $\calG$ defined via functional calculus. 
\begin{remark}
  With a slight abuse of notation, we write $\calF$ in the definition of $\wti{B}_{f,g,m}$, which strictly speaking denotes the Fourier transform on $L^2(\R^d)$, instead of $\calF\otimes\idG$, the Fourier transform on $L^2(\R^d,\calG) = L^2(\R^d)\otimes\calG$.  In addition, in the definition of $\wti{B}_{f,g,m}$ and 
	$\wti{H} _{f,g,m}$ above we swapped the role of $f$ and $g$ compared to the discussion in Section \ref{sec:max-F-mult}. This is convenient, since by assumption $g(\eta)$ is a  multiplication operator  on $\calG$, and this makes a maximal Fourier multiplier estimate, now with $g$ instead of $f$, easier. 
	The general case can be reduced to this setting, see Section \ref{sec:trace-ideals} below.
\end{remark}

The following theorem is the replacement of Theorem \ref{thm:great} in the operator-valued setting.

\begin{theorem}\label{thm:great-operator-valued}  
	 $\wti{H}_{f,g,m}$ is  Hilbert--Schmidt operator on $\calH=L^2(\R^d,\calG)$ with Hilbert--Schmidt norm given by  
	\begin{align}\label{eq:operator-valued-HS-norm}
		\|\wti{H}_{f,g,m}\|_{\calS_2(\calH)}^2
		= \int_{\R^d} \tr_\calG\left[G_{g,m}(f(x))\right]\, dx \ ,
	\end{align}
	where $G_{g,m}$ is again given by 
	\begin{align}
		G_{g,m}(u)= \int_{\R^d} |ug(\eta)-m(ug(\eta))|^2 \frac{\d \eta}{(2\pi)^d} .
	\end{align}
    If, moreover, $m=m_1*m_2$   
    then for all measurable non-negative functions $g$ and non-negative operator-valued functions $f$ the operator $\wti{B}_{f,g,m}$ is bounded on $\calH$ with 
    \begin{align}
    	\|\wti{B}_{f,g,m}\varphi\|_{\calH} 
    		\le  \|m_1\|_{L^2(\R_+, \frac{ds}{s})} 
    			\|m_2\|_{L^2(\R_+, \frac{ds}{s})} 
    				\|\varphi\|_{\calH}
    \end{align}
    for all $\varphi\in \calH$. 
\end{theorem}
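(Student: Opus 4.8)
The strategy is to run the two halves of the proof of Theorem \ref{thm:great} in parallel, adapting each step to the operator-valued setting. For the Hilbert--Schmidt part, I would first note that $\wti{H}_{f,g,m}$ is an integral operator on $\calH = L^2(\R^d,\calG)$ with an operator-valued kernel $k(\eta,x) = (2\pi)^{-d/2}e^{-i\eta\cdot x}\big(g(\eta)f(x)-m(g(\eta)f(x))\big) \in \calB(\calG)$. The Hilbert--Schmidt norm of such an operator is $\|\wti H\|_{\calS_2(\calH)}^2 = \iint_{\R^d\times\R^d} \|k(\eta,x)\|_{\calS_2(\calG)}^2\, d\eta\, dx = \iint \tr_\calG\big[|g(\eta)f(x)-m(g(\eta)f(x))|^2\big]\, d\eta\, dx$, where I use that $|e^{-i\eta\cdot x}|=1$ and that $f(x)$ is self-adjoint so $g(\eta)f(x)-m(g(\eta)f(x))$ is a self-adjoint operator on $\calG$ given by functional calculus. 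Diagonalizing $f(x)$ and applying Fubini--Tonelli to pull the $\eta$-integral inside the trace then gives exactly $\int_{\R^d}\tr_\calG\big[G_{g,m}(f(x))\big]\,dx$, since by the spectral theorem $\int_{\R^d}|ug(\eta)-m(ug(\eta))|^2\,\frac{d\eta}{(2\pi)^d}$ applied to the operator $u=f(x)$ is precisely $G_{g,m}(f(x))$. The only subtlety is justifying the interchange of $\tr_\calG$ with the $\eta$-integral and checking finiteness; this follows from Tonelli for the positive integrand together with the summability hypothesis $V_-\in L^{d/(2\alpha)}(\R^d,\calS_{d/(2\alpha)}(\calG))$ after the scaling computation as in Section \ref{sec:gen-kin-energy}.

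For the operator-norm bound on $\wti B_{f,g,m}$, I would mimic the proof of Theorem \ref{thm:max-op-bound}, but with the roles of $f$ and $g$ swapped so that the \emph{scalar} multiplier variable is $g(\eta)$ (a genuine scalar) and the operator-valued object $f(x)$ plays the role of the fixed auxiliary function. Writing $m = m_1*m_2$ and using the convolution structure, $\wti B_{f,g,m}\varphi(\eta) = \int_0^\infty \calF\big[m_1((g(\eta)/s)\,f\,)\varphi\big](\eta)\, m_2(s)\,\frac{ds}{s}$, where $m_1((g(\eta)/s)f(x))$ is defined by functional calculus on $\calG$ for each fixed $x$. Applying the triangle inequality in $\calG$ and then Cauchy--Schwarz in $ds/s$ gives a pointwise-in-$\eta$ bound by $\big(\int_0^\infty \|\calF[m_1((g(\eta)/s)f)\varphi](\eta)\|_\calG^2\,\frac{ds}{s}\big)^{1/2}\|m_2\|_{L^2(\R_+,ds/s)}$. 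Then I would take the $\calH$-norm (i.e. integrate $\|\cdot\|_\calG^2$ over $\eta$), use Fubini--Tonelli to bring the $ds/s$ integral outside, and apply the vector-valued Plancherel theorem on $L^2(\R^d,\calG)$ to get $\int_{\R^d}\|\calF[m_1((g/s)f)\varphi](\eta)\|_\calG^2\,d\eta = \int_{\R^d}\|m_1((g(x)/s)f(x))\varphi(x)\|_\calG^2\,dx$. Here I would bound $\|m_1((g(x)/s)f(x))\varphi(x)\|_\calG \le \|m_1((g(x)/s)f(x))\|_{\calB(\calG)}\|\varphi(x)\|_\calG \le \big(\sup_{u\ge 0}|m_1(u)|\big)\|\varphi(x)\|_\calG$ — but this crude bound only gives the $L^\infty$ norm of $m_1$, not its $L^2(ds/s)$ norm, so I must instead keep the $s$-integral and exploit scaling: $\int_0^\infty \|m_1((g(x)/s)f(x))\|_{\calS_\infty(\calG)}^2$... this is the point where care is needed.

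The honest way through the last step, and the \textbf{main obstacle}, is to avoid collapsing the operator norm too early: one should instead write $\|m_1((g(x)/s)f(x))\varphi(x)\|_\calG^2 = \la \varphi(x), |m_1((g(x)/s)f(x))|^2\varphi(x)\ra_\calG$, integrate in $ds/s$ first, and use the scaling invariance of $ds/s$ to scale out $g(x)$, reducing $\int_0^\infty |m_1(g(x)/s)|^2\frac{ds}{s}$ (as a functional-calculus expression in $f(x)$) to $\int_0^\infty|m_1(s^{-1})|^2\frac{ds}{s}\cdot\id_\calG = \|m_1\|_{L^2(\R_+,ds/s)}^2\,\id_\calG$ on the range of $f(x)$, with an inequality ($\le$) on the kernel where $g(x)=0$ or $f(x)$ has a nontrivial kernel, since $m_1(0)=0$. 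This gives $\int_0^\infty\int_{\R^d}\|m_1((g/s)f)\varphi\|_\calG^2\,dx\,\frac{ds}{s} \le \|m_1\|_{L^2(\R_+,ds/s)}^2\|\varphi\|_\calH^2$, and combining with the $\|m_2\|$ factor yields the claimed bound. The one genuinely new ingredient compared to the scalar proof is checking that functional calculus commutes with these manipulations — in particular that for a fixed self-adjoint $f(x)\ge 0$ and scalar-valued $s\mapsto m_1(g(x)/s)$, the operator identity $\int_0^\infty |m_1((g(x)/s)f(x))|^2\,\frac{ds}{s} = \big(\int_0^\infty|m_1(g(x)/s)|^2\frac{ds}{s}\big)(f(x))$ in the sense of functional calculus is valid — but this is immediate from the spectral theorem (diagonalize $f(x)$ and reduce to the scalar statement already proved in Section \ref{sec:max-F-mult}). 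As with Theorem \ref{thm:great}, I would finally remark that since the resulting bound is independent of $g\ge 0$ as well, one may also take $\sup_{g\ge 0}$ after taking the $\calH$-norm, matching the structure of the scalar Corollary.
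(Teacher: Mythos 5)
Your treatment of the Hilbert--Schmidt half is essentially the paper's: identify the operator-valued kernel, use the isometric identification of $\calS_2(L^2(\R^d,\calG))$ with $L^2(\R^d\times\R^d,\calS_2(\calG))$ (Lemma \ref{lem:hilbert-schmidt}), note $|e^{-i\eta\cdot x}|=1$, and pull the $\eta$-integral inside $\tr_\calG$ by Tonelli and the spectral theorem. That part is fine.

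The second half, however, has a real gap at the Plancherel step. After Cauchy--Schwarz you have
\[
\|\wti B_{f,g,m}\varphi(\eta)\|_\calG^2 \le \|m_2\|^2 \int_0^\infty \big\|\calF\big[m_1\big((g(\eta)/s)f\big)\varphi\big](\eta)\big\|_\calG^2\,\frac{ds}{s},
\]
in which $\eta$ appears \emph{twice}: as the argument of $\calF[\cdot]$ and inside $g(\eta)$. You then integrate over $\eta$, swap with $ds/s$, and invoke Plancherel on $\int_{\R^d}\|\calF[m_1((g(\eta)/s)f)\varphi](\eta)\|_\calG^2\,d\eta$. Plancherel is not available here, because the function of $x$ being Fourier-transformed itself depends on the evaluation point $\eta$ through $g(\eta)$; in particular the equality you write, with $g(\eta)$ silently turning into $g(x)$ after Plancherel, is incorrect — $g$ is a scalar function of the momentum variable and never becomes a function of $x$. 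Your subsequent remark about ``scaling out $g(x)$'' is thus aimed at an expression that does not exist. The missing move is to apply the scaling $s\mapsto s/g(\eta)$ \emph{before} integrating in $\eta$: since $ds/s$ is scale-invariant, for each fixed $\eta$ with $g(\eta)>0$ one has
\[
\int_0^\infty \big\|\calF\big[m_1\big((g(\eta)/s)f\big)\varphi\big](\eta)\big\|_\calG^2\,\frac{ds}{s}
=\int_0^\infty \big\|\calF\big[m_1(f/s')\varphi\big](\eta)\big\|_\calG^2\,\frac{ds'}{s'},
\]
and now the integrand has $\eta$ only in the Fourier transform argument, so Fubini and Plancherel are legitimate, after which your closing functional-calculus step $\int_0^\infty m_1(f(x)/s')^2\,\frac{ds'}{s'}=\|m_1\|_{L^2(\R_+,\frac{ds}{s})}^2\,\id_\calG$ (on the range of $f(x)$, with an inequality on the kernel) finishes the proof. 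In the paper this issue simply does not arise because it shifts the parameter $t=g(\eta)$ into the $m_2$ factor \emph{before} Cauchy--Schwarz, writing $m(tf)=\int_0^\infty m_1(f/s)\,m_2(ts)\,\frac{ds}{s}$; then the $m_1$-factor carries no $t$ from the outset, the Cauchy--Schwarz bound is manifestly $t$-uniform (i.e.\ one has a genuine bound for the maximal operator $\wti B^*_{f,m}$), and Plancherel applies immediately. Your route is fixable and equivalent after the insertion of the pointwise-in-$\eta$ scaling, but as written the key inequality rests on a misapplication of Plancherel.
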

\begin{proof}
  To prove \eqref{eq:operator-valued-HS-norm}, we note that the Hilbert--Schmidt operators on $\calH= L^2(\R^d,\calG)$ are isomorphic to operators with kernels in $L^2(\R^d\times\R^d, \calS_2(\calG))$ and 
  \begin{align*}
  	\|\wti{H}\|_{\calS_2(\calH)}^2 &= \tr_{\calH}\big[ \wti{H}^* \wti{H} \big] 
  		= \iint_{\R^d\times\R^d} \|\wti{H}(\eta,x)\|_{\calS_2(\calG)}^2 \, d x d\eta \ ,
  \end{align*}
see Lemma \ref{lem:hilbert-schmidt}.
	
	Using the explicit form of the `kernel' of $\wti{H}_{f,g,m}$ given in \eqref{eq:kernel-H-operator-valued} this shows  
 \begin{align*}
 	\|\wti{H}\|_{\calS_2(\calH)}^2 
 		&= (2\pi)^{-d}\int_{\R^d}\int_{\R^d} \tr_\calG\big[ |g(\eta)f(x)- m(g(\eta f(x)))|^2 \big]\, d\eta d x \\
 		&= \int_{\R^d} \tr_\calG\big[ G_{g,m}(f(x)) \big]\, d x 
 \end{align*}
 by the definition of $G_{g,m}$ and the spectral theorem. 
 
 Concerning the boundedness of $\wti{B}_{f,g,m}$ we recall \eqref{eq:kernel-B-operator-valued} and, if $m=m_1*m_2$, 
 \begin{align*}
 	\wti{B}_{f,t,m}\varphi(\eta) 
 		\coloneqq \calF\left[ m(tf)\varphi \right](\eta) 
 		= \int_0^\infty \calF\left[ m_1(f /s)\varphi \right](\eta) \, m_2(ts)\, \frac{ds}{s}.
 \end{align*}
 Thus, 
 \begin{align*}
 	\big\| \wti{B}_{f,t,m}\varphi(\eta) \big\|_\calG 
 		&\le  \int_0^\infty \big\| \calF\left[ m_1(f /s)\varphi \right](\eta) \big\|_\calG\, |m_2(ts)|\, \frac{ds}{s} \\
 		&\le \left( \int_0^\infty \big\| \calF\left[ m_1(f /s)\varphi \right](\eta) \big\|_\calG^2 \,\frac{ds}{s}\right)^{1/2} 
 			\left( \int_0^\infty |m_2(ts)|^2\, \frac{ds}{s} \right)^{1/2} \\
 		&= \left( \int_0^\infty \big\| \calF\left[ m_1(f /s)\varphi \right](\eta) \big\|_\calG^2 \,\frac{ds}{s}\right)^{1/2} \|m_2\|_{L^2(\R_+,\frac{ds}{s})}
 \end{align*}
 due to the scaling invariance of $ds/s$. We therefore have a maximal operator bound 
 \begin{align*}
 	\wti{B}_{f,m}^* \varphi(\eta)
 		\coloneqq \sup_{t>0} 
 					\big\| \wti{B}_{f,t,m}\varphi(\eta) \big\|_\calG 
 					\le \left( \int_0^\infty \big\| \calF\left[ m_1(f /s)\varphi \right](\eta) \big\|_\calG^2 \,\frac{ds}{s}\right)^{1/2} \|m_2\|_{L^2(\R_+,\frac{ds}{s})} \ .
 \end{align*}
In particular, 
 \begin{align*}
 	\|\wti{B}_{f,m}^* \varphi\|_{L^2(\R^d)}^2 
 	\le \|m_2\|_{L^2(\R_+.\frac{ds}{s})}^2 \int_{\R^d} \int_0^\infty \big\| \calF\left[ m_1(f /s)\varphi \right](\eta) \big\|_\calG^2 \,\frac{ds}{s}\, d\eta \ ,
 \end{align*}
and  
 \begin{align*}
 	\int_{\R^d} \int_0^\infty &\big\| \calF\left[ m_1(f /s)\varphi \right](\eta) \big\|_\calG^2 \,\frac{ds}{s}\, d\eta 
 	 = \int_0^\infty \int_{\R^d} \left\la \calF[m_1(f/s)\varphi](\eta), \calF[m_1(f/s)\varphi](\eta) \right\ra_\calG \, d\eta\, \frac{d s}{s} \\
 	&= \int_0^\infty \int_{\R^d} \left\la m_1(f(x)/s)\varphi(x), m_1(f(x)/s)\varphi(x) \right\ra_\calG \, d x\, \frac{d s}{s} \\
 	&= \int_{\R^d} \left\la \varphi(x), \int_0^\infty m_1(f(x)/s)^2\, \frac{ds}{s} \varphi(x) \right\ra_\calG\, dx \\
 	 &= \|m_1\|_{L^2(\R_+, \frac{ds}{s})}^2 \int_{\R^d} \|\varphi(x)\|_\calG^2\, dx 
 	 	= \|m_1\|_{L^2(\R_+, \frac{ds}{s})}^2 
 	 		\|\varphi\|_\calH^2
 \end{align*}
 where we again used that, by scaling $\int_0^\infty m_1(r/s)^2\, \frac{ds}{s}= \|m_1\|_{L^2(\R_+, \frac{ds}{s})}^2$ for all $r>0$, so by functional calculus 
 \begin{align*}
	\int_0^\infty m_1(f(x)/s)^2\, \frac{ds}{s}
	= \|m_1\|_{L^2(\R_+, \frac{ds}{s})}^2\id_\calG\, .
 \end{align*}
 Altogether, we get the operator-valued version of our previous maximal Fourier multiplier bound in the form  
 \begin{align*}
 	\|\wti{B}_{f,m}^* \varphi\|_{L^2(\R^d)}^2
 	\le \|m_1\|_{L^2(\R_+, \frac{ds}{s})} 
 		\|m_2\|_{L^2(\R_+, \frac{ds}{s})} 
 	 		\|\varphi\|_\calH \ ,
 \end{align*}
 and it is easy to see that 
 \begin{align*}
 	\|\wti{B}_{f,g,m}\varphi\|_{\calH} \le \|\wti{B}_{f,m}^* \varphi\|_{L^2(\R^d)} \ ,
 \end{align*}
 which completes the proof of Theorem \ref{thm:great-operator-valued}.
\end{proof}

The proof of Theorem \ref{thm:poly-operator-valued} is now easy: one simply does the same steps as in the scalar case with \eqref{eq:general-upper-bound-1} replaced by
\begin{align*}
		N(P^{2\alpha}\otimes\idG-U)
	  & = n(\wti{A}_{\kappa f, g};\kappa)
	      \le (\kappa-\mu)^{-2}\sum_{j}\, \|\wti{H}_{\kappa f,g,m}\|_{\calS_2(\calH)}^2 \ ,
\end{align*}
where now $\mu \ge \|\wti{B}_{\kappa f,g,m}\varphi\|_{\calH}$. 
As before, Theorem \ref{thm:great-operator-valued} gives a $\kappa$-independent bound for $\|\wti{B}_{\kappa f,g,m}\varphi\|_{\calH}$,  in particular, we can take any  $\mu\ge \|m_1\|_{L^2(\R_+, \frac{ds}{s})} \|m_2\|_{L^2(\R_+, \frac{ds}{s})} $. It also allows us to calculate the Hilbert-Schmidt norm.  
For $g(\eta)=|\eta|^{-\alpha}$ we get 
\begin{align*}
	G_{g,m}(u) = u^{d/\alpha} \int_{\R^d} (|\eta|^{-\alpha}-m(|\eta|^{-\alpha}))^2\, \frac{d\eta}{(2\pi)^d} \ ,
\end{align*}
so 
\begin{align*}
	\|\wti{H}_{\kappa f,g,m}\|_{\calS_2(\calH)}^2 
	= \kappa^{d/\alpha} \int_{\R^d} (|\eta|^{-\alpha}-m(|\eta|^{-\alpha}))^2\, \frac{d\eta}{(2\pi)^d}
		\int_{\R^d} \tr_\calG\left[ f(x)^{d/\alpha} \right]\, dx \ .
\end{align*}
Using this in the above bound for $N(P^{2\alpha}\otimes\idG-U)$ and 
minimizing over $\kappa$, as in the scalar case, finishes the proof of Theorem \ref{thm:poly-operator-valued}.

\section{Trace ideal estimates}\label{sec:trace-ideals}

In this section we show how the ideas developed so far give a simple proof of a fully operator-valued version of Cwikel's theorem. Such an inequality was first proved in \cite{Fr}.

In this setting let $(X,dx)$ and $(Y,dy)$ be sigma-finite measure spaces and $\calH, \calG$ (separable) Hilbert spaces. We denote by $L^p(X,\calS_p(\calH))$ the set of measurable functions $f: X\to \calS_p(\calH)$, where $\calS_p(\calH)$ is the space of $p$-summable compact operators, i.e. the von Neumann--Schatten class, on $\calH$, such that  
\begin{align*} 
	\|f\|_{L^p(X,\calS_p(\calH))}^p \coloneqq \int_{X} \|f(x)\|_{\calS_p(\calH)}^p\, dx <\infty \ .
\end{align*}
Similarly, we denote by $L^p_\text{w}(Y, \calB(\calG))$ the set of of all measurable functions $g:Y\to \calB(\calG)$, with values in the bounded operators on $\calG$, such that 
\begin{align*}
	\|g\|_{L^p_\text{w}(Y, \calB(\calG))}^p 
		\coloneqq \sup_{t>0} t^p\left|\left\{ y\in Y:\, \|g(y)\|_{\calB(\calG)>t}\right\} \right| <\infty
\end{align*}
A map $A: L^2(X,\calH)\to L^2(Y,\calG)$ is in the weak trace--ideal $\calS_{p,\text{w}}= \calS_{p,\text{w}}(L^2(X,\calH), L^2(Y,\calG))$ if 
\begin{align}
	\left\| f\Phi^* g \right\|_{p,\text{w}} 
		\coloneqq \Big(\sup_{n\in\N} n^p s_n(A)\Big)^{1/p}
\end{align}
where $s_n(A)$ are teh singular--values of $A$, i.e. the eigenvalues of $A^* A:L^2(X,\calH)\to L^2(X,\calH)$.

\begin{theorem}[Fully operator valued version of Cwikel's theorem]\label{thm:Cwikel}
  Let $\Phi: L^2(X,\calH)\to L^2(Y,\calG)$ be a unitary operator, 
  which is also bounded from  $L^1(X,\calH)$ into $L^\infty(Y,\calG)$. 
  If $p>2$ and $f\in L^p(X,\calS_p(\calH))$ and $g\in L^p_\text{w}(Y, \calB(\calG))$, then $f\Phi^* g$ is in the weak trace ideal 
  $\calS_p(L^2(X,\calH),L^2(Y,\calG))$ and 
  \begin{align*}
  \begin{split}
  	\left\| f\Phi^* g \right\|_{p,\text{w}}^p 
  		&\coloneqq \sup_{n\in\N} n^p s_n(f\Phi^* g) \\
  		&\le \frac{2(p-2)}{p+2} \, \left(\frac{p}{p-2} \right)^{p} \|\Phi\|_{L^1\to L^\infty}^2 \|f\|_{L^p(X,\calS_p(\calH))}^p \|g\|_{L^p_\text{w}(Y, \calB(\calG))}^p \,. 
  \end{split}
  \end{align*} 
\end{theorem}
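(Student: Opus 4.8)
The plan is to reproduce the splitting‐trick argument, but this time optimize the splitting pointwise in the singular‐value index $n$ rather than absorbing a bounded piece into a single $\mu$. Fix $n$; the goal is to bound $s_n(A)$ with $A=f\Phi^{*}g$. By homogeneity we may rescale $g$ so that the weak‐$L^p$ quasinorm $\|g\|_{L^p_{\mathrm w}}$ is of unit size; equivalently we work with a scaling parameter. For a threshold $\sigma>0$ split $g=g\ind_{\{\|g\|_{\calB(\calG)}\le\sigma\}}+g\ind_{\{\|g\|_{\calB(\calG)}>\sigma\}}=:g_{\le}+g_{>}$, and correspondingly $A=A_{\le}+A_{>}$ with $A_{\le}=f\Phi^{*}g_{\le}$ and $A_{>}=f\Phi^{*}g_{>}$. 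For the first piece I will use the $L^1\to L^\infty$ bound on $\Phi$ together with the operator bound developed in Sections~\ref{sec:max-F-mult}–\ref{sec:operator-valued}: since $g_{\le}$ is bounded in $\calB(\calG)$ by $\sigma$, the maximal‐Fourier‐multiplier estimate of Theorem~\ref{thm:great-operator-valued} (in the general unitary setting of Section~\ref{sec:trace-ideals}) gives $\|A_{\le}\|\le \|\Phi\|_{L^1\to L^\infty}\,\sigma\,\|f\|$-type control — more precisely, one chooses the convolution factors $m_1(s)=2s\ind_{\{0<s\le1\}}$, $m_2(s)=s^{-1}\ind_{\{s\ge1\}}$ as in the proof of Theorem~\ref{thm:gen-kin-energy} so that $m=m_1*m_2=\min(t,t^{-1})$, which realizes exactly the truncation $g_{\le}$ on the bounded part and leaves the Hilbert–Schmidt tail $g_{>}$.

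For the second, Hilbert–Schmidt, piece I will compute $\|A_{>}\|_{\calS_2}^2$ exactly via the kernel formula of Theorem~\ref{thm:great-operator-valued}: it equals $\|\Phi\|_{L^1\to L^\infty}^2\int_X \tr_{\calH}[f(x)^2]\,dx\cdot\int_{\{\|g\|>\sigma\}}(\ldots)\,dy$, and using the layer‑cake/weak‑$L^p$ bound $|\{\|g\|_{\calB(\calG)}>t\}|\le \|g\|_{L^p_{\mathrm w}}^p t^{-p}$ one gets, after the $t$‑integration, a bound of the form $\|\Phi\|_{L^1\to L^\infty}^2\,\|f\|_{L^2(X,\calS_2)}^2\,\|g\|_{L^p_{\mathrm w}}^p\,\sigma^{2-p}\cdot\frac{c}{p-2}$. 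Then Ky Fan's inequality gives $s_n(A)\le \|A_{\le}\|+s_n(A_{>})$, and since $n\,s_n(A_{>})^2\le \sum_{j}s_j(A_{>})^2=\|A_{>}\|_{\calS_2}^2$, I can bound $s_n(A_{>})\le n^{-1/2}\|A_{>}\|_{\calS_2}$. Balancing the two terms by choosing $\sigma$ as a suitable power of $n$ (roughly $\sigma\sim n^{1/p}$ up to the $\|f\|,\|g\|$ factors) makes both contributions comparable, and the product bound $s_n(A)\lesssim n^{-1/p}(\ldots)$ follows, i.e. $\sup_n n^p s_n(A)^p<\infty$ with the stated constant.

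The remaining work is to track the numerical constant carefully: the factor $\tfrac{2(p-2)}{p+2}\big(\tfrac{p}{p-2}\big)^{p}$ should emerge from (i) the $\frac{8}{(p-2)p(p+2)}$‑type integral one gets from $\int_1^\infty (1-t^{-2})^2 t^{1-p}\,dt$ exactly as in Remark~\ref{rem:simple-choice} (which controls the Hilbert–Schmidt tail after the weak‑$L^p$ substitution), combined with (ii) the elementary optimization over the split point $\sigma$, which contributes the $\big(\tfrac{p}{p-2}\big)^{p}$ factor — this is the same optimization that produces $\frac{\gamma^{\gamma+1}}{4(\gamma-2)^{\gamma-2}}$ in \eqref{eq:C-gamma} with $\gamma=p$, simply reorganized. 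One must be slightly careful that for the weak trace ideal one compares against $s_n$ rather than a full $\ell^p$ sum, so the monotone rearrangement of singular values (at most $n$ of them can exceed $s_n$) is used in place of Chebyshev–Markov.

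The main obstacle I expect is the operator‑norm control of the bounded piece $A_{\le}=f\Phi^{*}g_{\le}$ in the \emph{fully} operator‑valued setting, where neither $f$ nor $g$ is scalar and $\Phi$ is an abstract unitary rather than the Fourier transform. The scalar argument crucially used the product structure $f(x)g(\eta)$ and the scaling invariance of $ds/s$ to make the maximal multiplier bound $f$‑independent; here one needs the operator‑valued maximal bound of Theorem~\ref{thm:great-operator-valued}, and to apply it one must know that $\Phi$ being $L^1\to L^\infty$ bounded and unitary is exactly the abstract input that replaces the Hausdorff–Young/Plancherel pair used there — verifying that the kernel computations and the Plancherel step go through with a general $\Phi$ (this is presumably the content of Lemma~\ref{lem:hilbert-schmidt} and the surrounding reductions in Section~\ref{sec:trace-ideals}) is the delicate point. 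Everything else is the same balancing computation as in the CLR case.
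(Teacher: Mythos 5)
Your high-level strategy is the right one (Ky Fan, $s_n(H)\le n^{-1/2}\|H\|_{HS}$, optimize a free parameter, then plug in the $\min(t,t^{-1})$ multiplier), and you correctly identify the supporting machinery (the kernel/Dunford lemma for $\Phi$, the operator-valued Hilbert--Schmidt lemma, the maximal-multiplier bound). But the decomposition you actually write down does not work, and this is not a cosmetic issue.

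You split $g=g_{\le}+g_{>}$ by hard truncation at a \emph{fixed} level $\sigma$, and then set $A_{\le}=f\Phi^{*}g_{\le}$, $A_{>}=f\Phi^{*}g_{>}$. Since $\Phi$ is unitary on $L^2$, the only way to bound $\|A_{\le}\|$ with a fixed truncation of $g$ is via $\|f\|_{L^\infty}\,\sigma$, and the only natural bound for $\|A_{>}\|_{HS}$ brings in $\|f\|_{L^2(X,\calS_2)}$ — you even write this explicitly. But $f$ is assumed to lie in $L^p$ with $p>2$ and in no other Lebesgue space, so both quantities are generically infinite. The claim that the operator-valued maximal estimate "realizes exactly the truncation $g_{\le}$" is where the confusion enters: the splitting in Lemma \ref{lem:great-fully-operator-valued} is $t=m(t)+(t-m(t))$ applied to the \emph{product} $t=g(y)f(x)$, not to $g$ alone. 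The bounded piece is the one with kernel $\Phi(y,x)\,m\big(g(y)f(x)\big)$ — an $x$-dependent cutoff on $g$, which is precisely why a maximal-operator bound (uniform in $f$) is needed, rather than a trivial operator-norm estimate. The Hilbert--Schmidt piece has kernel $\Phi(y,x)\big[g(y)f(x)-m(g(y)f(x))\big]$, and it is only for this joint truncation that the layer-cake computation in Lemma \ref{lem:great-fully-operator-valued} produces $\tr_\calH f(x)^p$ rather than $\tr_\calH f(x)^2$, so that the $\|f\|_{L^p(X,\calS_p)}^p$ factor emerges. With your fixed-$\sigma$ split, the exponents on $f$ never come out right and the bound is vacuous.

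Once you replace the hard truncation of $g$ by the $m$-splitting of the product $g(y)f(x)$ and scale $f\mapsto\kappa f$ (so that the free optimization parameter is $\kappa$, not a truncation level $\sigma$), the rest of your computation — Ky Fan, $s_n(H)\le n^{-1/2}\|H\|_{HS}$, the $\int_1^\infty(1-t^{-2})^2 t^{1-p}\,dt=\tfrac{8}{p(p-2)(p+2)}$ integral, the single-variable optimization yielding $\big(\tfrac{p}{p-2}\big)^{p}$ — is exactly the paper's proof and reproduces the stated constant $\tfrac{2(p-2)}{p+2}\big(\tfrac{p}{p-2}\big)^{p}$. Your closing concern about whether unitarity plus $L^1\to L^\infty$ boundedness of $\Phi$ replaces the Fourier/Plancherel pair is legitimate and is indeed what Lemmas \ref{lem:hilbert-schmidt} and \ref{lem:kernel-op-valued} supply, but it becomes relevant only after the decomposition itself is the correct one.
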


\begin{remark}
	Theorem \ref{thm:Cwikel} improves the result of Frank in \cite{Fr},  
	\begin{align*}
		\left\| f\Phi^* g \right\|_{p,\text{w}}^p 
			\le \frac{p}{2} \left(\frac{p}{p-2} \right)^{p-1} 
				\|\Phi\|_{L^1\to L^\infty}^2 \|f\|_{L^p(X,\calS_p(\calH))}^p \|g\|_{L^p_\text{w}(Y, \calB(\calG))}^p \, ,
	\end{align*}
 by a factor of $(p+2)/2$. In addition, his bound in the scalar case, when $\Phi$ is the usual Fourier-transform,  is worse than Theorem \ref{thm:Cwikel} by a factor of $\frac{1}{2}(1+p/2)^{p/2}>1$ in the allowed range $p>2$.  
\end{remark}

\begin{proof}
 First we note that one can reduce the result to the case when $g$ is pointwise a positive multiple of the identity operator on $\calG$. As operators on $\calG$ one has  $g(y) g(y)^* \le \|g(y)\|_{\calB(\calG)}^2\id_{\calG}$. Thus 
  with $A_1= f\Phi^* g$ we have 
  \begin{align*}
  	A_1 A_1^* = f\Phi^* g g^* \Phi f^* \le  f\Phi^* (\|g\|_{\calB(\calG)}\id_{\calG})^2 \Phi f^* = A_2 A_2^*
  \end{align*}
 with $A_2= f\Phi^* \|g\|_{\calB(\calG)}\id_{\calG}= f\Phi^* \|g\|_{\calB(\calG)} $ where, for simplicity, we   wrote   
 $\|g\|_{\calB(\calG)}$ for $\|g\|_{\calB(\calG)}\id_\calG$. 
 Since   the singular values of $A_1$ are the square roots of the eigenvalues of $A_1^* A_1$, which has the same non-zero-eigenvalues as $ A_1 A_1^*$ we see that the nonzero singular values of $A_1$ obey the bound 
 $s_n(A_1) \le s_n(A_2)$. 
 
 Similarly, $|f(x)|\coloneqq \sqrt{f(x)^*f(x)}$  is a non negative operator on 
 $\calH$ and 
 \begin{align*}
 	A_2^* A_2 = \|g\|_{\calB(\calG)} \Phi^* f^*f\Phi^* \|g\|_{\calB(\calG)} 
 	= \|g\|_{\calB(\calG)} \Phi^* |f|^2\Phi^* \|g\|_{\calB(\calG)} 
 	= A_3^* A_3
 \end{align*}
 with $ A_3= |f|\Phi^* \|g\|_{\calB(\calG)}$. So the singular 
 values of $A_2$ are the same as the singular values of $A_3$ and 
 without loss of generality, we can assume that $g$ is a non-negative 
 function and $f$ takes values in the non-negative operators on $\calH$. 
 By scaling, we can also assume that $\|f\|_{L^p(X,\calS_p(\calH))}= \|g\|_{L^p_\text{w}(Y)}^p=1$. 
 
 Since $\Phi: L^1(X,\calH)\to L^\infty(Y,\calG)$ is bounded, Lemma \ref{lem:kernel-op-valued} shows that it has a kernel $\Phi(\cdot,\cdot)$ such that for all $f\in L^2(X,\calH)$, 
 \begin{align*}
 \Phi f(y) = \int_X \Phi(y,x) f(x) \, dx  
 \end{align*}
 for almost all $y\in Y$. Moreover, $\sup_{(y,x)\in Y\times X}\|\Phi(y,x)\|_{\calB(\calH,\calG)}= \|\Phi\|_{L^1\to L^\infty}$. 
 Having reduced the estimate to scalar non-negative functions $g$ and non-negative operator-valued functions $f$ we  can rewrite  $A_{f,g}= g\Phi f$ as
 \begin{align}
 	A_{f,g}\varphi(y) = \int_{X} g(y) \Phi(y,x) f(x) \varphi(x)\, dx 
 		= \int_{X}  \Phi(y,x) g(y) f(x) \varphi(x)\, dx
 \end{align}
 using that $g(y)$ is now a non-negative scalar. Thus, we can take again an arbitrary function $m:\R_+\to \R $ with $m(0)=0$ and split 
  \begin{align}
  	 B_{f,g,m}\varphi(y) 
  	 	& \coloneqq 
  	 		\int_{X}  \Phi(y,x) m\big( g(y) f(x) \big) \varphi(x)\, dx 
  	 			\label{eq:def-B-fully-operator-valued} \\
  	 H_{f,g,m}\varphi(y)	
  	 	& \coloneqq 
  	 		\int_{X}  \Phi(y,x) \big[g(y)f(x)-m\big( g(y) f(x) \big)\big] \varphi(x)\, dx \label{eq:def-H-fully-operator-valued}
  \end{align}
  The above expression are well-defined by the spectral theorem, since $g$ is a non-negative function and $f$ takes values in the non-negative operators on $\calH$, so $m(g(y)f(x))$ is a bounded operator on $\calH$ for almost all $y$ and $x$, when $m$ is bounded. Thus the integrals in \eqref{eq:def-B-fully-operator-valued} and \eqref{eq:def-B-fully-operator-valued}  converge for all $\varphi$ from a dense subset of $L^2(X,\calH)$, for example the piecewise constant functions.

  Scaling in $f$ by $\kappa>0$, we get from Ky Fan's inequality
  \begin{equation}
    \begin{split}  	
    	s_n(g\Phi f) 
    		&= \kappa^{-1}s_n(A_{\kappa,f,g})\le \kappa^{-1}\left[ \|B_{\kappa f,g,m}\| + s_n(H_{\kappa f,g,m}) \right] \\ 
    		& \le \kappa^{-1} \left[ \mu + n^{-1/2} \|H_{\kappa, f,g,m}\|_{HS} \right]
    \end{split} 
  \end{equation}
      where we take $\mu = \|m_1\|_{L^2(\R_+,\frac{ds}{s})} \|m_2\|_{L^2(\R_+,\frac{ds}{s})}$, the upper bound on the norm of $B_{\kappa f,g,m}$ from Lemma \ref{lem:great-fully-operator-valued} below and we used  
      $s_n(H)\le n^{-1}\sum_{j=1}^n s_j(H)^2\le n^{-1}\|H\|_{HS}^2$, for any Hilbert-Schmidt operator, due to the monotonicity of its singular values.  
  Thus using the bound \eqref{eq:bound-H-fully-operator-valued} one gets  
  \begin{align*}
  	    	s_n(g\Phi f)  
    			 \le \kappa^{-1} \left[ \mu + n^{-1/2} D^{1/2} \kappa^{p/2} \right]
  \end{align*}
 with $D= p \, \|\Phi\|_{L^1\to L^\infty}^2 \int_0^\infty (1-t^{-1}m(t))^2 t^{1-p}\, dt$, and minimizing this over $\kappa>0$ we have 
 \begin{align*}
 	s_n(g\Phi f) 
 		\le \frac{p}{p-2} \left( \frac{p-2}{2} \right)^{2/p} 
 			(\mu^{p-2}D)^{1/p}\, n^{-1/p}
 \end{align*}
 for the singular values for all $n\in\N$. 
 
 Making again the simplest choice $m_1(s)= 2s\id_{\{0\le s\le 1\}}$ and $m_2(s)= s^{-1}\id_{\{s\ge 1\}}$ one checks that $m=m_1*m_2= \min(t,t^{-1})$ is allowed since $m'\le 1$. Calculating the numbers finishes the proof of Theorem \ref{thm:Cwikel}.
\end{proof}

\begin{lemma}\label{lem:great-fully-operator-valued}
  Let $p>2$, $\calH$ and $\calG$ auxiliary Hilbert--spaces, $(X,dx)$ and 
  $(Y,dy)$ $\sigma$--finite measure spaces, $0\le g\in L^p_{\text{w}}(Y)$, 
  $0\le f\in L^p(X\calS_p(\calH))$, $\Phi: L^2(X,\calH)\to L^2(Y,\calG)$ unitary 
  and also bounded from  $L^1(X,\calH)\to L^\infty(Y,\calG)$. Then 
  for all continuous and piecewise differentiable bounded functions $m:\R_+\to \R $ with  $m(0)=0$ and $m'\le 1$ the 
  operator $\wti{H}_{f,g,m}$ defined in \eqref{eq:def-H-fully-operator-valued} is 
  a Hilbert--Schmidt operator and 
  \begin{equation}
  \begin{split}\label{eq:bound-H-fully-operator-valued}
  	\|\wti{H}_{f,g,m}&\|_{\calS_2(L^2(X,\calH)\to L^2(Y,\calG))}^2 
  		= \tr_{L^2(X,\calH)}\left[ \wti{H}_{f,g,m}^*\wti{H}_{f,g,m} \right] \\
  	&\le p \, \|\Phi\|_{L^1\to L^\infty}^2 \int_0^\infty (1-t^{-1}m(t))^2 t^{1-p}\, dt \, 
  				\|g\|_{L^p_\text{w}(Y)}^p \| f\|_{L^p(X,S_p(\calH))}^p\,.
  \end{split}  	
  \end{equation} 
  
  Moreover, if $m=m_1*m_2$, then the operator 
  $\wti{B}_{f,g,m}$ defined in \eqref{eq:def-B-fully-operator-valued} is 
  bounded from $L^2(X,\calH)$ to $ L^2(Y,\calG)$ and 
  \begin{align}\label{eq:bound-B-fully-operator-valued}
  	\|\wti{B}_{f,g,m}\|_{L^2\to L^2} \le \|m_1\|_{L^2(\R_+,\frac{ds}{s})} \|m_2\|_{L^2(\R_+,\frac{ds}{s})}\, .
  \end{align}
\end{lemma}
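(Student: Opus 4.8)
The plan is to prove the two assertions separately, in each case re-running the relevant argument of Sections~\ref{sec:max-F-mult} and~\ref{sec:operator-valued} with the Fourier transform replaced by the abstract unitary $\Phi$, and with the dyadic/flat range-decomposition of the potential replaced by the \emph{spectral} decomposition of the operator-valued $f$ together with the weak-$L^p$ information on $g$. For the Hilbert--Schmidt bound \eqref{eq:bound-H-fully-operator-valued} I would first invoke Lemma~\ref{lem:kernel-op-valued} to write $\Phi$ as an integral operator with operator kernel $\Phi(y,x)\in\calB(\calH,\calG)$ obeying $\sup_{y,x}\|\Phi(y,x)\|_{\calB(\calH,\calG)}=\|\Phi\|_{L^1\to L^\infty}$, and the operator-valued Hilbert--Schmidt identification (Lemma~\ref{lem:hilbert-schmidt}) to write $\|\wti H_{f,g,m}\|_{\calS_2}^2$ as the integral over $Y\times X$ of $\|\Phi(y,x)[g(y)f(x)-m(g(y)f(x))]\|_{\calS_2(\calH,\calG)}^2$. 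Using $\|\Phi(y,x)K\|_{\calS_2}\le\|\Phi(y,x)\|_{\calB}\|K\|_{\calS_2}$ pulls out $\|\Phi\|_{L^1\to L^\infty}^2$, and then, for fixed $x$, diagonalising the nonnegative compact operator $f(x)$ (eigenvalues $\mu_k(x)\ge0$) and using that $g(y)$ is a nonnegative \emph{scalar} gives, by functional calculus, $\|g(y)f(x)-m(g(y)f(x))\|_{\calS_2(\calH)}^2=\sum_k\phi(g(y)\mu_k(x))^2$ with $\phi(u):=u-m(u)$ (here $m(0)=0$ takes care of the kernel of $f(x)$). This reduces everything to a one-dimensional inequality.

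That inequality is the heart of the matter, and the hypothesis $m'\le1$ enters precisely here: it makes $\phi'=1-m'\ge0$, hence $\phi$ nondecreasing and, since $\phi(0)=0$, nonnegative. Writing $\phi(g(y)\mu)^2=\int_0^{g(y)\mu}(\phi^2)'(t)\,dt$, applying Tonelli together with the layer-cake identity and the weak-type bound $|\{y:g(y)>t/\mu\}|\le\|g\|_{L^p_\text{w}(Y)}^p\,\mu^p\,t^{-p}$, and finally integrating by parts, yields
\[
\int_Y\phi(g(y)\mu)^2\,dy\le p\,\|g\|_{L^p_\text{w}(Y)}^p\,\mu^p\int_0^\infty\phi(t)^2\,t^{-p-1}\,dt=p\,\|g\|_{L^p_\text{w}(Y)}^p\,\mu^p\int_0^\infty(1-t^{-1}m(t))^2\,t^{1-p}\,dt .
\]
Summing over $k$ turns $\sum_k\mu_k(x)^p$ into $\|f(x)\|_{\calS_p(\calH)}^p$, and integrating over $x$ produces $\|f\|_{L^p(X,\calS_p(\calH))}^p$; combined with the factor $\|\Phi\|_{L^1\to L^\infty}^2$ this is exactly \eqref{eq:bound-H-fully-operator-valued}.

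For the operator norm bound \eqref{eq:bound-B-fully-operator-valued} I would simply repeat the maximal Fourier multiplier computation from the proof of Theorem~\ref{thm:great-operator-valued} with $\calF$ replaced by $\Phi$: writing $m=m_1*m_2$ and using the scale invariance of $ds/s$ to pull the scalar $g(y)$ out of $m_1(g(y)f(x)/s)$ gives $\wti B_{f,g,m}\varphi(y)=\int_0^\infty\big(\Phi[m_1(f(\cdot)/s)\varphi]\big)(y)\,m_2(g(y)s)\,\frac{ds}{s}$; then Cauchy--Schwarz in $\frac{ds}{s}$ (with $\int_0^\infty|m_2(g(y)s)|^2\frac{ds}{s}=\|m_2\|_{L^2(\R_+,\frac{ds}{s})}^2$), Fubini, the unitarity of $\Phi$ (Plancherel), and the functional-calculus inequality $\int_0^\infty m_1(f(x)/s)^2\frac{ds}{s}\le\|m_1\|_{L^2(\R_+,\frac{ds}{s})}^2\id_\calH$ close the estimate.

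The main obstacle is the one-dimensional bound in the first part, and within it the justification of the integration by parts: the boundary term at $\infty$ vanishes because $m$ is bounded and $p>2$, whereas the boundary term at $0$ needs a short argument --- if $\int_0^\infty(1-t^{-1}m(t))^2 t^{1-p}\,dt=\infty$ the asserted estimate is vacuous, so one may assume this integral is finite, and finiteness forces $\phi(t)^2 t^{-p}\to0$ along a sequence $t\to0$, which suffices. A minor secondary point is keeping track of the fact that the identity $\int_0^\infty m_1(f(x)/s)^2\frac{ds}{s}=\|m_1\|_{L^2(\R_+,\frac{ds}{s})}^2\id_\calH$ becomes merely an inequality when $f(x)$ has a nontrivial kernel --- which is all that is needed.
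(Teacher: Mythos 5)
Your proposal is correct and follows essentially the same route as the paper's proof: the Hilbert--Schmidt part reduces via the kernel identification and the layer-cake/integration-by-parts argument (with the same handling of the possibly-divergent case and the boundary terms), and the operator-norm part is exactly the $\Phi$-version of the maximal Fourier multiplier argument (Cauchy--Schwarz in $ds/s$, scale invariance, Plancherel via unitarity, functional calculus). The only cosmetic differences are that you diagonalize $f(x)$ explicitly rather than working with $\tr_{\calH}G(f(x))$, and that you insert $g(y)$ directly into $m_2$ instead of first introducing the auxiliary parameter $t$ and a maximal operator.
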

\begin{remark}
  As the proof of Lemma \ref{lem:great-fully-operator-valued} shows one even has a bound on  $\wti{B}_{f,g,m}$ in the form   
  \begin{align*}
  	\sup_{f\ge 0} \big\| \sup_{g\ge 0}\|\wti{B}_{f,g,m}\varphi\|_\calG \big\|_{L^2(Y)} 
  		\le \|m_1\|_{L^2(\R_+,\frac{ds}{s})} \|m_2\|_{L^2(\R_+,\frac{ds}{s})}^2
 		 		\|\varphi\|_{L^2(X,\calH)}
  \end{align*}
 	where the first supremum is taken over all functions $g:Y\to [0,\infty)$ and the second supremum is taken over all non-negative operator-valued functions $f:X\to \calB(\calH)$. 
\end{remark}

\begin{proof}
  For notational simplicity we set  
  \begin{align*}
  	C= \|\Phi\|_{L^1(X,\calH)\to L^\infty(Y,\calG)} = \mathop{\mathrm{esssup}}\limits_{(y,x)\in Y\times X} \|\Phi(x,y)\|_{\calB(\calH,\calG)}\, .
  \end{align*}
 and note  
 \begin{align*}
 	\|\wti{H}_{f,g,m}\|_{\calS_2(L^2(X,\calH)\to L^2(Y,\calG))}^2 
 		&= \iint_{Y\times X} \tr_{\calH}\left[ \wti{H}_{f,g,m}(y,x)^*\wti{H}_{f,g,m}(y,x) \right]\, dy dx \, .
 \end{align*}
 Because $g$ is real-valued, even positive, and $f$ takes values in the non-negative, hence self-adjoint, operators 
 \begin{align*}
 	\wti{H}_{f,g,m}&(y,x)^*\wti{H}_{f,g,m}(y,x) =\\
 		&= \big[g(y)f(x)-m\big( g(y) f(x) \big)\big]\Phi(y,x)^*\Phi(y,x)\big[g(y)f(x)-m\big( g(y) f(x) \big)\big] \\
 		&\le C^2 \big[g(y)f(x)-m\big( g(y) f(x) \big)\big]^2,  
 \end{align*}
 so, setting $
 	G(u)\coloneqq \int_{Y}\left[u g(y) -m(ug(y)) \right]^2\, dy$, 
 we have 
 \begin{align*}
 	\|\wti{H}_{f,g,m}\|_{\calS_2(L^2(X,\calH)\to L^2(Y,\calG))}^2 
 		&\le  C^2 \int_{X} \tr_\calH G(f(x)) \, dx \, .
 \end{align*}
 With  $k(t)= (t-m(t))^2 $, the layer-cake principle shows 
 \begin{align*}
 	G(u) = \int_0^\infty k'(ut)u |\{y\in Y: g(y)>t\}|\, dt\, .
 \end{align*}
 By definition $|\{y\in Y: g(y)>t\}|\le t^{-p}\|g\|_{L^p_\text{w}(Y)}^p$ 
 for all $t>0$. If $m(0)=0$ and $m'(t)\le 1$, then $k'\ge 0$. Thus 
  \begin{align*}
 	G(u) &\le  u^p\,\|g\|_{L^p_\text{w}(Y)}^p 
 				\int_0^\infty k'(t) t^{-p}\, dt\, . 
 \end{align*}
 For any $0<\veps<L$ we have 
 \begin{align*}
 	\int_\veps^L k'(ut)u t^{-p}\, dt
 		= p[k(t)t^{-1-p}]_\veps^L +p \int_\veps^L k(t) t^{-1-p}\, dt 
 \end{align*}
 and if $0<t \mapsto k(t) t^{-1-p}\in L^1(\R_+,dt)$, 
 then there exist  sequences $\veps_n\to 0$ and $L_n\to\infty$ 
 such that  $\lim_{n\to\infty}k(\veps_n) \veps_n^{-1-p}=0 = \lim_{n\to\infty}k(L_n) L_n^{-1-p}$, 
 that is, the boundary term $ [k(t)t^{-1-p}]_{\veps_n}^{L_n}$ vanishes in the limit $n\to\infty$. Hence integration by parts is 
 justified as soon as the right hand side of \eqref{eq:bound-H-fully-operator-valued} is finite and 
 \begin{align*}
 	\tr_\calH G(f(x)) \le 
 		p \int_0^\infty k(t) t^{-1-p}\, dt\, \|g\|_{L^p_\text{w}(Y)}^p \tr_\calH (f(x)^p)
 \end{align*}
Integrating this over $X$ finishes the proof of \eqref{eq:bound-H-fully-operator-valued}. 
 \smallskip
 
 To prove \eqref{eq:bound-B-fully-operator-valued} we introduce 
 \begin{align}\label{eq:def-maximal-operator-fully-operator-valued}
 	\wti{B}_{f,t,m}\varphi(y)
 		\coloneqq \int_{X}  \Phi(y,x) m\big( t f(x) \big) \varphi(x)\, dx
 		= \Phi[m(tf)\varphi](y)
 \end{align}
 for $t\ge 0$ (using $m(0)=0$). If $m=m_1*m_2$, convolution on $\R_+$, then a by now familiar calculation yields
 \begin{align*}
  	\wti{B}_{f,t,m}\varphi(y)
 		= 
 			\int_0^\infty \Phi[m_1(sf)\varphi](y) \, m_2(ts)\,\frac{ds}{s}
 \end{align*}
 and therefore the Cauchy--Schwarz inequality gives 
 \begin{align*}
  	\|\wti{B}_{f,t,m}\varphi(y)\|_\calG
 		&\le  
 			\int_0^\infty \|\Phi[m_1(sf)\varphi](y)\|_\calG \, |m_2(ts)|\,\frac{ds}{s} \\
 		&\le \left(  \int_0^\infty \|\Phi[m_1(sf)\varphi](y)\|_\calG ^2 \,\frac{ds}{s} \right)^{1/2}
 			\left(  \int_0^\infty  |m_2(ts)|^2\,\frac{ds}{s} \right)^{1/2} \, .
 \end{align*}
 By scaling, the right hand side above does not depend on $t>0$ anymore and, since $\wti{B}_{f,0,m}\varphi(y)=0$, we get the bound  
 \begin{align*}
  	\wti{B}_{f,m}^*\varphi(y) 
  		= \sup_{t> 0}\|\wti{B}_{f,t,m}\varphi(y)\|_\calG 
 		\le \|m_2\|_{L^2(\R_+,\frac{ds}{s})} \left(  \int_0^\infty \|\Phi[m_1(sf)\varphi](y)\|_\calG ^2 \,\frac{ds}{s} \right)^{1/2} .
 \end{align*}
 for the associated maximal operator 
 $\wti{B}_{f,m}^*\varphi(y)\coloneqq \sup_{t\ge 0}\|\wti{B}_{f,t,m}\varphi(y)\|_\calG $. 
 In particular, 
 \begin{align}\label{eq:max-op-fully-0p-valued-1}
 	\|\wti{B}_{f,m}^*\varphi\|_{L^2(Y,dy)}^2
 		& \le \|m_2\|_{L^2(\R_+,\frac{ds}{s})}^2 \,\
 				\int_{Y}   \int_0^\infty \|\Phi[m_1(sf)\varphi](y)\|_\calG ^2 \,\frac{ds}{s} \, dy\, .
 \end{align}
 Interchanging the integrals, the last factor on the right hand side of  
 \eqref{eq:max-op-fully-0p-valued-1} is given by 
 \begin{align*}
 	\int_0^\infty &\int_{Y}  \|\Phi[m_1(sf)\varphi](y)\|_\calG ^2 \, \, dy\, \frac{ds}{s}
 		= 
 			\int_0^\infty \|\Phi[m_1(sf)\varphi]\|_{L^2(Y,\calG)}^2 \, \frac{ds}{s} \\
 		&=  \int_0^\infty \|m_1(sf)\varphi\|_{L^2(X,\calH)}^2 \, \frac{ds}{s} 
 			= \int_{X} \int_0^\infty \big\la 
 					m_1(sf(x))\varphi(x), m_1(sf(x))\varphi(x)  
 				\big\ra_\calH  \, \frac{ds}{s} \, dx \\
 		&= \int_{X} \big\la 
 				\varphi(x), \int_0^\infty  m_1(sf(x))^2\, \frac{ds}{s}
 					\varphi(x)
 			  \big\ra_\calH   \, dx \, .
 \end{align*}
 As functions of the real variable $r\ge 0$ the scaling invariance of the measure $ds/s$ on $\R_+ $ and $m_1(0)=0$ gives 
  $ \int_0^\infty  m_1(sr)^2\, \frac{ds}{s} = \|m_1\|_{L^2(\R_+,\frac{ds}{s})}^2\id_{\{r>0\}}$,  so the spectral theorem implies 
 \begin{align*}
 	\big\la 
 		\varphi(x), \int_0^\infty  m_1(sf(x))^2\, \frac{ds}{s}
 					\varphi(x)
 	\big\ra_\calH 
 		&= 
 		\|m_1\|_{L^2(\R_+,\frac{ds}{s})}^2
 		 \big\la 
 			\varphi(x), \id_{\{f(x)>0\}} \varphi(x)
 		 \big\ra_\calH  \\
 		&\le \|m_1\|_{L^2(\R_+,\frac{ds}{s})}^2
 		 		\|\varphi(x)\|_\calH^2 \, .
 \end{align*}
 Using this in \eqref{eq:max-op-fully-0p-valued-1} shows 
 \begin{align}\label{eq:max-op-fully-0p-valued-2}
	\|\wti{B}_{f,m}^*\varphi\|_{L^2(Y,dy)} 
		\le \|m_1\|_{L^2(\R_+,\frac{ds}{s})} \|m_2\|_{L^2(\R_+,\frac{ds}{s})}^2
 		 		\|\varphi\|_{L^2(X,\calH)} \,. 
 \end{align}
 which proves \eqref{eq:bound-B-fully-operator-valued}, since $\|\wti{B}_{f,g,m}\varphi(y)\|_\calG\le \wti{B}_{f,m}^*\varphi(y)$ for all $y\in Y$.
\end{proof}

\bigskip
\appendix
\setcounter{section}{0}
\renewcommand{\thesection}{\Alph{section}}
\renewcommand{\theequation}{\thesection.\arabic{equation}}
\renewcommand{\thetheorem}{\thesection.\arabic{theorem}}

\section{Induction in dimension}\label{sec:induction-in-dimension}
In this section we prove Theorem \ref{thm:CLR-op-valued}, that is, we prove that the number of negative bound states of $P^2 \otimes \idG + V$ is bounded by  
 \begin{align*}
 	N(P^{2}\otimes\idG+V)
			\le C_{0,d}^{\mathrm{op}}\, 
			\frac{|B_1^d|}{(2\pi)^d} \int_{\R^d} \tr_\calG[ V_-(x)^{\frac{d}{2}}]\, dx
 \end{align*}
 and, moreover, 
 \begin{align*}
 	C_{0,d}^{\mathrm{op}} = \min_{3\leq n \leq d} C_{0,n}^{\mathrm{op}} \leq \min_{3\leq n \leq d} C_n,
 \end{align*} 
 where $C_n$ is given by \eqref{eq:C-gamma} for $\gamma=n$. Here, $V:\R^d\to \calB(\calG)$ is an operator valued potential with positive part $V_+\in L^1_{\text{loc}}(\R^d,\calB(\calG))$ and negative part  $V_-\in L^{d/2}(\R^d, \calS_{d/2}(\calG))$.

In order to do this, we need the following operator-valued 
extension of the well-known Lieb--Thirring bounds for suitable 
moments $\theta$:
\begin{align}\label{eq:LT-op-valued}
	\tr_{L^2(\R^d,\calG)} \big[ P^2 \otimes \idG + V \big]_-^{\theta} \leq L_{\theta, d}^{\mathrm{op}} \int_{\R^d} \tr_{\calG} \big[V_-(x)^{\theta+\frac{d}{2}}\big] \, dx,
\end{align}
where $L_{\theta, d}^{\mathrm{op}} = C_{\theta, d}^{\mathrm{op}} \, L_{\theta, d}^{\mathrm{cl}}$ with the classical Lieb--Thirring constant
\begin{align}
	L_{\theta, d}^{\mathrm{cl}} = \int_{\R^d} (1-\eta^2)_+^{\theta}\, \frac{d\eta}{(2\pi)^d}.
\end{align}
It is important that the constant $L_{\theta, d}^{\mathrm{op}}$, respectively, $C_{\theta, d}^{\mathrm{op}}$ does not depend on the auxiliary Hilbert space $\calG$.

The bound \eqref{eq:LT-op-valued} was first proven in the seminal work of Laptev and Weidl \cite{LaWe} for all dimensions $d\in \N$ and moments 
$\theta \geq \frac{3}{2}$, moreover, they showed  
$C_{\theta, d}^{\mathrm{op}}=1$ in this case. This was later simplified in \cite{BeLo}. For moments $\theta\ge \frac{1}{2}$ and again all dimensions $d\in\N$ the bound \eqref{eq:LT-op-valued} was shown to hold in \cite{HLW}, moreover, 
$C_{\theta, d}^{\mathrm{op}}\le 2$ for $\frac{1}{2}\le \theta<\frac{3}{2}$, see also \cite{DLL} and, recently, \cite{FHJN} for improvements when $\theta=1$. The limiting case $\theta=0$, that is, the operator--valued version of the CLR bound was then proven in \cite{Hu}, with improvements on the constant later in \cite{FLS}. 

The possibility that a bound of the form \ref{eq:LT-op-valued} allows to strip off one dimension in the Lieb--Thirring bounds was crucially used in Laptev--Weidl \cite{LaWe}, see also \cite{La}. The possibility of stripping off more than one dimension was realized in \cite{Hu}.

In the short proof below, which we give for the convenience of the reader, we follow the discussion in \cite{Hu}. 

\begin{lemma}\label{lem:submult} For $n\leq d$ we have 
\begin{align*}
	C_{\theta,d}^{\mathrm{op}} \leq C_{\theta,n}^{\mathrm{op}} C_{\theta+\frac{n}{2},d-n}^{\mathrm{op}}.
\end{align*}
In particular, for $d\geq 3$,
\begin{align*}
	C_{0,d}^{\mathrm{op}} \leq C_{0,n}^{\mathrm{op}} \quad \text{for all } 3 \leq n \leq d.
\end{align*} 
\end{lemma}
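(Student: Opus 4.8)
The plan is to prove the submultiplicativity estimate $C_{\theta,d}^{\mathrm{op}} \leq C_{\theta,n}^{\mathrm{op}} C_{\theta+\frac{n}{2},d-n}^{\mathrm{op}}$ by splitting the momentum variable $\eta \in \R^d$ as $\eta = (\eta', \eta'')$ with $\eta' \in \R^n$ and $\eta'' \in \R^{d-n}$, so that $P^2 = (P')^2 + (P'')^2$ on $L^2(\R^d,\calG) = L^2(\R^{d-n}; L^2(\R^n,\calG))$. Viewing the operator $P^2 \otimes \idG + V$ as acting in the $\eta''$-variable with operator-valued potential $x'' \mapsto (P')^2 \otimes \idG + V(\cdot, x'')$ on the new auxiliary Hilbert space $\widetilde{\calG} = L^2(\R^n,\calG)$, I would apply the Lieb--Thirring bound \eqref{eq:LT-op-valued} in dimension $d-n$ at moment $\theta$. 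This gives
\begin{align*}
	\tr_{L^2(\R^d,\calG)}\big[P^2\otimes\idG+V\big]_-^\theta
	\le L_{\theta,d-n}^{\mathrm{op}} \int_{\R^{d-n}} \tr_{\widetilde{\calG}}\Big(\big[(P')^2\otimes\idG + V(\cdot,x'')\big]_-^{\theta+\frac{n}{2}}\Big)\, dx''.
\end{align*}
Here one must be careful that the negative part in the integrand is taken in $\widetilde{\calG} = L^2(\R^n,\calG)$, i.e.\ it is the $(\theta+\tfrac n2)$-th moment of the negative eigenvalues of an $n$-dimensional Schr\"odinger operator with $\calG$-valued potential.

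Next I would bound that inner trace, for each fixed $x''$, by the $n$-dimensional Lieb--Thirring bound \eqref{eq:LT-op-valued} at moment $\theta+\tfrac n2$, applied to the operator $(P')^2\otimes\idG + V(\cdot,x'')$ on $L^2(\R^n,\calG)$:
\begin{align*}
	\tr_{\widetilde{\calG}}\Big(\big[(P')^2\otimes\idG+V(\cdot,x'')\big]_-^{\theta+\frac n2}\Big)
	\le L_{\theta+\frac n2,\,n}^{\mathrm{op}} \int_{\R^n} \tr_\calG\big(V_-(x',x'')^{\theta+\frac n2+\frac{n}{2}}\big)\, dx'.
\end{align*}
Wait --- the exponent here should be $\theta + \tfrac n2 + \tfrac n2$? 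No: the LT bound in dimension $n$ at moment $\mu$ produces the exponent $\mu + \tfrac n2$; with $\mu = \theta+\tfrac n2$ that is $\theta + n$, which is wrong. So in fact the correct grouping is the reverse: one first strips $n$ dimensions with the \emph{ground-state} (CLR-type, moment $0$) character, or more precisely one applies \eqref{eq:LT-op-valued} in dimension $n$ at moment $\theta$ \emph{first}, producing exponent $\theta+\tfrac n2$ and a remaining $(d-n)$-dimensional operator, then applies it in dimension $d-n$ at moment $\theta+\tfrac n2$. I would therefore redo the splitting in that order: treat $P^2\otimes\idG+V$ as an $n$-dimensional operator in $\eta'$ with $L^2(\R^{d-n},\calG)$-valued potential, apply \eqref{eq:LT-op-valued} with $(\theta,n)$, then for fixed $x'$ apply \eqref{eq:LT-op-valued} with $(\theta+\tfrac n2,\,d-n)$ to the $(d-n)$-dimensional operator $(P'')^2\otimes\idG + V(x',\cdot)$ on $L^2(\R^{d-n},\calG)$. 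Multiplying the two constants and using Fubini gives $L_{\theta,d}^{\mathrm{op}} \le L_{\theta,n}^{\mathrm{op}} L_{\theta+\frac n2,\,d-n}^{\mathrm{op}}$, and dividing through by the classical constants --- which satisfy the exact identity $L_{\theta,d}^{\mathrm{cl}} = L_{\theta,n}^{\mathrm{cl}} L_{\theta+\frac n2,\,d-n}^{\mathrm{cl}}$ by the one-dimensional Fubini factorization $\int_{\R^d}(1-\eta^2)_+^\theta\,\tfrac{d\eta}{(2\pi)^d} = \int_{\R^n}\big(\int_{\R^{d-n}}(1-|\eta'|^2-|\eta''|^2)_+^\theta\,\tfrac{d\eta''}{(2\pi)^{d-n}}\big)\tfrac{d\eta'}{(2\pi)^n}$ together with the beta-integral identity $(1-|\eta'|^2)_+^{\theta+\frac n2}$ arising from the $\eta''$-integration --- yields the claimed submultiplicativity $C_{\theta,d}^{\mathrm{op}} \le C_{\theta,n}^{\mathrm{op}} C_{\theta+\frac n2,\,d-n}^{\mathrm{op}}$.

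For the ``in particular'' statement, I take $\theta = 0$ and note $C_{\frac n2,\,d-n}^{\mathrm{op}} \le 1$: indeed $\tfrac n2 \ge \tfrac 32$ whenever $n \ge 3$, and for such moments the Laptev--Weidl bound \cite{LaWe} gives $C_{\theta,d}^{\mathrm{op}} = 1$ (and in any case $\le 1$); if $d = n$ the second factor is vacuously $1$. Hence $C_{0,d}^{\mathrm{op}} \le C_{0,n}^{\mathrm{op}}$ for every $3 \le n \le d$. The main obstacle I anticipate is purely bookkeeping: correctly identifying, at each stage, which Hilbert space plays the role of the ``auxiliary'' space $\calG$ in \eqref{eq:LT-op-valued} (it is crucial that the constant there is independent of that space, which is exactly why the iteration closes), and verifying that the operator $[(P')^2\otimes\idG + V(\cdot,x'')]_-$ is measurable and trace-class in $x''$ so that the iterated application of \eqref{eq:LT-op-valued} and Fubini--Tonelli is legitimate; these points are handled as in \cite{LaWe,Hu}, whose treatment I would follow.
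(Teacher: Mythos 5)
Your corrected argument coincides with the paper's: strip off $n$ dimensions first via the operator-valued Lieb--Thirring bound \eqref{eq:LT-op-valued} at moment $\theta$, using the lower quadratic-form bound $P^2\otimes\idG+V\geq P_<^2\otimes\id_{L^2(\R^{d-n},\calG)}-W(x')$ with $W(x')=\big((P'')^2\otimes\idG+V(x',\cdot)\big)_-$, then for a.e.\ fixed $x'$ apply \eqref{eq:LT-op-valued} again in dimension $d-n$ at moment $\theta+\tfrac n2$, combine via Fubini--Tonelli, and divide by the factored classical constant; the ``in particular'' clause is, as you say, exactly the Laptev--Weidl fact $C_{\theta,m}^{\mathrm{op}}=1$ for $\theta\ge\tfrac32$. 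One small bookkeeping slip at the end: integrating out $\eta''\in\R^{d-n}$ by scaling produces $(1-|\eta'|^2)_+^{\theta+\frac{d-n}{2}}$, not $(1-|\eta'|^2)_+^{\theta+\frac n2}$; to read off $L_{\theta,d}^{\mathrm{cl}}=L_{\theta,n}^{\mathrm{cl}}L_{\theta+\frac n2,d-n}^{\mathrm{cl}}$ directly one should integrate out the $n$-dimensional variable first (as the paper does), though of course either order yields a correct and equivalent factorization.
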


\begin{proof}
For $n\leq d$ we factor $\R^d = \R^n \times \R^{d-n}$, that is, $x = (x_<, x_>) \in \R^n \times \R^{d-n}$, and split the the kinetic energy as $P^2 = P^2_< + P^2_>$, more precisely,
\begin{align*}
P^2 = P_<^2 \otimes \id_{L^2(\R^{d-n})} + \id_{L^2(\R^n)} \otimes P_>^2.
\end{align*}
Moreover, observe that 
\begin{align*}
	L^2(\R^d, \calG) = L^2(\R^d) \otimes \calG = L^2(\R^n) \otimes L^2(\R^{d-n}) \otimes \calG = L^2(\R^n, L^2(\R^{d-n}\otimes \calG)).
\end{align*}
As quadratic forms on $L^2(\R^d, \calG)$, we then have 
\begin{align}\label{eq:induction-splitting}
\begin{split}
	P^2 \otimes \idG + V(x) &= P_<^2 \otimes \id_{L^2(\R^{d-n})} \otimes \idG + \id_{L^2(\R^n)} \otimes P_>^2 \otimes \idG + V(x_<, x_>) \\
	&\geq P_<^2 \otimes \id_{L^2(\R^{d-n},\calG)} -  W(x_<)
\end{split}
\end{align}
with the operator-valued potential $W(x_<)= \big(P_>^2 \otimes \idG + V(x_<, \cdot) \big)_- : L^2(\R^{d-n},\calG) \to L^2(\R^{d-n},\calG)$. Note that $W(x_<)$ is the negative part of a Schr\"odinger operator in $d-n$ dimensions where one freezes the $x_<$ coordinate in the potential. Inequality \eqref{eq:LT-op-valued} can therefore be applied and yields
\begin{align*}
	\tr_{L^2(\R^{d-n}, \calG)} W(x_<)^{\theta+\frac{n}{2}} &= \tr_{L^2(\R^{d-n}, \calG)} \big(P_>^2 \otimes \idG + V(x_<, 
	\cdot) \big)_-^{\theta+\frac{n}{2}} \\
	&\leq L_{\theta+\frac{n}{2},d-n}^{\mathrm{op}} \int_{\R^{d-n}} \tr_{\calG} V_- (x_<, x_>)^{\theta+\frac{d}{2}} \,dx_<.
\end{align*}
Since by assumption $\int_{\R^{d}} \tr_{\calG} V_- (x)^{\theta+\frac{d}{2}} \, dx <\infty$, the Fubini--Tonelli theorem shows that $W(x_<)$ is compact (even in the von Neumann--Schatten ideal $\calS_{\theta+\frac{n}{2}}(L^2(\R^{d-n},\calG))$) for almost all $x_<\in\R^n$.
Taking traces in inequality \eqref{eq:induction-splitting} gives the estimate
\begin{align*}
	\tr_{L^2(\R^d,\calG)} \big(P^2 \otimes \idG + V \big)_- ^{\theta} 
	&\leq \tr_{L^2(\R^n,L^2(\R^{d-n},\calG))} \big(P_<^2 \otimes \id_{L^2(\R^d-n,\calG)} - W \big)_- ^{\theta} \\
	&\leq L_{\theta, n}^{\mathrm{op}} \int_{\R^n} \tr_{L^2(\R^{d-n},\calG)} W(x_<)^{\theta+\frac{n}{2}}\, dx_< \\
	&\leq L_{\theta, n}^{\mathrm{op}} L_{\theta+\frac{n}{2}, d-n}^{\mathrm{op}} \int_{\R^d} \tr_{\calG} V_-(x)^{\theta+\frac{d}{2}}\, dx
\end{align*}
where we also used the operator-valued Lieb-Thirring inequality \eqref{eq:LT-op-valued} and combined the integrals using the Fubini--Tonelli theorem. 
It follows that 
\begin{align}\label{eq:induction-L}
	L_{\theta,d}^{\mathrm{op}}  \leq L_{\theta, n}^{\mathrm{op}} L_{\theta+\frac{n}{2}, d-n}^{\mathrm{op}}.
\end{align}
A short calculation, see below, shows 
\begin{align}\label{eq:induction-L-classical}
L_{\theta,d}^{\mathrm{cl}} = L_{\theta, n}^{\mathrm{cl}} L_{\theta+\frac{n}{2}, d-n}^{\mathrm{cl}},
\end{align}
so \eqref{eq:induction-L} and the definition of $C_{\theta, d}^{\mathrm{op}}$ imply the sub-multiplicativity
\begin{align*}
		C_{\theta,d}^{\mathrm{op}} \leq C_{\theta,n}^{\mathrm{op}} C_{\theta+\frac{n}{2},d-n}^{\mathrm{op}}.
\end{align*}
which proves is the first claim of Lemma \ref{lem:submult}. 
In particular, for $\theta=0$ and $3\leq n \leq d-1$, we get
\begin{align*}
	C_{0,d}^{\mathrm{op}} \leq C_{0,n}^{\mathrm{op}} C_{\frac{n}{2},d-n}^{\mathrm{op}} = C_{0,n}^{\mathrm{op}}
\end{align*}
since Laptev--Weidl \cite{LaWe} showed $C_{\theta,m}^{\mathrm{op}} = 1$ if $m\in\N$ and $\theta \geq \frac{3}{2}$. This proves the second claim in Lemma \ref{lem:submult}.

It remains to show \eqref{eq:induction-L-classical}, which follows from the definition of the classical Lieb--Thirring constant and the Fubini--Tonelli Theorem:
\begin{align*}
	L_{\theta,d}^{\mathrm{cl}} &= \int_{\R^d} (1-\eta^2)_+^{\theta} \frac{d\eta}{(2\pi)^d} = \iint_{\R^n \times \R^{d-n}} (1-\eta_<^2-\eta_>^2)_+^{\theta} \frac{d\eta_< \, d\eta_>}{(2\pi)^n (2\pi)^{d-n}} \\
	&= \int_{\R^{d-n}} \int_{\R^n} (1-\eta_>)^{\theta+\frac{n}{2}} (1-\xi^2)_+^{\theta} \frac{d\xi}{(2\pi)^n} \frac{d\eta_>}{(2\pi)^{d-n}} = L_{\theta,n}^{\mathrm{cl}} L_{\theta+\frac{n}{2},d-n}^{\mathrm{cl}}
\end{align*}
The third equality follows from a straightforward scaling argument. 
\end{proof}

\begin{proof}[Proof of Theorem \ref{thm:CLR-op-valued}] 
Lemma \ref{lem:submult} shows that 
\begin{align*}
	C_{0,d}^{\mathrm{op}} \leq \min_{3\leq n\leq d} C_{0,n}^{\mathrm{op}}
\end{align*}
	and the reverse inequality clearly holds. 
	Moreover, the case $\alpha=1$ in Theorem \ref{thm:poly-operator-valued} shows the bound 
	\begin{align*}
		C_{0,n}^{\mathrm{op}} \leq C_n
	\end{align*}
	with the constant $C_{\gamma=n}$ from \eqref{eq:C-gamma}.
\end{proof}


\section{Auxiliary bounds for the operator-valued case}
 In this appendix we gather three results, which we needed for extending our method from the scalar case to the operator-valued case. This results are probably well-known to specialist, we give short proves for the convenience of the reader. 
 
 First we consider operators of the form $A^* A$ and $A A^*$ for some bounded operator $A:\calH\to \calG$, where $\calH, \calG$ 
 are two auxiliary (separable) Hilbert spaces. Let 
 $N(A)= \{ f\in \calH:\, Af=0\}\subset \calH$ be the null space of 
 $A$, $N(A^*)= \{ g\in \calG:\, A^*g=0\}\subset \calG$ the null space 
 of the adjoint $A^*:\calG\to\calH$,  and 
 $N(A)^\perp \coloneqq \{g\in\calG:\, \la g, Af\ra_\calG=0\}\subset \calG$, 
 respectively  $N(A^*)^\perp \coloneqq \{f\in\calH:\, \la f, A^*g\ra_\calH=0\}\subset\calH$,
 the orthogonal complement of $N(A)$ in $\calG$, respectively $N(A^*)$ in $\calH$. 

\begin{lemma}\label{lem:unitary-equivalence}
	Let $\calH,\calG$ be Hilbert spaces and $A:\calH\to \calG$ be 
	a bounded operator. Then $A^*A\big|_{N(A^*)^\perp}$ is unitarily equivalent to 
	$A A^*\big|_{N(A)^\perp}$. In particular, if $A:\calH\to\calG$ is compact, then its non-zero singular values, including multiplicities,  are the same as the non-zero singular values of $A^*:\calG\to\calH$. 
\end{lemma}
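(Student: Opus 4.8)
The plan is to deduce everything from the polar decomposition of the bounded operator $A$. Write $A = U|A|$, where $|A| = (A^*A)^{1/2}$ is the non-negative square root of $A^*A$ on $\calH$ and $U\colon \calH \to \calG$ is the partial isometry with initial space $\overline{\Ran|A|} = \overline{\Ran A^*} = N(A)^\perp$ and final space $\overline{\Ran A} = N(A^*)^\perp$, so that $U^*U$ is the orthogonal projection of $\calH$ onto $N(A)^\perp$ and $UU^*$ the orthogonal projection of $\calG$ onto $N(A^*)^\perp$. First I would record the elementary fact $N(A^*A) = N(A)$ (if $A^*Af = 0$ then $\|Af\|_\calG^2 = \la A^*Af, f\ra_\calH = 0$), hence $\overline{\Ran(A^*A)} = N(A)^\perp$, and symmetrically $\overline{\Ran(AA^*)} = N(A^*)^\perp$; this is what lets us view $A^*A$ and $AA^*$, after dropping their kernels, as operators on the Hilbert spaces $N(A)^\perp \subseteq \calH$ and $N(A^*)^\perp \subseteq \calG$ (which is how I read the restrictions in the statement).

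Next I would use the identities $|A| = U^*A$ and $|A|^2 = A^*A$ to compute $AA^* = U|A|^2U^* = U(A^*A)U^*$, and conversely $A^*A = U^*(AA^*)U$. Since $\Ran A^*$ is invariant under the self-adjoint operator $A^*A$ (because $A^*A(A^*g) = A^*(AA^*g) \in \Ran A^*$), the closed subspace $N(A)^\perp = \overline{\Ran A^*}$ reduces $A^*A$, and likewise $N(A^*)^\perp$ reduces $AA^*$. Restricting $U$ to $N(A)^\perp$ then gives a unitary $V\colon N(A)^\perp \to N(A^*)^\perp$ with $V\bigl(A^*A\big|_{N(A)^\perp}\bigr)V^* = AA^*\big|_{N(A^*)^\perp}$, i.e.\ the claimed unitary equivalence.

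For the ``in particular'' part, if $A$ is compact then so are $A^*A$ and $AA^*$; their non-zero eigenvalues, with multiplicities, are precisely the squares of the non-zero singular values of $A$, respectively of $A^*$, and the unitary equivalence just obtained shows these coincide. Alternatively one can argue directly and more explicitly: for $\lambda > 0$ the map $\varphi \mapsto \lambda^{-1/2}A\varphi$ is an isometry of the $\lambda$-eigenspace of $A^*A$ onto the $\lambda$-eigenspace of $AA^*$ — isometric because $\|A\varphi\|^2 = \la A^*A\varphi, \varphi\ra = \lambda\|\varphi\|^2$, and onto with inverse $\psi \mapsto \lambda^{-1/2}A^*\psi$ — which already yields the statement about singular values and multiplicities.

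This lemma is essentially bookkeeping and I do not anticipate a real obstacle; the only points deserving a line of care are checking that the stated subspaces genuinely reduce the operators (so that the restrictions make sense and $U$ intertwines them), and the decision to handle the general, not necessarily compact, case. If one only needs compact $A$ — which is all that is used later — the eigenspace isometry in the previous paragraph is self-contained and bypasses the polar decomposition altogether.
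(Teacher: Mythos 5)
Your proof is correct and follows essentially the same route as the paper: polar decomposition $A = U|A|$, the identity $AA^* = U(A^*A)U^*$, and restricting $U$ to a unitary between $\overline{\Ran A^*}$ and $\overline{\Ran A}$. Two small things are worth noting. First, you are actually more careful than the paper: you explicitly verify that $N(A)^\perp = \overline{\Ran A^*}$ reduces $A^*A$ (and symmetrically for $AA^*$), which is needed for the restrictions to make sense as operators on those subspaces, and the paper leaves this implicit. Second, you silently correct what appears to be a typo in the lemma (and in the paper's own proof): with $A\colon\calH\to\calG$, the operator $A^*A$ lives on $\calH$ and should be restricted to $N(A)^\perp\subset\calH$, while $AA^*$ lives on $\calG$ and should be restricted to $N(A^*)^\perp\subset\calG$; your conclusion $V\bigl(A^*A\big|_{N(A)^\perp}\bigr)V^* = AA^*\big|_{N(A^*)^\perp}$ is the correct form. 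Your alternative direct argument via the eigenspace isometry $\varphi\mapsto\lambda^{-1/2}A\varphi$ for $\lambda>0$ is a nice self-contained shortcut for the compact case (which is all the paper actually uses), bypassing the polar decomposition entirely.
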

\begin{remark}
	In Theorem 3 in \cite{Deift} a stronger result, which allows 
	for unbounded operators is proven, we need it only for bounded 
	operators $A:\calH\to \calG$.
\end{remark}
\begin{proof}
 The polar decomposition, e.g., Theorem VI.10 in \cite{RS1}, of a 
 bounded operator easily extends to a two Hilbert space situation: 
 For a bounded operator $A:\calH\to\calG$  there exists a partial isometry 
 $U:\calH\to\calG$ with $N(U)= N(A)$ and range 
 $\mathrm{Ran}(U)=\ol{\mathrm{Ran}(A)}$, and a symmetric operator $|A|$ with 
 $|A|^2= A^* A$ such that $A=U|A|$. 
 \smallskip
 
 Moreover, $U:\ol{\mathrm{Ran}(A^*)}=N(A)^\perp\to \ol{\mathrm{Ran}(A)}= N(A^*)^\perp$ 
 is an isometry, and 
 \begin{align*}
 	A A^* = U|A|^2U^*= U A^* A U^* \, ,  	
 \end{align*}
 so $A A^*\vert_{N(A)^\perp}$ is unitarily equivalent to 
 $A^* A\vert_{N(A)^\perp}$.   
 
 Since the singular values of $A$ are the square roots of the 
 eigenvalues of $A^*A$ and the singular values of $A^*$ the square 
 roots of the eigenvalues of $AA^*$, the last claim in Lemma \ref{lem:unitary-equivalence} is evident from 
 the unitary equivalence above.  
\end{proof}

 Given a Hilbert space $\calH$ and a $\sigma$-finite measure space $(X,dx)$ 
 we denote by $L^p(X,\calH)$ the space of measurable functions 
 $f:X\to \calH$ for which 
 \begin{align}
 	\|f\|_{p}\coloneqq \|f\|_{L^p(X,\calH)}
 		\coloneqq
 			\left( \int_{X} \|f(x)\|_{\calH}^p\, dx \right)^{1/p} <\infty\, ,
 \end{align}
 when $1\le p<\infty$, respectively,   
 \begin{align}
 	\|f\|_{\infty}\coloneqq \|f\|_{L^\infty(X,\calH)}
 		\coloneqq \esssup_{x\in X } \|f(x)\|_\calH <\infty\, ,
 \end{align}
 when $p=\infty$. Since $\calH$ is assumed to be separable, 
 Pettis' measurability theorem \cite{Pettis}, see also \cite{DiUhl}, shows that the weak and strong 
 notions of measurability for functions $X\ni x\mapsto f(x)$ 
 coincide.  
 If $\calH=\C$, we simply write $L^p(X,\C)= L^p(X)$. 
 Moreover, we denote by 
 $\calS_2(L^2(X,\calH), L^2(Y,\calG))$, the space of 
 Hilbert--Schmidt operators 
 $H:L^2(X,\calH)\to  L^2(Y,\calG)$ with scalar-product 
 \begin{align}
 	\la H_1, H_2  \ra_{\calS_2} 
 		\coloneqq 
 			\tr_{L^2(X,\calH)}\left[ H_1^* H_2 \right]
 \end{align}
 and associated norm $\|H\|\calS_2\coloneqq \la H, H  \ra_{\calS_2}^{1/2} $
 and by $L^2(Y\times X,\calS_2(\calH,\calG)) $, the 
	$L^2$--space of operator-valued kernels 
	$K: Y\times X\to \calS_2(\calH,\calG)$ with scalar product  
	\begin{align*}
		\la K_1,K_2  \ra_{L^2(Y\times X,\calS_2(\calH,\calG))} 
		&\coloneqq 
			\iint_{Y\times X} \|K(y,x)\|_{\calS_2(\calH,\calG)}^2\, dy dx \\
		&=  \iint_{Y\times X} \|K(y,x)\|_{\calS_2(\calH,\calG)}^2\, dy dx	
	\end{align*}
 The next result extends the well-known one-to-one correspondence 
 of Hilbert--Schmidt operators from $L^2(X)$ to $L^2(Y)$ with 
 kernels in  $ L^2(Y\times X)$ to the operator-valued setting.

\begin{lemma}\label{lem:hilbert-schmidt}
  Let $(X,dx)$ and $(Y,dy)$ be $\sigma$-finite measure spaces and 
  $\calH,\calG$ two auxiliary Hilbert spaces. 
  Then $\calS_2(L^2(X,\calH), L^2(Y,\calG))$ 
  is isomorphic to $L^2(Y\times X,\calS_2(\calH,\calG)) $, that is, for any 
  $H\in \calS_2(L^2(X,\calH), L^2(Y,\calG))$ there exists a unique 
  $K_H\in L^2(Y\times X,\calS_2(\calH,\calG)) $ such that for any 
  $f\in L^2(X,\calH)$  and almost all $y\in Y$  
  \begin{align*}
		Hf(y) = \int_{X} K_H(y,x) f(x)\, dx
  \end{align*}
 and vice versa. Moreover, the Hilbert--Schmidt norm of $H\in \calS_2(L^2(X,\calH), L^2(Y,\calG))$ can be calculated as 
 \begin{align*}
 	\|H\|_{\calS_2}^2 = \iint_{Y\times X} \tr_{\calH}\left[ K_H(y,x)^*K_H(y,x)\right]\, dx dy \, . 
 \end{align*}
\end{lemma}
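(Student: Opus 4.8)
The plan is to reduce everything to the classical scalar statement that Hilbert--Schmidt operators from $L^2(X)$ to $L^2(Y)$ are exactly the integral operators with kernels in $L^2(Y\times X)$, by diagonalizing the auxiliary spaces. Since $(X,dx)$ and $(Y,dy)$ are $\sigma$-finite, $L^2(X)$ and $L^2(Y)$ are separable, and $\calH,\calG$ are separable by assumption; fix orthonormal bases $(e_m)_m$ of $\calH$ and $(\phi_n)_n$ of $\calG$. Then $L^2(X,\calH)$ decomposes as $\bigoplus_m L^2(X)$ via $f\mapsto (f_m)_m$, $f_m(x)\coloneqq\la e_m,f(x)\ra_\calH$, and likewise $L^2(Y,\calG)\cong\bigoplus_n L^2(Y)$. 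A bounded operator $H:L^2(X,\calH)\to L^2(Y,\calG)$ then corresponds to an operator matrix $(H_{nm})_{n,m}$ with $H_{nm}:L^2(X)\to L^2(Y)$, $H_{nm}u\coloneqq\la\phi_n, H(u\,e_m)\ra_\calG$, and $H$ is Hilbert--Schmidt if and only if $\sum_{n,m}\|H_{nm}\|_{\calS_2(L^2(X),L^2(Y))}^2<\infty$, in which case this sum equals $\|H\|_{\calS_2}^2$.

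For the forward direction I would invoke the scalar result to write each $H_{nm}$ as an integral operator with kernel $k_{nm}\in L^2(Y\times X)$ and $\|H_{nm}\|_{\calS_2}^2=\iint_{Y\times X}|k_{nm}(y,x)|^2\,dy\,dx$. Since $\sum_{n,m}\iint|k_{nm}|^2=\|H\|_{\calS_2}^2<\infty$, Tonelli's theorem shows that for almost every $(y,x)$ the sequence $(k_{nm}(y,x))_{n,m}$ lies in $\ell^2$; off the exceptional null set I would define $K_H(y,x)\in\calS_2(\calH,\calG)$ by $\la\phi_n,K_H(y,x)e_m\ra_\calG\coloneqq k_{nm}(y,x)$ (and $K_H=0$ on the null set). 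By construction $\|K_H(y,x)\|_{\calS_2(\calH,\calG)}^2=\sum_{n,m}|k_{nm}(y,x)|^2$, so integrating and using Tonelli once more yields $K_H\in L^2(Y\times X,\calS_2(\calH,\calG))$ with $\|K_H\|_{L^2}^2=\|H\|_{\calS_2}^2$; measurability of $(y,x)\mapsto K_H(y,x)$ as an $\calS_2(\calH,\calG)$-valued map follows from Pettis' theorem, since all its matrix coefficients are measurable and $\calS_2(\calH,\calG)$ is a separable Hilbert space. The representation $Hf(y)=\int_X K_H(y,x)f(x)\,dx$ is then checked componentwise: the $n$-th component of the right-hand side is $\sum_m\int_X k_{nm}(y,x)f_m(x)\,dx=\sum_m (H_{nm}f_m)(y)=(Hf)_n(y)$, where the interchange of $\sum_m$ with $\int_X$ is justified by Cauchy--Schwarz together with the $\ell^2$-summability just established (dominated convergence in $L^2(Y)$).

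The converse runs in reverse: given $K\in L^2(Y\times X,\calS_2(\calH,\calG))$ I would expand it in matrix coefficients $k_{nm}$, apply the scalar result to build operators $H_{nm}$, reassemble $H$, and obtain the norm identity by the same Tonelli bookkeeping. Uniqueness of $K_H$ is immediate: if $K$ induces the zero operator, testing against $f=u\,e_m$ with $u\in L^2(X)$ arbitrary and reading off the $\phi_n$-component gives, by scalar uniqueness of kernels, $k_{nm}=0$ a.e.\ for all $n,m$, hence $K=0$ a.e.\ in $\calS_2(\calH,\calG)$. The Hilbert--Schmidt norm formula in the statement is then just $\|H\|_{\calS_2}^2=\|K_H\|_{L^2}^2=\iint_{Y\times X}\tr_\calH[K_H(y,x)^*K_H(y,x)]\,dx\,dy$, using $\|K_H(y,x)\|_{\calS_2(\calH,\calG)}^2=\tr_\calH[K_H(y,x)^*K_H(y,x)]$. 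The only genuinely delicate point is the measure-theoretic one: merging the countably many exceptional null sets (one per matrix coefficient) into a single null set off which $K_H(y,x)$ is a bona fide Hilbert--Schmidt operator, and justifying the exchange of the basis sum with the $X$-integral in the kernel identity; both are handled by Tonelli's theorem and the summability $\sum_{n,m}\|H_{nm}\|_{\calS_2}^2<\infty$.
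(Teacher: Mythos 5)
Your proof is correct, and it takes a somewhat different route from the paper's. You decompose $H$ into its operator matrix $(H_{nm})_{n,m}$ with respect to orthonormal bases of $\calH$ and $\calG$, use the identity $\|H\|_{\calS_2}^2 = \sum_{n,m}\|H_{nm}\|_{\calS_2(L^2(X),L^2(Y))}^2$, and then invoke the classical scalar kernel theorem (Hilbert--Schmidt operators $L^2(X)\to L^2(Y)$ are exactly integral operators with $L^2(Y\times X)$ kernels) as a black box for each block $H_{nm}$; the operator-valued kernel $K_H$ is reassembled from the scalar kernels $k_{nm}$, with Tonelli providing the bookkeeping and Pettis providing measurability. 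The paper instead works \emph{bottom-up}: it fixes orthonormal bases of all four spaces $L^2(X), L^2(Y), \calH, \calG$, writes $L^2(Y\times X,\calS_2(\calH,\calG))\cong L^2(Y)\otimes L^2(X)\otimes\calS_2(\calH,\calG)$, verifies directly that $K\mapsto H_K$ is an isometry on finite linear combinations of rank-one tensors, and then gets bijectivity by combining closedness of the range of an isometry with density of finite-rank operators in $\calS_2$. Your approach buys conciseness by offloading the hard part to the known scalar result and makes the $\ell^2$-summability structure transparent; the paper's approach is self-contained and avoids quoting the scalar theorem, which is arguably cleaner for a reader who would otherwise want to see that theorem re-proved. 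Both rely on the same underlying separability/orthonormal-basis reduction and on merging countably many null sets, which you correctly flag as the only genuinely delicate measure-theoretic step.
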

\begin{proof}
The proof is a modification of the scalar-valued case. We sketch it for the convenience of the reader. 
 Any kernel $K\in L^2(Y\times X,\calS_2(\calH,\calG)$ yields a 
 bounded operator $H_K: L^2(X,\calH)\to  L^2(Y,\calG)$ by defining  
 \begin{align*}
 	H_Kf(x)\coloneqq \int_{X} K(y,x) f(x)\, dx . 
 \end{align*}
 Indeed, since  
 \begin{align*}
 	\|H_Kf(y)\|_\calG 
 		& \le \int_{X} \|K(y,x) f(x)\|_\calG\, dx
 		\le \int_{X} \|K(y,x)\|_{\calB(\calH,\calG)} \, \|f(x)\|_\calH\, dx \\
 		&\le \left(\int_{X} \|K(y,x)\|_{\calB(\calH,\calG)}^2 \, dx \right)^{1/2} 
 			\|f\|_{L^2(X,\calH)} \, , 
 \end{align*} 
 by Cauchy--Schwarz, we get 
 \begin{equation}
 \begin{split}\label{eq:L2-HS-norm-bound}
 	\|H_Kf\|_{L^2(Y,\calG)}^2 
 		&= \int_{Y} \|Hf(y)\|_\calG^2\, dy 
 			\le \iint_{Y\times X} \|K(y,x)\|_{\calB(\calH,\calG)}^2 \, dx dy 
 					 \, \|f\|_{L^2(X,\calH)}^2 \\
 		&\le \iint_{Y\times X} \|K(y,x)\|_{\calS_2(\calH,\calG)}^2 \, dx dy 
 					 \, \|f\|_{L^2(X,\calH)}^2
 			= \|K\|_{L^2}^2 \|f\|_{L^2(X,\calH)}^2
 \end{split}
 \end{equation}
 since the Hilbert--Schmidt norm bounds the operator norm. So the 
 map $K\mapsto H_K$ from kernels to Hilbert--Schmidt operators is 
 bounded with $\|H_K\|_{\calS_2}\le \|K\|_{L^2}$ and injective.  
 \smallskip
 
 Given two orthonormal bases $(\alpha_m)_{m\in\N}$ of $\calH$ 
 and $(\beta_m)_{m\in\N}$ of $\calG$, the space 
 $S_2(\calH,\calG)$ has a basis given by 
 the rank-one operators    
 $|\beta_m \ra\la\alpha_n|:\calH\to\calG$, 
 $f\mapsto \beta_m \la\alpha_n,f\ra_\calH$. Furthermore, let 
 $(\varphi_j)_{j\in\N}$ and $(\psi_l)_{l\in\N }$ be bases for $L^2(Y)$ 
 and $L^2(X)$. 
 Then  $(\Psi_{l,n})_{l,n\in\N}$, given by the  $\calH$-valued 
 functions $ X\ni x\mapsto \Psi_{l,n}(x) = \psi_l(x) |\alpha_n\ra$, 
 is a basis for $L^2(X,\calH)= L^2(X)\otimes \calH$ and 
 $(\Phi_{l,m})_{k,m\in\N}$, given by the  $\calG$-valued 
 functions $ Y\ni y\mapsto \Phi_{l,n}(y) = \varphi_k(y) |\beta_m\ra$, 
 is a basis for $L^2(Y,\calG)$. 
  Thus any kernel  $K\in L^2(Y\times X, \calS_2(\calH,\calG))
 	= L^2(Y)\otimes L^2(X)\otimes  \calS_2(\calH,\calG)$ can be 
 written in the form 
 \begin{align*}
 	K(y,x) = \sum_{k,l,m,n \in\N} a_{k,l,m,n} \, \varphi_k(y) \ol{\psi_l(x)} |\beta_m \ra\la\alpha_n|
 \end{align*}
 and a short calculation shows 
 \begin{equation}
 	\|K\|_{L^2}^2 
 		= \iint_{Y\times X} \tr\left[ K(y,x)^*K(y,x) \right]\, dx dy
 			= \sum_{k,l,m,n \in\N} |a_{k,l,m,n}|^2 \, .
 \end{equation}
 Let $R\in\N$ and 
 \begin{align}\label{eq:K-R}
 	K_R(y,x) = \sum_{k,l,m,n =1}^R a_{k,l,m,n} \, \varphi_k(y) \ol{\psi_l(x)} |\beta_m \ra\la\alpha_n|\, ,
 \end{align}
 which is the kernel of the finite rank operator 
 \begin{align}\label{eq:operator-K-R}
 	H_{K_L} = \sum_{k,l,m,n =1}^R a_{k,l,m,n} |\Phi_{k,m}\ra\la\Psi_{l,n}| 
 		= \sum_{k,l,m,n =1}^R a_{k,l,m,n} \Phi_{k,m} \la\Psi_{l,n}, \cdot\ra_{L^2(X,\calH)}
 \end{align}
 Since $\|K-K_R\|_{L^2}\to 0$  the bound \eqref{eq:L2-HS-norm-bound} shows $\|H_{K}- H_{K_R}\|\to 0$ as $R\to\infty$, so any $H_K$ is the limit in the operator norm of finite-rank operators, hence a compact operator. Using the basis 
 $(\Psi_{l,n})_{l,n\in\N}$ to calculate the trace, a straightforward  calculation shows 
  \begin{align*}
 	\tr_{L^2(X,\calH)}\left[ H_{K}^* H_{K} \right] 
 		= \sum_{l,n} \| H_{K}\Psi_{l,n}\|_\calG^2 
 		=\sum_{k,l,m,n\in\N } |a_{k,l,m,n}|^2 
 		= \|K\|_{L^2}^2 
 \end{align*}
 so $H_K\in\calS_2(L^2(X,\calH), L^2(Y,\calG))$ and $\|H_K\|_{\calS_2}= \|K\|_{L^2}$. 
 \smallskip
 
 So far we have shown that the map $K\mapsto H_K$ is an isometry from 
 $\in L^2(Y\times X, \calS_2(\calH,\calG))$ into 
 $\calS(L^2(X,\calH), L^2(Y,\calG))$ so its range is closed. The finite 
 rank operators $F:L^2(X,\calH)\to L^2(Y,\calG)$ are of the form 
 \begin{align*}
 	F= \sum_{r,s\in\N } c_{r,s} | \wti{\Phi}_r\ra\la \wti{\Psi}_s |
 		= \sum_{r,s\in\N } c_{r,s} \wti{\Phi}_r \la \wti{\Psi}_s , \cdot\ra_{L^2(X,\calH)}
 \end{align*}
 with $c_{r,s}\not= 0$ for finitely many $r,s\in\N$ and 
 $\wti{\Phi}_r\in L^2(Y,\calG)$, $\wti{\Psi}_s\in L^2(X,\calH) $. 
 Expanding $\wti{\Psi}_s$ in the basis $(\Phi_{l,n})_{l,n\in\N}$  
 and similarly for $\wti{\Phi}_r$, one sees that finite rank operators 
 of the above form  can be arbitrarily well approximated, 
 in operator norm,  by finite rank operators of the form 
 \eqref{eq:operator-K-R}. Since the finite rank operators are dense 
 in the Hilbert--Schmidt operators, the operators of the 
 form \eqref{eq:operator-K-R} are also dense and hence the range of  
 $K\mapsto H_K$ is all of $\calS(L^2(X,\calH), L^2(Y,\calG))$.  
\end{proof}

The last result concerns an operator-valued version of Dunford's theorem. For this we need some more notation. For background on integration in Banach spaces, we refer to \cite{DiUhl}. 

We denote by $\calB(\calH,\calG)$ the Banach space of bounded operators from 
$\calH$ to $\calG$ equipped with the operator norm. 

We write $L^\infty_s(Y\times X,\calB(\calH,\calG))$ for the space of functions $K:Y\times X \to \calB(\calH,\calG)$ such that 
 \begin{align*}
 	\esssup_{(y,x)\in Y\times X} \|K(y,x)\|_{\calB(\calH,\calG)}<\infty,
 \end{align*}
 and for all $h\in\calH$ the map 
 \begin{align*}
 	Y\times X \ni (y,x) \mapsto K(y,x) h \in \calG
 \end{align*}
 is strongly measurable (with respect to the topology on $\calG$). Since $\calG$ is a separable Hilbert space, Pettis' measurability theorem implies that this the case if and only if it is weakly measurable, i.e., for any $\psi\in\calG$,
 \begin{align*}
 		Y\times X \ni (y,x) \mapsto \langle \psi, K(y,x) h \rangle_{\calG}
 \end{align*}
 is measurable. In this case, for $f\in L^1(X,\calH)$, integrals of the form
 \begin{align}\label{eq:PhiK}
 	\Phi_Kf(y)\coloneqq  \int_{X} K(y,x) f(x) \, dx 
 \end{align}
 are well-defined elements in $\calG$ for almost all $y\in Y$, with 
 \begin{align*}
 	\|\Phi_Kf(y)\|_\calG & =\left\| \int_{X} K(y,x) f(x) \, dx \right\|_{\calG} \leq \int_{X} \|K(y,x) f(x)\|_{\calG} \, dx \\
 	&\leq \esssup_{(y,x)\in Y\times X} \|K(y,x)\|_{\calB(\calH,\calG)} \|f\|_{L^1(X, \calH)}.
 \end{align*} 
 Thus, for $K\in L^\infty_s(Y\times X,\calB(\calH,\calG))$, the map $\Phi_K: L^1(X, \calH) \to L^{\infty}(Y,\calG)$ is bounded with 
 \begin{align*}
 	\|\Phi_K\|_{L^1\to L^{\infty}} \leq \esssup_{(y,x)\in Y\times X} \|K(y,x)\|_{\calB(\calH,\calG)}.
 \end{align*}
 The next Lemma shows that the map $K\mapsto \Phi_K$ is even an isometry. 
 \begin{lemma}\label{lem:kernel-op-valued}
  For any bounded operator $\Phi:L^1(X,\calH)\to L^\infty(Y,\calG)$ there exists a kernel $K_\Phi\in L^\infty_s(Y\times X,\calB(\calH,\calG))$ such that 
  \begin{align*}
  	\Phi f(y) = \int_{X} K_\Phi(y,x)f(x) \, dx
  \end{align*}
 for any $f\in L^1(X,\calH)$ and almost all $y\in Y$. Moreover, 
  \begin{align*}
 	\|\Phi\| = \esssup_{(y,x)\in Y\times X} \|K_\Phi(y,x)\|_{\calB(\calH,\calG)}
  \end{align*}
 \end{lemma}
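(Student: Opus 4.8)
The plan is to bootstrap from the classical scalar Dunford--Pettis theorem: for $\sigma$--finite measure spaces, every bounded operator $T\colon L^1(X)\to L^\infty(Y)$ has the form $Tf(y)=\int_X k(y,x)f(x)\,dx$ for a unique $k\in L^\infty(Y\times X)$ with $\|k\|_\infty=\|T\|$ (this is the dual pairing $(L^1(X)\,\hat\otimes\,L^1(Y))^*=L^\infty(Y\times X)$). First I would test $\Phi$ against vectors: for $h\in\calH$ write $(\phi\otimes h)(x)\coloneqq\phi(x)h$, and for $\psi\in\calG$ form the scalar operator $T_{h,\psi}\colon L^1(X)\to L^\infty(Y)$, $T_{h,\psi}\phi(y)\coloneqq\la\psi,\Phi(\phi\otimes h)(y)\ra_\calG$, which satisfies $\|T_{h,\psi}\|\le\|\Phi\|\,\|h\|_\calH\|\psi\|_\calG$. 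Using separability, fix countable, dense, $\bbQ$--linearly closed subsets $\{h_n\}\subset\calH$ and $\{\psi_m\}\subset\calG$, and let $k_{n,m}\in L^\infty(Y\times X)$ be the Dunford kernel of $T_{h_n,\psi_m}$, so $\|k_{n,m}\|_\infty\le\|\Phi\|\,\|h_n\|\|\psi_m\|$.

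The central --- and only delicate --- step is to reassemble the $k_{n,m}$ into one $\calB(\calH,\calG)$--valued kernel. Linearity of $\Phi$ in $h$ and of $\la\cdot,\cdot\ra_\calG$ in $\psi$, combined with uniqueness of the scalar kernel, forces $k_{n_3,m}=q_1 k_{n_1,m}+q_2 k_{n_2,m}$ a.e.\ on $Y\times X$ whenever $h_{n_3}=q_1 h_{n_1}+q_2 h_{n_2}$ with rational $q_i$, and similarly in the $\psi$--variable. There are only countably many such identities, and countably many bounds $|k_{n,m}|\le\|\Phi\|\,\|h_n\|\|\psi_m\|$, so all of them hold off a single null set $N\subset Y\times X$. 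For $(y,x)\notin N$ the map $(h_n,\psi_m)\mapsto k_{n,m}(y,x)$ is $\bbQ$--bilinear on the dense grid $\{h_n\}\times\{\psi_m\}$ and bounded there by $\|\Phi\|$, hence extends uniquely to a bounded sesquilinear form on $\calH\times\calG$ of norm $\le\|\Phi\|$; this form is $\la\psi,K_\Phi(y,x)h\ra_\calG$ for an operator $K_\Phi(y,x)\in\calB(\calH,\calG)$ with $\|K_\Phi(y,x)\|_{\calB(\calH,\calG)}\le\|\Phi\|$. I set $K_\Phi\coloneqq 0$ on $N$.

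Next I would verify measurability and the integral representation. For fixed $h\in\calH$, $\psi\in\calG$, the scalar function $(y,x)\mapsto\la\psi,K_\Phi(y,x)h\ra_\calG$ is the pointwise limit (uniform off $N$, by the $\|\Phi\|$--bound) of $k_{n_k,m_k}$ along $h_{n_k}\to h$, $\psi_{m_k}\to\psi$, hence measurable; so $(y,x)\mapsto K_\Phi(y,x)h$ is weakly, thus by Pettis' theorem strongly, measurable, i.e.\ $K_\Phi\in L^\infty_s(Y\times X,\calB(\calH,\calG))$. For $f=\phi\otimes h_n$ with $\phi\in L^1(X)$ and any $\psi_m$ one has $\la\psi_m,\Phi f(y)\ra_\calG=\int_X k_{n,m}(y,x)\phi(x)\,dx=\la\psi_m,\int_X K_\Phi(y,x)f(x)\,dx\ra_\calG$ a.e.; density of $\{\psi_m\}$ then gives $\Phi f=\Phi_{K_\Phi}f$ for such $f$, hence for all finite sums $\sum_j\phi_j\otimes h_j$, which are dense in $L^1(X,\calH)$. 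For general $f\in L^1(X,\calH)$ I would take $f_k\to f$ in $L^1$ of this form; then $\Phi f_k\to\Phi f$ and, by $\|\Phi_{K_\Phi}(f_k-f)(y)\|_\calG\le\|\Phi\|\,\|f_k-f\|_{L^1(X,\calH)}$ (the estimate recorded just before the Lemma), $\Phi_{K_\Phi}f_k\to\Phi_{K_\Phi}f$, both in $L^\infty(Y,\calG)$, so $\Phi f=\Phi_{K_\Phi}f$.

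Finally the norm identity is immediate: the construction gives $\esssup_{(y,x)}\|K_\Phi(y,x)\|_{\calB(\calH,\calG)}\le\|\Phi\|$, while the bound for $\Phi_K$ stated before the Lemma yields $\|\Phi\|=\|\Phi_{K_\Phi}\|\le\esssup_{(y,x)}\|K_\Phi(y,x)\|_{\calB(\calH,\calG)}$; uniqueness of $K_\Phi$ follows from the scalar uniqueness applied to the operators $T_{h_n,\psi_m}$. The main obstacle is the reassembly step --- promoting the countable family of scalar Dunford kernels into a single genuine operator--valued kernel. Separability of $\calH$ and $\calG$ is precisely what makes it work, reducing the a priori uncountable collection of bilinearity relations and norm bounds to a countable one so that one exceptional null set suffices; the rest is routine density and continuity.
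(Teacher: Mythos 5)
Your proposal is correct and follows essentially the same strategy as the paper's proof: both reduce the operator-valued problem to the scalar Dunford theorem (bounded operators $L^1(X)\to L^\infty(Y)$ have an $L^\infty$ kernel) by pairing $\Phi$ against countable families of vectors in $\calH$ and $\calG$, produce a countable family of scalar kernels, and then reassemble them into a single $\calB(\calH,\calG)$-valued kernel off a common null set, with the norm identity following from the easy inequality recorded just before the Lemma.

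The difference is purely organizational. The paper fixes \emph{orthonormal bases} of $\calH,\calG$, thereby identifying both with $\ell^2(\N)$; the kernels $K^{m,n}_\Phi$ are then literally the matrix entries of $K_\Phi(y,x)$, so linearity is built in, and the remaining work is to show the pointwise operator $K_\Phi(y,x)$ on $\ell^2_{\rm fin}(\N)$ is bounded uniformly off a null set, which the paper does by a duality pairing against pairs $(g,f)\in L^1\times L^1$. You instead use \emph{$\bbQ$-linearly closed dense countable subsets} and obtain pointwise boundedness directly from the countably many bounds $|k_{n,m}(y,x)|\le\|\Phi\|\,\|h_n\|\|\psi_m\|$, at the cost of having to verify countably many bilinearity relations a.e.\ and then extending the resulting $\bbQ$-sesquilinear form by density; the Riesz-type identification of a bounded sesquilinear form with a bounded operator does the rest. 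These are dual ways of distributing the same work — free linearity vs.\ free boundedness — and both rely on the same finiteness/countability argument to collect the exceptional null sets; neither approach has a clear advantage, and your argument is complete (including the measurability step via Pettis and the passage from elementary tensors $\phi\otimes h$ to general $f\in L^1(X,\calH)$ by density and the $L^1\to L^\infty$ bound).
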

\begin{proof} If $K\in L^\infty_s(Y\times X,\calB(\calH,\calG))$, 
the discussion above shows that the map $\Phi_K$ defined in \eqref{eq:PhiK} is bounded from $L^1(X,\calH)$ to $L^{\infty}(Y,\calG)$ and 
 \begin{align}\label{eq:dunford-bound-1}
 	\|\Phi_K\|_{L^1\to L^{\infty}} \leq \esssup_{(y,x)\in Y\times X} \|K(y,x)\|_{\calB(\calH,\calG)} \eqqcolon \|K\|_{L^\infty}.
 \end{align}
 
 Conversely, assume that $\Phi$ is a bounded map from $L^1(X,\calH)$ 
 into $L^\infty(Y,\calG)$ and choose orthonormal bases $(\alpha_n)_{n\in\N}$ in $\calH$ and  $(\beta_m)_{m\in\N}$ in $\calG$. Then any function 
 $f\in L^1(X,\calH)$ can be identified with a sequence of 
 functions $f=(f_1, f_2,\ldots)$, where $f_l\in L^1(X)$ and 
 $\|f\|_{L^1(X,\calH)}= \|(\sum_{l\in\N}|f_l|^2)^{1/2}\|_{L^1(X)}$, and similarly for $L^1(Y,\calG)$. 
 So without loss of generality, we can assume that  
 $\calH=\calG=l^2(\N)$, i.e., the bounded operators from 
 $\calH\to \calG$ correspond to infinite matrices which map 
 $l^2(\N)$ boundedly into itself.  
 Finally, let $(e_j)_{j\in\N}$ be the canonical basis 
 of $l^2(\N)$.

For $n\in\N$ and $g_l\in L^1(Y)$, $f_l\in L^1(X)$, $l=1,\ldots, n $, the finite linear combinations\footnote{For the equality $L^1(Y) \otimes L^1(X) = L^1(Y\times X)$ one should be a wee bit more precise about the involved topologies in the tensor products: For a Banach space $E$, the algebraic tensor product 
   $L^1(Y) \otimes_{\mathrm{alg}} E$ is the vector space of finite linear combinations $\sum_{l=1}^n g_l\otimes f_l$, where $g_l\in L^1(Y)$ and $f_l\in E$. 
   One equips this vector space with the norm  
   $\|z\|_\pi\coloneqq\inf\{ \sum_l \|g_l\|_{L^1(Y)} \|f_l\|_E :\, z= \sum_{l} g_l\otimes f_l \}  $. 
   Then for the closure 
   $L^1(Y)\widehat{\otimes} E\coloneqq \overline{L^1(Y) \otimes_{\mathrm{alg}}E}^{\|\cdot\|_\pi}$, called the projective tensor product, one has  
   $L^1(Y)\widehat{\otimes} E = L^1(Y,E)$, see \cite[Proposition III.B.28]{Wo} or \cite[Example VIII.10]{DiUhl}. In particular, one has 
   $L^1(Y)\widehat{\otimes}L^1(X)= L^1(Y, L^1(X))= L^1(Y\times X)$. We will not dwell on 
   this fine point any further ;-) . }  of the form 
 \begin{align*}
 	\sum_{l=1}^n g_l\otimes f_l \in L^1(Y) \otimes L^1(X) = L^1(Y\times X)
 \end{align*}
 are dense in $L^1(Y\times X)$. 
 Now assume that $\Phi:L^1(X,l^2(\N))\to L^\infty(Y,l^2(\N))$ is 
 bounded. For $m,n\in\N$ let  
 \begin{align*}
   S_{m,n}(\sum_{l=1}^n g_l\otimes f_l)
   	\coloneqq   
   		\sum_{l=1}^n \la g_l\otimes e_m, \Phi f_l\otimes e_n\ra
 \end{align*}
 which defines a linear functional on the finite linear combinations 
 and is bounded by $\|S_{m,n}\|\le \|\Phi\|$. Thus it has a continuous 
 extension to all of $L^1(Y\times X)$ and since the dual 
 $L^1(Y\times X)^* = L^\infty(Y\times X)$, there exist measurable 
 functions $K^{m,n}_{\Phi} \in L^\infty(Y\times X)$, $m,n\in\N$, such that 
 \begin{align*}
 	\la g\otimes e_m, \Phi f\otimes e_n\ra = \iint_{Y\times X} K^{m,n}_\Phi(y,x) \ol{g(y)} f(x)\, dx dy \, .
 \end{align*}
 Taking unions of countably many zero sets, we can assume that 
 the kernels $K^{m,n}_\Phi(\cdot,\cdot)$ are well--defined for any $m,n\in\N$, up to 
 a common zero set in $Y\times X$. 
 
 Let $l^2_{\text{fin}}(\N)$ be the set of 
 sequences $\alpha=(\alpha_1,\alpha_2,\ldots)$ with only finitely 
 many $\alpha_j$ non--zero, which is dense in $l^2(\N)$. For 
  $\alpha\in l^2_{\text{fin}}(\N)$ and $(y,x)\in Y\times X$ we define the sequence $K_\phi(y,x)\alpha\in \C^\N$ as 
 \begin{align*}
 	(K_\Phi(y,x) \alpha)_m 
 		\coloneqq \sum_{n\in\N}  K_\Phi^{m,n}\alpha_n, \quad \text{for } m\in\N 
 \end{align*}
 The next step is to show that for almost all $(y,x)\in Y\times X$ one has 
 $K_\phi(y,x)\in \calB(l^2(\N),l^2(\N))$. Since 
 $l^2_{\text{fin}}(\N)$ is dense in $l^2(\N)$ one has 
 \begin{align*}
 	\|K_\Phi(y,x)\|_\calB & = \|K_\Phi(y,x)\|_{\calB(l^2(\N), l^2(\N))} 
 		= \\
 		&= \sup\{ \re\la \beta, K_\Phi(y,x)\alpha \ra|\, \alpha,\beta\in l^2_{\text{fin}}(\N), \|\alpha\|_{l^2}=\|\beta\|_{l^2}=1 \} \\
 		&= \sup\{ \sum_{m,n} \re \left(\ol{ \beta}_m, K_\Phi^{m,n}(y,x)\alpha_n\right) |\, \alpha,\beta\in l^2_{\text{fin}}(\N), \|\alpha\|_{l^2}=\|\beta\|_{l^2}=1 \}\, .
 \end{align*}
 Moreover, let $L^1_{\text{fin}}(X,l^2(\N))$ be the set 
 of functions $f=(f_1,f_2,\ldots)\in L^1(X,l^2(\N))$ with only 
 finitely many nonzero $f_j$, which is dense in $L^1(X,l^2(\N))$, 
 and similarly for $L^1_{\text{fin}}(Y,l^2(\N))$.  
 For any $g\in L^1_{\text{fin}}(Y,l^2(\N))$, 
 $f\in L^1_{\text{fin}}(X,l^2(\N))$, we clearly have from the above  
 \begin{align}\label{eq:dunford-equality}
 \begin{split}
 	\la g,\Phi f \ra 
 		&=  \iint_{Y\times X} 
 				\sum_{m,n} \ol{g_m(y)}\, K^{m,n}_\Phi(y,x)  f_n(x)\, dx dy\, \\
 		&= \iint_{Y\times X} 
 				\la g(y), K_\Phi(y,x)  f(x)\ra_{l^2(\N)}\, dx dy.
 \end{split}
 \end{align}
 and with $A= \{(g,f)\in L^1_{\text{fin}}(Y, l^2(\N))\times L^1_{\text{fin}}(X,l^2(\N))\big|\, \|g\|_{L^1(Y,l^2(\N))}= \|f\|_{L^1(X,l^2(\N))}=1 \}$, which is dense in $L^1(Y, l^2(\N))\times L^1(X,l^2(\N))$, one sees 
 \begin{align*}
 	\esssup_{(y,x)\in Y\times X}  \|&K_\Phi(y,x)\|_{\calB} 
 		 = \sup_{(g,f)\in A }  \iint_{Y\times X} \re\big\la g(y), K_\Phi(y,x)  f(x)\big\ra_{l^2(\N)}  \, dy dx \\
 		&= \sup_{(g,f)\in A}  \re\la g,\Phi f \ra  
 			\le \|\Phi\|_{L^1\to L^\infty} \|g\|_{L^1(Y,l^2{\N})} \|f\|_{L^1(X,l^2{\N})}
 \end{align*}
 Thus the kernel $K_\Phi(y,x)$ maps $l^2(\N)$ boundedly into itself 
 uniformly in $(y,x)\in Y\times X$ and from \eqref{eq:dunford-equality} one also 
 gets $\Phi= \Phi_{K_\Phi}$. In addition, the 
 last bound together with \eqref{eq:dunford-bound-1} shows 
 \begin{align*}
 	\esssup_{(y,x)\in Y\times X}  \|K_\Phi(y,x)\|_{\calB} = \|\Phi\|_{L^1\to L^\infty}
 \end{align*}
 so the map 
 $L^\infty_s(Y\times X, \calB(l^2(N),l^2(\N)))\ni K\mapsto \phi_K\in 
 	\calB(L^1X,l^2(\N), L^\infty(Y,l^2(\N)))$ is an isometry. 
\end{proof}

\section{Numerical results}\label{sec:numerics}
In this section we derive upper bounds on the the constants in 
Theorem \ref{thm:poly} and \ref{thm:poly-operator-valued}, 
in particular, the constant $C_{0,d}$ in the bound for the number of bound states of a non--relativistiv one--particle Schr\"odinger operator  from Corollary \ref{thm:CLR}, 
given in  Table \ref{tb:better}.  

Recall that the best constant in our approach is related to the  minimization problem for  
\begin{align*}
	M_{\gamma} = \inf_{m_1, m_2\in L^2(\R_+, \frac{ds}{s})}\left\{ \left( \|m_1\|_{L^2(\R_+, \frac{ds}{s})} \|m_2\|_{L^2(\R_+, \frac{ds}{s})} \right)^{\gamma-2} \int_{0}^{\infty} \left(1-\frac{(m_1 * m_2)(s)}{s}\right)^2 \, s^{2-\gamma}\,\frac{ds}{s}\right\}.
\end{align*}
The choice of $m_1,m_2$ is quite arbitrary. It is important, however, that one has $m_1*m_2(s)\sim s$ for small $s$, in order to make the integral $\int_{0}^{\infty} \left(1-\frac{(m_1 * m_2)(s)}{s}\right)^2 \, s^{2-\gamma}\,\frac{ds}{s}$ finite. 

We reformulate the above problem by making the ansatz
\begin{align*}
\begin{split}
	m_1(s) = s \int_s^{\infty} \xi(r)\,\frac{dr}{r} \quad ,\qquad
	m_2(s) = s \psi(s),
\end{split}
\end{align*}
where $\xi,\psi: \R_+ \to \R$ are such that $\int_0^{\infty} \xi(r)\,\frac{dr}{r} = \int_{0}^{\infty} \psi(r) \,\frac{dr}{r} =1$. 

Then the convolution of $m_1$ and $m_2$ is given by 
\begin{align*}
	m_1 * m_2 (t) = \int_0^{\infty} m_1(t/s) m_2(s)\,\frac{ds}{s} = t \int_0^{\infty}\int_0^\infty  \xi(r) \psi(s) 1_{\{r>t/s\}} \,\frac{dr}{r} \,\frac{ds}{s} 
\end{align*} 
and a short calculation, taking into account the above normalization of 
$\xi$ and $\psi$,  shows  
\begin{equation}
	\begin{split}	
		&\int_{0}^{\infty} \left(1-\frac{(m_1 * m_2)(t)}{t}\right)^2 \, t^{2-\gamma}\,\frac{dt}{t} = \int_0^{\infty} \left( \iint_{0}^{\infty} 1_{\{r\leq t/s\}} \xi(r) \psi(s)\,\frac{dr}{r} \,\frac{ds}{s} \right)^2  t^{2-\gamma} \,\frac{dt}{t} \\
		&\quad= \underbrace{\frac{1}{\gamma-2} \iiiint_{0}^{\infty} \xi(r_1) \xi(r_2) \psi(s_1) \psi(s_2) \, \max\{ r_1 s_1, r_2 s_2 \}^{2-\gamma} \,\frac{dr_1}{r_1}\,\frac{dr_2}{r_2} \,\frac{ds_1}{s_1} \,\frac{ds_2}{s_2} }_{\eqqcolon I_{\gamma}[\xi, \psi]}.
	\end{split}
\end{equation}
The $L^2$-norms of $m_1, m_2$ can be expressed in terms of $\xi$ and $\psi$ by 
\begin{align*}
	\int_{0}^{\infty} m_1(s)^2 \,\frac{ds}{s} = \int_{0}^{\infty} \left( s \int_{0}^{\infty} \xi(r)\,\frac{dr}{r}  \right)^2 \,\frac{ds}{s} 
	&= \frac{1}{2} \iint_0^{\infty} \xi(r_1) \xi(r_2) \, \min\{r_1, r_2\}^2 \,\frac{dr_1}{r_1}\,\frac{dr_2}{r_2}\\
\intertext{and} 
	\int_{0}^{\infty} m_2(s)^2 \,\frac{ds}{s} 
	&= \int_{0}^{\infty} s^2 \psi(s)^2 \,\frac{ds}{s}. 
\end{align*}
Thus, an upper bound on $M_{\gamma}$ can be obtained by minimizing the functional 
\begin{align}\label{eq:new-min-1}
	\left(\int_{0}^{\infty} s^2 \psi(s)^2 \,\frac{ds}{s}\right)^{\frac{\gamma-2}{2}} \left(\frac{1}{2} \iint_0^{\infty} \xi(r_1) \xi(r_2) \, \min\{r_1, r_2\}^2 \,\frac{dr_1}{r_1}\,\frac{dr_2}{r_2} \right)^{\frac{\gamma-2}{2}} I_{\gamma}[\xi,\psi]
\end{align} 
over all functions $\psi, \xi \in L^1(\R_+,\frac{ds}{s})$ satisfying the constraint 
\begin{align} \label{eq:new-min-2}
\int_0^{\infty} \xi(r)\,\frac{dr}{r} = \int_{0}^{\infty} \psi(r) \,\frac{dr}{r} =1.
	\end{align}
Finding the minimizer, even finding that a minimizer exists for the new minimization problem given by \eqref{eq:new-min-1} and \eqref{eq:new-min-2}, is a very challenging problem, as challenging as for the original minimization problem. 
However, to get a reasonable upper bound on the minimal value, it suffices to take suitable trial functions. 
To get the constants given in Table \ref{tb:better}, in our calculations, which where done with \texttt{Mathematica}, we used the following family of trial functions
\begin{equation}
  \begin{split}\label{eq:trial-functions}
	\xi(s) &= \frac{\alpha^{p}}{\Gamma(p)} s^{-\alpha} (\log s)^{p-1} 1_{\{ s>1\}}, \\
	\psi(s) &= \frac{\beta^{q}}{\Gamma(q)} s^{-\beta} (\log s)^{q-1} 1_{\{ s>1\}},	
  \end{split}
\end{equation}
with parameters $\alpha, p, \beta, q >0$, i.e., Gamma distributions on $\R_+$. 

The normalization condition is easily verified. 
For integer $p,q \geq 1$, the calculation of $I[\xi, \psi]$ can be reduced to calculating the integral
\begin{align*}
	J(\alpha_1, \alpha_2, \beta_1, \beta_2) = \iiiint_1^{\infty} r_1^{-\alpha_1} r_2^{-\alpha_2} s_1^{-\beta_1} s_2^{-\beta_2} \, \max\{ r_1 s_1, r_2 s_2 \}^{2-\gamma} \,\frac{dr_1}{r_1}\,\frac{dr_2}{r_2} \,\frac{ds_1}{s_1} \,\frac{ds_2}{s_2},
\end{align*}
as from $J$ we can get $I[\xi, \psi]$ by taking derivatives,
\begin{align*}
	I[\xi,\psi] = \left.\frac{1}{\gamma-2} \frac{\alpha^{2p}\beta^{2q}}{\Gamma(p)^2 \Gamma(q)^2} \left(\partial_{\alpha_1}\partial_{\alpha_2}\right)^{p-1} \left(\partial_{\beta_1}\partial_{\beta_2}\right)^{q-1}  J(\alpha_1, \alpha_2,\beta_1, \beta_2) \right|_{\substack{\alpha_1=\alpha_2=\alpha \\ \beta_1=\beta_2=\beta}}.
\end{align*}
Similarly, the ``$L^2$-norm integrals'' are given by 
\begin{align*}
	\int_0^{\infty} s^2 \psi(s)^2 \,\frac{ds}{s} = \frac{\beta^{2q}}{2^{2q-1} (\beta-1)^{2q-1}} \frac{\Gamma(2q-1)}{\Gamma(q)^2}
\end{align*}
for $q\in\N$ and $\beta >1$, as well as
\begin{align*}
	\left.\frac{1}{2} \iint_0^{\infty} \xi(r_1) \xi(r_2) \, \min\{r_1, r_2\}^2 \,\frac{dr_1}{r_1}\,\frac{dr_2}{r_2} = \frac{1}{2} \frac{\alpha^{2p}}{\Gamma(p)^2} \left(\partial_{\alpha_1}\partial_{\alpha_2} \right)^{p-1} K(\alpha_1, \alpha_2)\right|_{\alpha_1 = \alpha_2 = \alpha},
\end{align*}
where 
\begin{align*}
	K(\alpha_1, \alpha_2) = \iint_1^{\infty} r_1^{-\alpha_1} r_2^{-\alpha_2} \, \min\{ r_1, r_2\}^2 \,\frac{dr_1}{r_1}\,\frac{dr_2}{r_2} = \frac{\alpha_1 + \alpha_2}{\alpha_1 \alpha_2 (\alpha_1 + \alpha_2 - 2)}
\end{align*}
for $\alpha>1$, $p\in\N$. 

In our numerical calculations with \texttt{Mathematica}, 
we made the choice $p=2, q=3 $ for dimensions $d=3,4$ and 
optimized in the parameters $\alpha, \beta >1$, 
while for dimensions $d \geq 5$ the values were obtained with 
$, p=3, q=2$ and minimization in $\alpha, \beta >1$. 
Move specifically, we got the values in Table \ref{tb:better} by the following choice of parameters.  

  \begin{table}[ht!]
  \begin{tabular}{| l | c | c | c| c| c|}
    \hline
    \multirow{2}{*}{$d$} & $C_{0,d}$  & \multicolumn{4}{c|}{Value of parameters in \eqref{eq:trial-functions}} \\
        &   & $p$ & $\alpha$ &  $q$  & $\beta$         \\ \hline
    3   & 7.55151 & 2  & 2.93254 & 3 & 2.49795 \\ 
    4   & 6.32791 & 2 & 3.69214 & 3 & 2.78716 \\  
    5   & {5.95405} & 3 & 5.46494 & 2 & 2.39433 \\   
    6   & {5.77058} & 3 & 6.41334 & 2 & 2.51583 \\ 
    7   & {5.67647} & 3 & 7.35963 & 2 & 2.61721 \\ 
    8   & {5.63198} & 3 & 8.30512 & 2 & 2.70368 \\ 
    9   & {5.62080} & 3 & 9.25042 & 2 & 2.77865 \\ \hline
  \end{tabular}
  \vskip1ex
  \caption{}
  \label{tb:better-numeric}
  \end{table}

%


\bigskip
\noindent
\section*{Acknowledgements}  We gratefully acknowledge financial support by the Deutsche Forschungsgemeinschaft (DFG) through CRC 1173. Dirk Hundertmark also thanks the Alfried Krupp von Bohlen und Halbach Foundation for financial support. 

We would also like to thank Mathematisches Forschungsinstitut Oberwolfach (MFO) and the Centre International de Rencontres Math\'ematiques (CIRM Luminy) for their research in pairs programmes, where part of this work was conceived.

\bigskip

\providecommand{\mr}[1]{\href{http://www.ams.org/mathscinet-getitem?mr=#1}{MR~#1}}
\providecommand{\zbl}[1]{\href{https://zbmath.org/?q=an:#1}{Zbl~#1}}
\providecommand{\arxiv}[1]{\href{https://arxiv.org/abs/#1}{arXiv:#1}}
\providecommand{\doi}[2]{\href{https://doi.org/#1}{#2}}

\normalsize
\bigskip
\end{document}